\newtheorem{theorem}{Theorem}[section]
\newtheorem{example}{Example}[section]
\newtheorem{counterexample}{Counterexample}[section]
\newtheorem{corollary}{Corollary}[section]
\newtheorem{definition}{Definition}[section]
\newtheorem{lemma}{Lemma}[section]
\newtheorem{remark}{Remark}[section]
\begin{document}
	\title{Orderings of extremes among dependent extended Weibull random variables}
\author{{\large { {\bf Ramkrishna Jyoti Samanta}$^{a}$\thanks {Email address: akashnilsamanta@gmail.com,}, {\bf Sangita 
 Das}$^{a}$\thanks {Email address: sangitadas118@gmail.com,}~and  {\bf \bf N. Balakrishnan}$^{b}$\thanks {Email address : bala@mcmaster.ca}}} \\
{\small \it $^{a}$Theoretical Statistics and Mathematics Unit, Indian Statistical Institute, Bangalore-560059, India}\\
{\small \it $~^b$Department of Mathematics and Statistics, McMaster University, Hamilton, Ontario L8S 4K1,
				Canada}}
	\date{}
	\maketitle
	\begin{center}
		\noindent{\bf Abstract}
	\end{center}
	 In this work, we consider two sets of dependent variables $\{X_{1},\ldots,X_{n}\}$ and $\{Y_{1},\ldots,Y_{n}\}$, where $X_{i}\sim EW(\alpha_{i},\lambda_{i},k_{i})$ and $Y_{i}\sim EW(\beta_{i},\mu_{i},l_{i})$, for $i=1,\ldots, n$, which are coupled by Archimedean copulas having different generators. Also, let $N_{1}$ and $N_{2}$ be two non-negative integer-valued random variables, independent of $X_{i}'$s and $Y_{i}'$s, respectively. We then establish different inequalities between two extremes, namely, $X_{1:n}$ and $Y_{1:n}$ and $X_{n:n}$ and $Y_{n:n}$, in terms of the usual stochastic, star, Lorenz,  hazard rate, reversed hazard rate and dispersive orders. We also establish some ordering results between $X_{1:{N_{1}}}$ and $Y_{1:{N_{2}}}$ and $X_{{N_{1}}:{N_{1}}}$ and $Y_{{N_{2}}:{N_{2}}}$ in terms of the usual stochastic order. Several examples and counterexamples are presented for illustrating all the results established here. Some of the results here extend the existing results of \cite{barmalzan2020EE}. \\
	\\
	\noindent{\bf Keywords:} Archimedean copula, usual stochastic order, dispersive order, star order, Lorenz order, hazard rate order, reversed hazard rate order, extended Weibull distribution.
	\\\\
	{\bf Mathematics Subject Classification:} 60E15; 62G30; 60K10; 90B25.	

\section{Introduction}
 A $r$-out-of-$n$ system will function if at least $r$ of the $n$ components are functioning. This includes parallel, fail-safe and series systems, corresponding to $r=1$, $r=n-1$, and $r=n$, respectively. We denote the lifetimes of the components by $X_1, \cdots, X_n$, and the corresponding order statistics by $X_{1:n}\le \cdots\le X_{n:n}$. Then, the lifetime of the $r$-out-of-$n$ system is given by $X_{n-r+1:n}$ and so, the theory of order statistics has been used extensively to study the properties of $(n-r+1)$-out-of-$n$ systems. For detailed information on order statistics and their applications, interested readers may refer to \cite{arnold1992first}, \cite{balakrishnan1998} and \cite{Bala_CR_98b}.

 The Weibull distribution has been used in a wide variety of areas, ranging from engineering to finance. 
  Numerous works have been conducted to further explore and analyze properties and different uses of the Weibull distribution. These works further highlight the broad applicability of the Weibull distribution, and its potential for use in many different fields; see, for example, \cite{Mudholkar1996weibull} and \cite{lim2004weibull}.

A flexible family of statistical models is frequently required for data analysis to achieve flexibility while modeling real-life data. Several techniques have been devised to enhance the malleability of a given statistical distribution. One approach is to leverage already well-studied classic distributions, such as gamma, Weibull, and log-normal. Alternatively, one can increase the flexiblity of a distribution by including an additional shape parameter; for instance, the Weibull distribution is generated by taking powers of exponentially distributed random variables. Another popular strategy for achieving this objective, as proposed by 
\cite{marshall1997new}, is to add an extra parameter to any distribution function, resulting in a new family of distributions. To be specific, let $G(x)$ and $\bar{G}(x) = 1-G(x)$ be the distribution and survival functions of a baseline
distribution, respectively. We assume that the distributions have non-negative support. 
Then, it is easy to verify that 
\begin{equation}\label{1}
F(x;\alpha)=\frac{G(x)}{\displaystyle
1-\bar{\alpha}\,\bar{G}(x)},\qquad x,~ \alpha\in(0,\infty),\,
\bar{\alpha}=1-\alpha, 
\end{equation}
and
\begin{equation}\label{2}
F(x;\alpha)=\frac{\alpha\,G(x)}{\displaystyle
1-\bar{\alpha}\,G(x)},\qquad x,~ \alpha\in(0,\infty),\,
\bar{\alpha}=1-\alpha, 
\end{equation}
are both valid cumulative distribution functions. Here, the newly added parameter $\alpha$ is referred to as the tilt parameter. When $G(x)$ has probability density and hazard rate functions as $g(x)$ and $r_G(x)$, respectively, then  the hazard rate function of $F(x;\alpha)$ in \eqref{1} is seen to be
\begin{equation}
     r_{F}(x;\alpha)=\frac{\displaystyle 1}{\displaystyle 1-\bar{\alpha}\,\bar{G}(x)}\,r_{G}(x),\qquad x,~ \alpha\in(0,\infty),\, \bar{\alpha}=1-\alpha.
\end{equation}
Thus, if $r_{G}(x)$ is decreasing (increasing) in $x$, then for $0<\alpha\le 1\,(\alpha\ge 1)$, $r_{F}(x;\alpha)$
is also decreasing (increasing) in $x$. This method has been used by different authors to introduce new extended family of distributions; see, for example,
\cite{mudholkar1993exponentiated}.

 Comparison of two order statistics stochastically has been studied  rather extensively, and especially the comparison of various characteristics of lifetimes of different systems having Weibull components, based on different stochastic orderings. For example, one may see \cite{khaledi2006}, \cite{fang2013weibull}, \cite{fang2014}, \cite{torrado-kochorWeibul2015}, \cite{torradoWeibul2015},  \cite{Zhao2016weibull}, \cite{fang2016ltw}, and the references therein, for stochastic comparisons of series and parallel systems with heterogeneous components with various lifetime distributions.
The majority of existing research on the comparison of series and parallel systems has only considered the case of components that are all independent. However, the operating environment of such technical systems is often subject to a range of factors, such as operating conditions, environmental conditions and the stress factors on the components. For this reason, it would be prudent to take into account the dependence of the lifetimes of components. There are various methods to model this dependence, with the theory of copulas being a popular tool; for example, \cite{nelsen2006introduction} provides a comprehensive account of copulas. Archimedean copulas are a type of multivariate probability distributions used to model the dependence between random variables. They are frequently used in financial applications, such as insurance, risk modeling and portfolio optimization. Many researchers have given consideration to the Archimedean copula due to its flexibility, as it includes the renowned Clayton copula, Ali-Mikhail-Haq copula, and Gumbel-Hougaard copula. Moreover, it also incorporates the independence copula as a special case. As such, results of comparison established under an Archimedean copula for the joint distribution of components' lifespans in a system are general, and would naturally include the corresponding results for the case of independent components.


In this article, we consider the following family of distributions known as extended Weibull family of distributions, with $G(x)=1-e^{-(x\lambda)^{k}},~x,~\lambda,~k>0,$ as the baseline distribution in \eqref{1}. The distribution function of the extended Weibull family is then given by
\begin{equation}\label{4}
   F_{X}(x)=\frac{1-e^{-(x\lambda)^{k}}}{1-\bar{\alpha}e^{-(x\lambda)^k}},~x,~k,~\alpha>0.
\end{equation}
We denote this variable by $X\sim EW(\alpha,\lambda,k),$ where $\alpha$, $\lambda$ and $k$ are respectively known as tilt, scale and shape parameters. In \eqref{4}, if we take $\alpha=1$ and $k=1$, then the extended Weibull family of distributions reduces to the Weibull family of distributions and the extended exponential family of distributions (see, \cite{barmalzan2020EE}), respectively. Similarly, if we take both $\alpha=1$ and $k=1$, the extended Weibull family of distributions reduces to the exponential family of distributions. Now, let us consider two sets of dependent variables $\{X_{1},
\ldots, X_{n}\}$ and $\{Y_{1},
\ldots, Y_{n}\},$  where for $i=1,\ldots,n,$ $X_{i}\sim EW(\alpha_{i},\lambda_{i},k_{i})$ and $Y_{i}\sim EW(\beta_{i},\mu_{i},l_{i})$ are combined with Archimedean (survival) copula having different generators. We then establish here different ordering results between two series and parallel systems, where the systems' components follow extended Weibull family of distributions. The obtained results are based on the usual stochastic, star, Lorenz and dispersive orders. Moreover, considering $\{X_{1},
\ldots, X_{N_{1}}\}$ and $\{Y_{1},
\ldots, Y_{N_{2}}\},$  where $X_{i}\sim EW(\alpha_{i},\lambda_{i},k_{i})$ and $Y_{i}\sim EW(\beta_{i},\mu_{i},l_{i})$ and $N_{1}$ and $N_{2}$ are two random integer-valued random variables independently of $X_{i}'$s and $Y_{i}'$s, respectively, we then compare $X_{1:{N_{1}}}$ and $Y_{1:{N_{2}}}$ and $X_{{N_{1}}:{N_{1}}}$ and $Y_{{N_{2}}:{N_{2}}}$ stochastically.

The rest of this paper is organized as follows. In Section \ref{s1}, we recall some basic stochastic orders and
some important lemmas. The main results are presented in Section \ref{s2}. This part is divided into two subsections. The ordering results between two extreme order statistics are established in Subsection \ref{s21} when the number of variables in the two sets of observations are the same and that the dependent extended Weibull family of distributions have Archimedean (survival) copulas, while in Subsection \ref{s22}, we focus on the case when the two sets have random numbers of variables satisfying the usual stochastic order. In Subsection \ref{s21}, the ordering results are based on the usual stochastic, star, Lorenz,  hazard rate, reversed hazard rate and dispersive orders. Finally, Section \ref{con} presents a brief summary of the work. 

Here, we focus on random variables which are defined on $(0,\infty)$ respresenting lifetimes.
The terms ‘increasing’ and ‘decreasing’ are used in the nonstrict sense. Also, `$\overset{sign}{=}$' is used to
denote that both sides of an equality have the same sign.

\section{Preliminaries}\label{s1}
In this section, we review some important definitions and well-known concepts of stochastic order and majorization which are most pertinant to ensuing discussions. Let  $\boldsymbol{c} =
\left(c_{1},\ldots,c_{n}\right)$ and $\boldsymbol{d} =
\left(d_{1},\ldots,d_{n}\right)$ to be two $n$ dimensional vectors such that $\boldsymbol{c}~,\boldsymbol{d}\in\mathbb{A}.$ Here, $\mathbb{A} \subset \mathbb{R}^{n}$ and $\mathbb{R}^{n}$ is
an $n$-dimensional Euclidean space. Also, consider the order of the elements of the vectors $\boldsymbol{c}$ and $\boldsymbol{d}$ to be $c_{1:n}\leq \ldots \leq c_{n:n}$ and
$d_{1:n}\leq\ldots \leq d_{n:n},$  respectively.
\begin{definition}\label{definition2.2}
	A vector $\boldsymbol{c}$ is said to be
	\begin{itemize}
		\item  majorized by another vector $\boldsymbol{d}$ (denoted
		by $\boldsymbol{c}\preceq^{m} \boldsymbol{d}$) if, for each $l=1,\ldots,n-1$, we have
		$\sum_{i=1}^{l}c_{i:n}\geq \sum_{i=1}^{l}d_{i:n}$ and
		$\sum_{i=1}^{n}c_{i:n}=\sum_{i=1}^{n}d_{i:n};$
		
		\item weakly submajorized by another vector $\boldsymbol{d}$ (denoted
		by $\boldsymbol{c}\preceq_{w} \boldsymbol{d}$) if, for each $l=1,\ldots,n$, we have
		$\sum_{i=l}^{n}c_{i:n}\leq \sum_{i=l}^{n}d_{i:n};$
		
		\item weakly supermajorized by  another vector $\boldsymbol{d},$ denoted
		by $\boldsymbol{c}\preceq^{w} \boldsymbol{d}$, if for each $l=1,\ldots,n$, we have
		$\sum_{i=1}^{l}c_{i:n}\geq \sum_{i=1}^{l}d_{i:n}.$
		
		
		%
	\end{itemize}
\end{definition}
Note that $\boldsymbol{c}\preceq^{m} \boldsymbol{d}$ implies both  $\boldsymbol{c}\preceq_{w} \boldsymbol{d}$ and  $\boldsymbol{c}\preceq^{w} \boldsymbol{d}.$ But, the converse is not always true. For an introduction to majorization order and their applications, are may refer to \cite{Marshall2011}.

Throughout this paper, we are concerned only with  non-negative random variables. Now, we discuss some stochastic orderings. For this purpose, let us suppose $Y$ and $Z$ are two nonn-egative random variables with
probability density functions (PDFs) $f_{Y}$ and $f_{Z}$, cumulative distribution functions (CDFs) $F_{Y}$ and $F_{Z}$, survival functions $\bar
F_{Y}=1-F_{Y}$ and $\bar F_{Z}=1-F_{Z},$ $r_{Y}=f_{Y}/\bar
F_{Y}$,  $ \tilde r_{Y}=f_{Y}/
F_{Y}$ and  $r_{Z}=f_{Z}/
\bar F_{Z}$,  $\tilde r_{Z}=f_{Z}/
F_{Z}$ being the corresponding hazard rate and reversed hazard rate functions, respectively. 


\begin{definition}
	A random variable $Y$ is said to be smaller than $Z$ in the
	\begin{itemize}
 
		\item hazard rate order (denoted by $Y\leq_{hr}Z$) if $r_{Y}(x)\geq r_{Z}(x)$, for all $x;$
		\item reversed hazard rate order (denoted by $Y\leq_{rh}Z$) if $\tilde r_{Y}(x)\leq \tilde r_{Z}(x)$, for all $x;$
		\item usual stochastic order (denoted by $Y\leq_{st}Z$) if
		$\bar F_{Y}(x)\leq\bar F_{Z}(x)$, for all $x;$
  \item dispersive order (denoted by $Y\le_{disp}Z$) if
	\begin{equation*}\label{disp}
		F^{-1}_{Y}(\beta)
		-F^{-1}_{Y}(\alpha)\le F^{-1}_{Z}(\beta)
		-F^{-1}_{Z}(\alpha)\text{ whenever }0<\alpha\leq\beta<1,
	\end{equation*}
where $F^{-1}_{Y}(\cdot)$ and $F^{-1}_{Z}(\cdot)$ are the right-continuous inverses of $F_{Y}(\cdot)$ and $F_{Z}(\cdot),$ respectively;
\item star order (denoted by $X_{1}\leq_{*}X_{2}$) if  $F^{-1}_{Z}F_{Y}(x)$
	is star-shaped in $x,$ that is, ${F^{-1}_{Z}F_{Y}(x)}/{x}$ is increasing in $x\geq 0;$
 \item Lorenz order (denoted by $Y\leq_{Lorenz}Z$) if
	$$\frac{1}{E(Y)}\int_{0}^{F^{-1}_{Y}(u)}x dF_{Y}(x)\geq\frac{1}{E(Z)}\int_{0}^{F^{-1}_{Z}(u)}x dF_{Z}(x),\text{ for all } u\in (0,1].$$
	\end{itemize}
\end{definition}
It is known that the star ordering implies the Lorenz ordering. One may refer to \cite{shaked2007stochastic} for an exhaustive discussion on stochastic orderings. Next, we introduce Schur-convex and
Schur-concave functions.
\begin{lemma}(Theorem 3.A.4 of \citet{Marshall2011}).\label{schur-critetia}
    For an open interval $I\subset R$, a continuously differentiable function $f:I^n\rightarrow R$ is said to be Schur-convex if and only if it is symmetric on $I^n$ and $(x_i-x_j)\big( \frac{\partial f(x)}{\partial x_i}-\frac{\partial f(x)}{\partial x_j}\big)\geq0$ for all $i\neq j$ and $x\in I^n$.
\end{lemma}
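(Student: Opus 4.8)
The plan is to prove this Schur--Ostrowski criterion by working directly from the defining property of a Schur-convex function, namely that $\boldsymbol{x}\preceq^{m}\boldsymbol{y}$ implies $f(\boldsymbol{x})\le f(\boldsymbol{y})$, and to exploit the characterization of majorization through elementary transfers (T-transforms): $\boldsymbol{x}\preceq^{m}\boldsymbol{y}$ holds if and only if $\boldsymbol{x}$ is obtained from $\boldsymbol{y}$ by a finite sequence of transformations, each of which replaces two coordinates $y_i\ge y_j$ by $\lambda y_i+(1-\lambda)y_j$ and $(1-\lambda)y_i+\lambda y_j$ for some $\lambda\in[0,1]$ while leaving the remaining coordinates fixed. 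This reduces both directions to a two-variable computation.

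For the necessity (``only if'') direction, I would first obtain symmetry: since a vector and any of its permutations majorize each other, Schur-convexity forces $f(\boldsymbol{x})=f(\pi\boldsymbol{x})$ for every permutation $\pi$. For the differential inequality, fix $i\ne j$ and a point $\boldsymbol{x}$ with, say, $x_i>x_j$, and consider the perturbed point $\boldsymbol{x}^{\varepsilon}$ obtained by replacing $x_i$ with $x_i-\varepsilon$ and $x_j$ with $x_j+\varepsilon$. For small $\varepsilon>0$ one has $\boldsymbol{x}^{\varepsilon}\preceq^{m}\boldsymbol{x}$, so $f(\boldsymbol{x}^{\varepsilon})\le f(\boldsymbol{x})$; dividing by $\varepsilon$ and letting $\varepsilon\downarrow 0$ gives $\frac{\partial f}{\partial x_j}(\boldsymbol{x})-\frac{\partial f}{\partial x_i}(\boldsymbol{x})\le 0$, which together with $x_i-x_j>0$ yields the asserted sign condition; the case $x_i<x_j$ is symmetric and $x_i=x_j$ is trivial.

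For the sufficiency (``if'') direction, by the T-transform characterization it suffices to show that a single transfer does not increase $f$. Writing the two affected coordinates as functions of a parameter $t$, I would set $g(t)=f(\boldsymbol{x})$ with the $i$-th coordinate equal to $w_i-t$ and the $j$-th equal to $w_j+t$ (assuming $w_i\ge w_j$), so that $g(0)=f(\boldsymbol{w})$ and $g(t^{*})=f(\boldsymbol{z})$ for the appropriate $t^{*}=(1-\lambda)(w_i-w_j)\in[0,w_i-w_j]$. The chain rule gives $g'(t)=\frac{\partial f}{\partial x_j}-\frac{\partial f}{\partial x_i}$ evaluated along the path, whose sign is controlled by the hypothesis: where the running $i$-coordinate dominates the $j$-coordinate, $g$ is nonincreasing, and where the reverse holds, $g$ is nondecreasing.

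The main obstacle is that the transfer may carry the two coordinates past each other (when $\lambda<\tfrac12$), so that the sign condition alone does not give monotonicity of $g$ on all of $[0,t^{*}]$. I would resolve this using symmetry: $g$ is symmetric about the midpoint $t_m=(w_i-w_j)/2$, since the configurations producing $g(t)$ and $g((w_i-w_j)-t)$ differ only by a transposition of the two coordinates. Hence $g$ decreases on $[0,t_m]$ and increases on $[t_m,w_i-w_j]$ back to the value $g(w_i-w_j)=g(0)$, so that $g(t)\le g(0)$ for every $t\in[0,w_i-w_j]$; in particular $f(\boldsymbol{z})=g(t^{*})\le g(0)=f(\boldsymbol{w})$. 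Chaining this inequality along the finite sequence of transfers connecting $\boldsymbol{y}$ to $\boldsymbol{x}$ then completes the proof.
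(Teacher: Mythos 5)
The paper does not prove this lemma at all: it is quoted verbatim as Theorem 3.A.4 of Marshall, Olkin and Arnold (2011) and used as a black box, so there is no in-paper argument to compare yours against. Your blind proof is a correct, self-contained rendition of the classical Schur--Ostrowski argument. The necessity direction (symmetry from mutual majorization of permutations; the sign condition from a Pigou--Dalton transfer $x_i\mapsto x_i-\varepsilon$, $x_j\mapsto x_j+\varepsilon$ and a one-sided difference quotient) is exactly right, and openness of $I$ is what you need for the perturbed point to stay in the domain. The sufficiency direction correctly reduces to a single T-transform via the Hardy--Littlewood--P\'olya/Muirhead characterization of majorization, and you identify and resolve the one genuine subtlety: when $\lambda<\tfrac12$ the two coordinates cross along the path, so the sign hypothesis only gives ``decreasing then increasing'' for $g$, and you close the gap by observing that symmetry of $f$ makes $g$ symmetric about the midpoint, whence $g(t)\le g(0)$ on the whole interval. (Equivalently, one can use symmetry up front to replace $\lambda$ by $1-\lambda$ and assume $t^{*}\le t_m$, needing only the decreasing half.) The only points worth making explicit in a written version are that the segment $(w_i-t,\,w_j+t)$ stays in $I^n$ because $I$ is an interval, and that the intermediate vectors in the chain of T-transforms likewise remain in $I^n$; both are immediate, so I see no gap.
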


	Now, we describe briefly the concept of Archimedean copulas. Let $F$ and $\bar F$ be the joint distribution function and joint survival function of a random vector $\boldsymbol{X}=(X_1,\ldots,X_n)$. Also, suppose there exist some functions $C(\boldsymbol{v}):[0,1]^n\rightarrow [0,1]$ and  
	$\hat {C}(\boldsymbol{v}):[0,1]^n\rightarrow [0,1]$ such that, for all $ x_i,~i\in \mathcal I_n, $ where $\mathcal I_n$ is the index set,
	$$ F(x_1,\ldots,x_n)=C(F_1(x_1),\ldots,F_n(x_n)),$$
	$$\bar{F}(x_1,\ldots,x_n)=\hat{C}(\bar{F_1}(x_1),\ldots,\bar{F_n}(x_n))$$ hold, then $C(\boldsymbol{v})$ and $\hat{C}(\boldsymbol{v})$ are said to be the  copula and survival copula of $\boldsymbol{X}$, respectively. Here, $F_1,\ldots,F_n$ and $\bar{F_1},\ldots,\bar{F_n}$ are the univariate marginal distribution functions and survival functions of the random variables $X_1,\ldots,X_n$, respectively.\\
	Now, suppose $\psi:[0,\infty)\rightarrow[0,1]$ is a non-increasing and continuous function with $\psi(0)=1$ and $\psi(\infty)=0.$ Moreover, suppose $\phi={\psi}^{-1}=\text{sup}\{x\in \mathcal R:\psi(x)>v\}$ is the right continuous inverse. Further, let $\psi$ satisfy the conditions (i) $(-1)^i{\psi}^{(i)}(x)\geq 0,~ i=0,1,\ldots,d-2,$ and  (ii) $(-1)^{d-2}{\psi}^{(d-2)}$ is non-increasing and convex, which imply the generator $\psi$ is $d$-monotone. A copula $C_{\psi}$ is said to be an Archimedean copula if $C_{\psi}$ can be written as $$C_{\psi}(v_1,\ldots,v_n)=\psi({\psi^{-1}(v_1)},\ldots,\psi^{-1}(v_n)),~\text{ for all } v_i\in[0,1],~i\in\mathcal{I}_n.$$ For a detailed discussion on Archimedean copulas, one may refer to \cite{nelsen2006introduction} and \cite{mcneil2009multivariate}.

Next, we present some important lemmas which are essential for the results developed in the following sections.

\begin{lemma}(Lemma 7.1 of \citet{li2015ordering}).\label{Pre-lem2.1f}
	For two $n$-dimensional Archimedean copulas $C_{\psi_1}$ and  $C_{\psi_2}$, if $\phi_2\circ\psi_1$ is super-additive, then $C_{\psi_1}(\boldsymbol{v})\leq C_{\psi_2}(\boldsymbol{v})$, for all $\boldsymbol{v}\in[0,1]^n.$  A function $f$ is said to be super-additive if $ f(x)+f(y)\leq f(x+y),$ for all $x$ and $y$ in the domain of $f.$ Here, $\phi_2$ is the right-continuous inverse of $\psi_2.$
\end{lemma}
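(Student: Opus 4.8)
The plan is to reduce the copula inequality to a pointwise inequality between sums of generator values via the super-additivity hypothesis, and then to transfer that inequality back through the monotone generator $\psi_2$. Throughout, write $\phi_j=\psi_j^{-1}$ for $j=1,2$, so that the two copulas take the Archimedean form $C_{\psi_1}(\boldsymbol{v})=\psi_1\left(\sum_{i=1}^n\phi_1(v_i)\right)$ and $C_{\psi_2}(\boldsymbol{v})=\psi_2\left(\sum_{i=1}^n\phi_2(v_i)\right)$.

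First I would change variables by setting $u_i=\phi_1(v_i)\in[0,\infty)$, so that $v_i=\psi_1(u_i)$ and $C_{\psi_1}(\boldsymbol{v})=\psi_1\left(\sum_{i=1}^n u_i\right)$. Writing $g:=\phi_2\circ\psi_1$ for the composed function appearing in the hypothesis, the second copula becomes $C_{\psi_2}(\boldsymbol{v})=\psi_2\left(\sum_{i=1}^n\phi_2(\psi_1(u_i))\right)=\psi_2\left(\sum_{i=1}^n g(u_i)\right)$. Thus the whole problem is recast in terms of the single function $g$ and the nonnegative arguments $u_1,\ldots,u_n$.

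The key step is to upgrade the two-point super-additivity $g(x)+g(y)\leq g(x+y)$ to the $n$-point statement $\sum_{i=1}^n g(u_i)\leq g\left(\sum_{i=1}^n u_i\right)$. I would do this by induction on $n$: the case $n=2$ is the hypothesis, and the inductive step applies super-additivity once with $x=\sum_{i=1}^{n-1}u_i$ and $y=u_n$. (One may note in passing that super-additivity forces $g(0)\leq 0$, and indeed $g(0)=\phi_2(\psi_1(0))=\phi_2(1)=0$, consistent with the boundary behaviour of the generators.) This yields $\sum_{i=1}^n\phi_2(v_i)=\sum_{i=1}^n g(u_i)\leq g\left(\sum_{i=1}^n u_i\right)=\phi_2\left(\psi_1\left(\sum_{i=1}^n u_i\right)\right)$.

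Finally I would apply $\psi_2$ to both ends of this inequality. Since $\psi_2$ is non-increasing, the inequality reverses, and using $\psi_2\circ\phi_2=\mathrm{id}$ on the right-hand side gives $\psi_2\left(\sum_{i=1}^n\phi_2(v_i)\right)\geq\psi_1\left(\sum_{i=1}^n u_i\right)$, that is, $C_{\psi_2}(\boldsymbol{v})\geq C_{\psi_1}(\boldsymbol{v})$, as desired. I expect the only delicate point to be the careful handling of the right-continuous inverse $\phi_2$: one must check that the identity $\psi_2(\phi_2(v))=v$ invoked at the last step holds for the relevant $v\in[0,1]$ (which it does, since $\psi_2$ is continuous with $\psi_2(0)=1$ and $\psi_2(\infty)=0$), and that the monotonicity of $\psi_2$ is applied in the correct direction so that the inequality flips exactly once.
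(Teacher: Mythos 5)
Your proof is correct. The paper itself offers no proof of this lemma --- it is quoted directly from Lemma 7.1 of \citet{li2015ordering} --- but your argument (substituting $u_i=\phi_1(v_i)$, upgrading the two-point super-additivity of $g=\phi_2\circ\psi_1$ to the $n$-point inequality by induction, and then applying the non-increasing map $\psi_2$ to reverse the inequality) is exactly the standard proof of the cited result, and the two delicate points you flag, namely $g(0)=\phi_2(1)=0$ and the validity of $\psi_2\circ\phi_2=\mathrm{id}$ on the range of $\psi_2$, are handled correctly.
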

\begin{lemma}\label{dec-2}
    Let $f:(0,\infty)\rightarrow (0,\infty)$ be a function given by $f(x)=\dfrac{k\mathrm{e}^{kx}}{1-a\mathrm{e}^{-b^k\mathrm{e}^{kx}}},$ where $0\leq a \leq 1,$ $k,~b>0$. Then, it is increasing in $x,$ for all $x \in (0,\infty)$.
    \begin{proof}
    Taking derivative with respect to $x$, we get $$f^{'}(x)=\dfrac{k^2\cdot\left(\mathrm{e}^{b^k\mathrm{e}^{kx}}-a\cdot\left(b^k\mathrm{e}^{kx}+1\right)\right)\mathrm{e}^{b^k\mathrm{e}^{kx}+kx}}{\left(\mathrm{e}^{b^k\mathrm{e}^{kx}}-a\right)^2}.$$
    Now, as $e^x\geq x+1$ for $x\geq0$, we have $\mathrm{e}^{b^k\mathrm{e}^{kx}}\geq \left(b^k\mathrm{e}^{kx}+1\right)$ which implies $\mathrm{e}^{b^k\mathrm{e}^{kx}}\geq a\cdot\left(b^k\mathrm{e}^{kx}+1\right),$ for $0\leq a\leq 1$. Hence, $f^{'}(x)\geq0$ and therefore $f(x)$ is increasing in $x\in (0,\infty)$.
    \end{proof}
\end{lemma}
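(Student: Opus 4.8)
The plan is to prove monotonicity directly by showing that $f'(x)\ge 0$ throughout $(0,\infty)$. First I would write $f$ as a ratio $N/D$ with $N(x)=ke^{kx}$ and $D(x)=1-ae^{-b^k e^{kx}}$, noting that $D>0$ since $0\le a\le 1$, and apply the quotient rule. The one step requiring care is differentiating the denominator: its dependence on $x$ enters through the nested exponential $e^{-b^k e^{kx}}$, so the chain rule must be applied twice, yielding $D'(x)=akb^k e^{kx}e^{-b^k e^{kx}}$.

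After assembling $N'D-ND'$ and factoring, I expect the numerator of $f'$ to collapse to $k^2 e^{kx}\bigl[\,1-ae^{-b^k e^{kx}}(1+b^k e^{kx})\,\bigr]$ divided by the (positive) square of the denominator; equivalently, clearing the inner exponential gives the form $k^2\bigl(e^{b^k e^{kx}}-a(b^k e^{kx}+1)\bigr)e^{b^k e^{kx}+kx}\big/\bigl(e^{b^k e^{kx}}-a\bigr)^2$. Since the prefactor and the squared denominator are both strictly positive, the sign of $f'$ is determined entirely by the remaining factor, so it remains to show that this factor is nonnegative.

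Finally I would reduce the problem to an elementary inequality. The relevant factor is nonnegative precisely when $a\,(1+b^k e^{kx})\le e^{b^k e^{kx}}$. Writing $t=b^k e^{kx}>0$ and using $0\le a\le 1$, it is enough to verify $1+t\le e^{t}$, which is the standard bound $e^{t}\ge 1+t$ valid for all $t\ge 0$. This yields $f'(x)\ge 0$ and hence the claimed monotonicity. The only genuinely delicate point is the differentiation and algebraic simplification of the nested exponential into the clean factored form above; once that is in hand, the conclusion is immediate from the one-line exponential inequality, with the constraint $a\le 1$ ensuring that the weaker requirement $1+t\le e^{t}$ suffices.
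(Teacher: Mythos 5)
Your proposal is correct and follows essentially the same route as the paper: compute $f'$ by the quotient rule, simplify to the factored form $k^2\bigl(e^{b^k e^{kx}}-a(b^k e^{kx}+1)\bigr)e^{b^k e^{kx}+kx}/\bigl(e^{b^k e^{kx}}-a\bigr)^2$, and conclude nonnegativity from $e^t\ge 1+t$ together with $0\le a\le 1$. No substantive differences to note.
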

\begin{lemma}\label{dec-1}
Let $h:(0,\infty)\rightarrow (0,\infty)$ be a function given by
$h(x)=\dfrac{kx^{k-1}}{1-a\mathrm{e}^{-\left(bx\right)^k}},$ where $a,~b\in (0,\infty)$ and $0<k\leq 1$. Then, it is decreasing in $x$ for all $x\in (0,\infty)$
\begin{proof}
Let $h(x)=kx^{k-1}g(x)$, where $$g(x)=\dfrac{1}{1-a\mathrm{e}^{-\left(bx\right)^k}}.$$
Taking derivative with respect to $x,$ we get
$$g^{'}(x)=-\dfrac{ak\cdot\left(bx\right)^k\mathrm{e}^{\left(bx\right)^k}}{x\cdot\left(\mathrm{e}^{\left(bx\right)^k}-a\right)^2}\leq0.$$
Hence, as $g(x)$ and $kx^{k-1}$ are both non-negative and decreasing functions of $x\in (0,\infty),$ when $0<k\leq 1$, we have $h(x)$ to be decreasing in $x\in (0,\infty)$.
\end{proof}
\end{lemma}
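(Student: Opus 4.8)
The plan is to exploit the multiplicative structure of $h$ by writing it as a product of two manifestly nonnegative factors and showing that each is decreasing on $(0,\infty)$. Specifically, I would set $h(x)=kx^{k-1}\,g(x)$, where $g(x)=\bigl(1-a\mathrm{e}^{-(bx)^k}\bigr)^{-1}$, and first record that the denominator stays positive on the relevant range, so that $g$ is well defined and strictly positive. Since $0<k\le 1$, the exponent $k-1$ is nonpositive, whence the power factor $kx^{k-1}$ is nonnegative and non-increasing in $x$.

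The main computational step is to verify that $g$ itself is decreasing. Differentiating, I would obtain
$$g'(x)=-\frac{a k b^{k} x^{k-1}\mathrm{e}^{-(bx)^k}}{\bigl(1-a\mathrm{e}^{-(bx)^k}\bigr)^{2}},$$
which, after multiplying numerator and denominator by $\mathrm{e}^{2(bx)^k}$ and using $b^k x^{k-1}=(bx)^k/x$, can be recast as $-\,a k (bx)^k \mathrm{e}^{(bx)^k}\big/\bigl(x(\mathrm{e}^{(bx)^k}-a)^2\bigr)$. Every factor in this expression is positive for $a,b>0$ and $x>0$, so the overall sign is negative; hence $g'(x)\le 0$ and $g$ is decreasing.

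Finally, since $g$ and $kx^{k-1}$ are both nonnegative and decreasing on $(0,\infty)$, their product $h$ is also decreasing: for $0<x_1<x_2$ one has $0\le g(x_2)\le g(x_1)$ and $0\le k x_2^{k-1}\le k x_1^{k-1}$, so $h(x_2)\le h(x_1)$. I expect the only delicate point to be the sign determination of $g'$, namely the rewriting that makes the positivity of $(\mathrm{e}^{(bx)^k}-a)^2$ and of the remaining factors transparent, together with confirming that the denominator of $g$ does not vanish so that the factorization argument is legitimate throughout $(0,\infty)$. An equivalent route would be to bound the logarithmic derivative, noting that $\frac{d}{dx}\ln h(x)=\frac{k-1}{x}-\frac{a k b^{k} x^{k-1}\mathrm{e}^{-(bx)^k}}{1-a\mathrm{e}^{-(bx)^k}}$ is a sum of two nonpositive terms, the first because $k\le 1$ and the second by inspection.
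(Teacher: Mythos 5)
Your proposal is correct and follows essentially the same route as the paper's own proof: the identical factorization $h(x)=kx^{k-1}g(x)$, the same computation showing $g'(x)=-ak(bx)^k\mathrm{e}^{(bx)^k}\big/\bigl(x(\mathrm{e}^{(bx)^k}-a)^2\bigr)\le 0$, and the same conclusion that a product of two nonnegative decreasing functions is decreasing. Your added remarks on the positivity of the denominator and the alternative logarithmic-derivative check are sensible but do not change the argument.
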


\begin{lemma}\label{dec-3}
Let $m_{1}(x): (0,\infty)\rightarrow(0,\infty)$ 
be a function given by 
    $m_{1}(x)=\mathrm{e}^{\mathrm{e}^{kx}}-a(\mathrm{e}^{kx}+1),$ where $a \in (0,1).$ Then, it is non-negative for $x\in (0,\infty).$
    \begin{proof}
        Differentiating $m_{1}(x)$ with respect to $x$, we get
        $$k\mathrm{e}^{kx}\cdot\left(\mathrm{e}^{\mathrm{e}^{kx}}-a\right)\geq 0.$$
        Because at $x=0,$ we have
        $$\mathrm{e}^{\mathrm{e}^{kx}}-a(\mathrm{e}^{kx}+1)\geq 0,$$ the required result as follows.
    \end{proof}
\end{lemma}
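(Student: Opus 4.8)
The plan is to prove non-negativity of $m_1$ by showing that it is monotone increasing on its domain and then checking its value at the left endpoint. This reduces the claim to a routine one-dimensional calculus argument.

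First I would compute the derivative. Differentiating through the chain rule (with inner factor $\mathrm{e}^{kx}$), one obtains
$$m_1'(x) = \left(\mathrm{e}^{\mathrm{e}^{kx}} - a\right) k\,\mathrm{e}^{kx}.$$
Since $k>0$ and $\mathrm{e}^{kx}>0$ for every $x$, the sign of $m_1'(x)$ is governed entirely by the bracketed factor. For $x \in (0,\infty)$ we have $\mathrm{e}^{kx} > 1$, so $\mathrm{e}^{\mathrm{e}^{kx}} > \mathrm{e} > 1 > a$ because $a \in (0,1)$; hence the bracket is strictly positive and $m_1'(x) > 0$. This shows $m_1$ is increasing on $(0,\infty)$.

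Next I would evaluate $m_1$ at the boundary $x = 0$ (equivalently, its right-hand limit as $x \to 0^+$), obtaining
$$m_1(0) = \mathrm{e}^{\mathrm{e}^{0}} - a\left(\mathrm{e}^{0} + 1\right) = \mathrm{e} - 2a.$$
Because $a \in (0,1)$ we have $2a < 2 < \mathrm{e}$, so $m_1(0) > 0$. Combining this with the monotonicity established above yields $m_1(x) > m_1(0) > 0$ for all $x \in (0,\infty)$, which gives the claimed non-negativity.

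There is no serious obstacle here: the whole argument rests on the elementary bound $\mathrm{e}^{\mathrm{e}^{kx}} \ge a$, valid since the left side exceeds $1$ while $a<1$, together with the observation that the infimum of the increasing function $m_1$ over $(0,\infty)$ is attained in the limit toward $x=0$. The only point requiring any care is to confirm that the boundary value $\mathrm{e} - 2a$ is positive, which follows at once from $a<1$ and $\mathrm{e} > 2$.
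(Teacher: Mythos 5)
Your proposal is correct and follows essentially the same route as the paper: differentiate to get $m_1'(x)=k\mathrm{e}^{kx}\bigl(\mathrm{e}^{\mathrm{e}^{kx}}-a\bigr)\ge 0$, then check the boundary value $m_1(0)=\mathrm{e}-2a>0$. Your version is slightly more explicit about why the boundary value is positive ($\mathrm{e}>2>2a$), but the argument is the same.
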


\begin{lemma}\label{dec-4}
    Let $m_{2}(x): (0,\infty)\rightarrow(0,\infty)$ 
be a function, given  by $$m_{2}(x)=\mathrm{e}^{2\mathrm{e}^{kx}}+a\left(\mathrm{e}^{2kx}-3\mathrm{e}^{kx}-2\right)\mathrm{e}^{\mathrm{e}^{kx}}+a^2(\mathrm{e}^{2kx}+3\mathrm{e}^{kx}+1),$$ where $a \in (0,1).$ Then, it is non-negative for  $x \in (0,\infty).$
    \begin{proof}
        Differentiating $m_{2}(x)$ with respect to $x$, we get $$m_{2}^{'}(x)=k\mathrm{e}^{kx}\cdot\left(3a\mathrm{e}^{3\mathrm{e}^{kx}}+2\mathrm{e}^{2\mathrm{e}^{kx}}+\left(-3a\mathrm{e}^{kx}-5a\right)\mathrm{e}^{\mathrm{e}^{kx}}+2a^2\mathrm{e}^{kx}+3a^2\right).$$
We now show that 
\begin{equation}\label{lem2.6}
    k\mathrm{e}^{kx}\cdot\left(3a\mathrm{e}^{3\mathrm{e}^{kx}}+2\mathrm{e}^{2\mathrm{e}^{kx}}+\left(-3a\mathrm{e}^{kx}-5a\right)\mathrm{e}^{\mathrm{e}^{kx}}+2a^2\mathrm{e}^{kx}+3a^2\right)\geq 0.
\end{equation}

For this purpose, let us take $m(x)=\mathrm{e}^{kx}.$ Also, let us set $$g(m)=3a\mathrm{e}^{2m}+2\mathrm{e}^{m}+\left(-3am-5a\right),$$ where $m\geq 1.$ Upon taking partial derivative with respect to $m$, we get
$g^{'}(m)=6ae^{2m}+2e^m-3a$ for $m\geq 1$.
As $g^{''}(m)=12ae^{2m}+2e^m\geq 0$, we have $g^{'}(m)$ to be an increasing function. As the value of $g^{'}(1)\geq 0,$ we obtain the inequality in \eqref{lem2.6}. 
 Further, since at $x=0,$ we have
$$\mathrm{e}^{2\mathrm{e}^{kx}}+a\left(\mathrm{e}^{2kx}-3\mathrm{e}^{kx}-2\right)\mathrm{e}^{\mathrm{e}^{kx}}+a^2(\mathrm{e}^{2kx}+3\mathrm{e}^{kx}+1)\geq 0,$$ the lemma gets established.  
 
    \end{proof}
\end{lemma}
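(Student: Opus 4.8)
The plan is to establish the non-negativity of
$$m_{2}(x)=\mathrm{e}^{2\mathrm{e}^{kx}}+a\left(\mathrm{e}^{2kx}-3\mathrm{e}^{kx}-2\right)\mathrm{e}^{\mathrm{e}^{kx}}+a^2(\mathrm{e}^{2kx}+3\mathrm{e}^{kx}+1)$$
by the standard ``monotonicity plus boundary value'' argument already used in Lemmas \ref{dec-3} and \ref{dec-4} above: I would show that $m_{2}$ is increasing on $(0,\infty)$ and that its limiting value at $x=0^{+}$ is non-negative, whence $m_{2}(x)\geq 0$ throughout. Differentiating gives
$$m_{2}^{'}(x)=k\mathrm{e}^{kx}\cdot\left(3a\mathrm{e}^{3\mathrm{e}^{kx}}+2\mathrm{e}^{2\mathrm{e}^{kx}}+\left(-3a\mathrm{e}^{kx}-5a\right)\mathrm{e}^{\mathrm{e}^{kx}}+2a^2\mathrm{e}^{kx}+3a^2\right),$$
so since $k\mathrm{e}^{kx}>0$ the sign of $m_{2}^{'}$ is controlled entirely by the bracketed factor, and I would aim to prove that this bracket is non-negative for all $x>0$.

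To handle the bracket I would substitute $m=\mathrm{e}^{kx}$, noting that $x\in(0,\infty)$ corresponds to $m\in(1,\infty)$, and after dividing out the common positive factor $\mathrm{e}^{\mathrm{e}^{kx}}=\mathrm{e}^{m}$ I would reduce matters to showing
$$g(m)=3a\mathrm{e}^{2m}+2\mathrm{e}^{m}+\left(-3am-5a\right)\geq 0\quad\text{for }m\geq 1.$$
The idea is to prove $g$ itself is increasing in $m$ by two successive differentiations: $g^{'}(m)=6a\mathrm{e}^{2m}+2\mathrm{e}^{m}-3a$ has $g^{''}(m)=12a\mathrm{e}^{2m}+2\mathrm{e}^{m}\geq 0$, so $g^{'}$ is increasing and it suffices to check $g^{'}(1)=6a\mathrm{e}^{2}+2\mathrm{e}-3a\geq 0$, which holds trivially since $6a\mathrm{e}^{2}\geq 3a$. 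Hence $g$ is increasing on $[1,\infty)$ and the remaining task is to verify the single boundary inequality $g(1)=3a\mathrm{e}^{2}+2\mathrm{e}-8a\geq 0$, again immediate because $3a\mathrm{e}^{2}\geq 8a$ for $a\in(0,1)$. This delivers $g(m)\geq 0$, hence the bracket in $m_{2}^{'}$ is non-negative, and therefore $m_{2}$ is increasing.

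Having established monotonicity, I would close the argument by evaluating the boundary term at $x=0^{+}$, i.e. at $m=1$: there
$$m_{2}(0)=\mathrm{e}^{2}+a\left(1-3-2\right)\mathrm{e}+a^2(1+3+1)=\mathrm{e}^{2}-4a\mathrm{e}+5a^2,$$
and I would verify this is non-negative, which follows either by treating it as a quadratic in $a$ with discriminant $16\mathrm{e}^{2}-20\mathrm{e}^{2}=-4\mathrm{e}^{2}<0$, or simply by completing the square. Combining the increasing nature of $m_{2}$ with $m_{2}(0)\geq 0$ yields $m_{2}(x)\geq 0$ for all $x\in(0,\infty)$, as required.

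I expect the main delicacy to lie in the reduction of the bracketed factor of $m_{2}^{'}$ to the cleaner one-variable inequality $g(m)\geq 0$: one must correctly peel off the positive exponential factor and be careful that the substitution $m=\mathrm{e}^{kx}$ maps the relevant range to $m\geq 1$ rather than $m\geq 0$, since the key checkpoints $g^{'}(1)$ and $g(1)$ rely on this lower bound. The two subsequent boundary checks are routine quadratic-in-$a$ or exponential-dominance estimates, so once the sign of the derivative is pinned down the remainder is mechanical.
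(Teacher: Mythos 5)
Your strategy is exactly the paper's: show that $m_2$ is increasing and non-negative at $x=0$, substitute $m=e^{kx}\ge 1$, and reduce the sign of $m_2'$ to a one-variable inequality $g(m)\ge 0$. In fact you are more careful than the paper at two points: you verify $g(1)\ge 0$ (the paper only checks $g'(1)\ge 0$, which gives monotonicity of $g$ but says nothing about its sign, so this is a gap you close), and you explicitly compute the boundary value $m_2(0)=e^2-4ae+5a^2$ and confirm its positivity via the discriminant, where the paper merely asserts it.

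There is, however, a genuine problem that you inherit from the paper: the displayed formula for $m_2'(x)$ is not the derivative of $m_2$. A direct computation with $u=e^{kx}$ gives
\[
m_2'(x)=k e^{kx}\Bigl[2e^{2e^{kx}}+a\bigl(e^{2kx}-e^{kx}-5\bigr)e^{e^{kx}}+a^{2}\bigl(2e^{kx}+3\bigr)\Bigr],
\]
whereas the stated expression contains the terms $3ae^{3e^{kx}}$ and $-3ae^{kx}e^{e^{kx}}$ in place of the correct $ae^{2kx}e^{e^{kx}}$ and $-ae^{kx}e^{e^{kx}}$. Consequently the function $g(m)=3ae^{2m}+2e^{m}-3am-5a$ that you analyse is not the quantity whose sign controls $m_2'$, so the core step of your argument, as written, establishes the wrong inequality. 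The lemma itself is still true and your architecture survives with a small repair: dividing the correct bracket by $e^{e^{kx}}=e^{m}$ and discarding the non-negative $a^{2}$-term, it suffices to show $2e^{m}+a(m^{2}-m-5)\ge 0$ for $m\ge 1$, which holds because $m^{2}-m-5\ge -5$ on $[1,\infty)$ and hence $2e^{m}+a(m^{2}-m-5)\ge 2e-5>0$. With that substitution for $g$, and your boundary check at $x=0$ unchanged, the proof is complete.
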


\begin{lemma}\label{lemma2.7}
    Let $m_3(\lambda): (0,\infty)\rightarrow(0,\infty)$ 
be a function given by $$m_{3}(\lambda)=\frac{k\lambda (\lambda x)^{k-1}}{\left(1+(1-a)\mathrm{e}^{\left(x\lambda\right)^k}\right)}
$$ where $0\leq a\leq 1$ and $k\geq 1$. Then, it is convex with respect to $\lambda$
\end{lemma}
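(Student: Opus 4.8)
The plan is to show convexity of $m_3(\lambda)$ by computing its second derivative with respect to $\lambda$ and establishing that it is non-negative on $(0,\infty)$. To streamline the calculation, I would first introduce the substitution $t = (\lambda x)^k$, noting that $\lambda > 0$ and $x > 0$ are related by a monotone power map; equivalently I would write $m_3(\lambda) = \frac{k\, \lambda^{k} x^{k-1}}{1 + (1-a)\mathrm{e}^{t}}$ so that the numerator is a pure power $\lambda^{k}$ (times the constant $k x^{k-1}$) and all the $\lambda$-dependence in the denominator is funneled through $t = x^{k}\lambda^{k}$. The factor $k x^{k-1}$ is a positive constant in $\lambda$, so convexity of $m_3$ is equivalent to convexity of $p(\lambda) = \lambda^{k}/\bigl(1+(1-a)\mathrm{e}^{x^{k}\lambda^{k}}\bigr)$.

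First I would compute $p'(\lambda)$ by the quotient rule, then $p''(\lambda)$, and clear the positive denominator $\bigl(1+(1-a)\mathrm{e}^{t}\bigr)^{3}$ so that the sign of $p''$ is governed by a single bracketed expression. Writing $c = 1-a \in [0,1]$ and keeping track of the powers of $\lambda$ that appear from differentiating both $\lambda^{k}$ and $\mathrm{e}^{c'\lambda^{k}}$-type terms, the numerator should organize into a polynomial in $t = x^{k}\lambda^{k}$ with coefficients built from $k$, $c$, and the factors $\mathrm{e}^{t}$. The key structural fact I would exploit is that $k \ge 1$ makes the exponent $k-2$ (arising from the second derivative of $\lambda^{k}$) manageable and, more importantly, forces the terms carrying positive powers of $t$ to dominate; I would factor out the lowest power of $\lambda$ and reduce the problem to showing that a function of the form $A\,\mathrm{e}^{2t} + B\,\mathrm{e}^{t} + C \ge 0$ for $t \ge 0$, where the coefficients depend on $k$, $a$, and $t$ itself.

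The main obstacle will be controlling the cross term, i.e. the middle coefficient $B$ (the one multiplied by a single $\mathrm{e}^{t}$), which is where the competing signs collect. The natural strategy, mirroring the technique used in Lemmas \ref{dec-3} and \ref{dec-4}, is to invoke the elementary bound $\mathrm{e}^{t} \ge 1 + t$ (and, if needed, $\mathrm{e}^{t} \ge 1 + t + t^{2}/2$) to absorb the negative contributions into the $\mathrm{e}^{2t}$ and constant terms, using $k \ge 1$ and $0 \le a \le 1$ to fix the direction of each inequality. Concretely, after the bound I expect the bracket to reduce to a manifestly non-negative expression in $t$; and to handle any residual sign ambiguity near $\lambda = 0$, I would check the boundary behavior of the relevant auxiliary function as $\lambda \to 0^{+}$ (where $t \to 0$ and the exponentials tend to $1$) and then argue monotonicity in $\lambda$ exactly as in the earlier lemmas, so that non-negativity at the boundary propagates to all $\lambda > 0$. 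Establishing this bracket inequality cleanly, with the interplay of $k \ge 1$ and the convexity bound on $\mathrm{e}^{t}$, is the crux of the argument; once it is in place, $p''(\lambda) \ge 0$ follows and hence $m_3(\lambda)$ is convex in $\lambda$.
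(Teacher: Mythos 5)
Your overall strategy coincides with the paper's: differentiate $m_3$ twice in $\lambda$, clear the cubed denominator, pass to $t=(x\lambda)^k$ so that the sign of $m_3''$ is controlled by an expression of the form $A e^{2t}+B e^{t}+C$, use the elementary bound $e^{t}\ge 1+t$ to reduce to a polynomial inequality, and finish by checking the value at $t=0$ together with monotonicity of the derivative in $t$. Structurally there is nothing different from, or missing relative to, the paper's argument.

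There is, however, one step in your plan that would fail as written: you commit to the denominator exactly as printed, $1+(1-a)e^{+(x\lambda)^k}$, and for that function the conclusion is false. For $k=1$, $a=0$, $x=1$ one gets $m_3(\lambda)=\lambda/(1+e^{\lambda})$, which is positive on $(0,\infty)$ yet tends to $0$ both as $\lambda\to 0^{+}$ and as $\lambda\to\infty$, so it cannot be convex; the same limiting behaviour rules out convexity for every $a\in[0,1)$, so your bracket $Ae^{2t}+Be^{t}+C$ would necessarily change sign and the crux step could not be completed. The statement contains a typo: the intended denominator is $1-(1-a)e^{-(x\lambda)^k}$, i.e.\ the expression appearing in the hazard rate $\sum_i k\lambda_i(\lambda_i x)^{k-1}/(1-\bar{\alpha}e^{-(\lambda_i x)^k})$ of $X_{1:n}$ in the theorem that invokes this lemma, and it is the function the paper's proof actually differentiates (its second derivative carries the denominator $(e^{(x\lambda)^k}+a-1)^3$). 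With that correction your plan is exactly the paper's: the cleared numerator is bounded below, using $k\ge 1$, by $(k-1)$ times an expression of the form $e^{2t}+a(t^2-t-2)e^{t}+a^2(t^2+3t+1)$, which is then handled by $e^{t}\ge 1+t$ together with the value-at-zero and increasing-derivative argument you describe. One caution for when you execute that last step: the middle coefficient $a(t^2-t-2)$ is negative for small $t$, so substituting $1+t$ for $e^{t}$ there moves the inequality the wrong way; you must first group the terms (for instance, factor $e^{t}$ out of the first two) or work with the exponential expression directly rather than with its polynomial minorant.
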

\begin{proof}
Taking first and second order partial derivatives of $m_{3}(\lambda)$ with respect to $\lambda,$ we get
$$\frac{\partial m_{3}(\lambda)}{\partial\lambda}=\frac{k\lambda^{k-1}-(1-a)k\lambda^{k-1}e^{-(\lambda t)^k}-(1-a) \lambda^{2k-1} x^k e^{-(\lambda t)^k}}{\left(1+(1-a)\mathrm{e}^{-\left(x\lambda\right)^k}\right)^2}$$ and 
$$\frac{\partial^2 m_{3}(\lambda)}{\partial\lambda^2}=\frac{f_1(\lambda)}{\left(\mathrm{e}^{\left(x\lambda\right)^k}+a-1\right)^3},$$
where $f_1 (\lambda)=\left(k-1\right)e^{2\left(x\lambda\right)^{k}}+a\left(k\left(x\lambda\right)^{2k}-\left(k-1\right)\left(x\lambda\right)^{k}-2\left(k-1\right)\right)e^{\left(x\lambda\right)^{k}}+a^{2}k\left(x\lambda\right)^{2k}+3a^{2}\left(k-1\right)\left(x\lambda\right)^{k}+a^{2}\left(k-1\right).$ To establish the required result, we only need to show that $f(\lambda)\geq 0.$
We first set $(x\lambda)^k=t$ and then observe that $$\left(k-1\right)\left(e^{2\left(x\lambda\right)^{k}}+a\left(\left(x\lambda\right)^{2k}-\left(x\lambda\right)^{k}-2\right)e^{\left(x\lambda\right)^{k}}+a^{2}\left(x\lambda\right)^{2k}+3a^{2}\left(x\lambda\right)^{k}+a^{2}\right)\leq f(\lambda)$$ for $k\geq 1$.
     As $e^t \geq t+1$ for $t\geq 0$, it is enough to show that 
    $$\left(1+t\right)^{2}+a\left(t^{2}-t-2\right)\left(t+1\right)+a^{2}t^{2}+3a^{2}t+a^{2}\geq 0.$$
    It is evident that the above polynomial is greater than $0$ at $t=0$. Now, upon differentiating the above expression with respect to $t$, we get 
    $$3at^2+2(a^2+1)t+(3a^2-3a+2)\geq 0,$$
    which proves that  $$\left(1+t\right)^{2}+a\left(t^{2}-t-2\right)\left(t+1\right)+a^{2}t^{2}+3a^{2}t+a^{2}\geq 0.$$ Hence, we get $f_1(\lambda)\geq 0,$ as required.
    \end{proof}

\section{Main Results}\label{s2}
In this section, we establish different comparison results between two series as well as parallel systems, wherein the systems' components follow extended Weibull distributions with different parameters. The results obtained are in terms of usual stochastic, dispersive and star orders. The modeled parameters are connected with different majorization orders. The main results established here are presented in two subsections; in the first, we consider the sample sizes for the two sets of variables to be equal while in the second they are taken to be random.

\subsection{Ordering results based on equal number of variables}\label{s21}
Let us consider two sets of (equal size) of dependent variables $\{X_{1}\ldots, X_{n}\}$ and $\{Y_{1}\ldots, Y_{n}\},$ where $X_{i}$ and $Y_{i}$ follow dependent extended Weibull distributions having different parameters $\bm{\alpha}=(\alpha_{1},\ldots,\alpha_{n})$, $\bm{\lambda}=(\lambda_{1},\ldots,\lambda_{n}),$ $\bm{k}=(k_{1},\ldots,k_{n})$ and $\bm{\beta}=(\beta_{1},\ldots,\beta_{n})$, $\bm{\mu}=(\mu_{1},\ldots,\mu_{n}),$ $\bm{l}=(l_{1},\ldots,l_{n}),$ respectively.
In the following, we present some results for comparing two extreme order statistics according to their survival functions.
\begin{theorem}\label{th1}
Let $X_i\sim EW(\alpha, \lambda_i,k)  
$ $(i=1,\ldots,n)$ and $Y_i\sim EW(\alpha, \mu_i,k) 
$ $(i=1,\ldots,n)$ have their associated Archimedean survival copulas to be with generators $\psi_1$ and $\psi_2$, respectively. Further, suppose $\phi_2\circ\psi_1$ is super-additive and $\psi_{1}$ is log-concave. Then, for $0<\alpha\leq 1$, we have 
$$(\log{\lambda_1},\ldots,\log{\lambda_n})\succeq_{w}(\log{\mu_1}, \ldots, \log{\mu_n})\Rightarrow Y_{1:n}\succeq_{st}X_{1:n}.$$

\end{theorem}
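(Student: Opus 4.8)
The plan is to prove the pointwise inequality $\bar F_{Y_{1:n}}(x)\ge \bar F_{X_{1:n}}(x)$ for every $x>0$, which is precisely $Y_{1:n}\succeq_{st}X_{1:n}$. Writing $\phi_j=\psi_j^{-1}$ and using that the survival function of a series system governed by an Archimedean survival copula takes the form
\begin{equation*}
\bar F_{X_{1:n}}(x)=\psi_1\!\left(\sum_{i=1}^{n}\phi_1\big(\bar F(x;\alpha,\lambda_i,k)\big)\right),\qquad
\bar F_{Y_{1:n}}(x)=\psi_2\!\left(\sum_{i=1}^{n}\phi_2\big(\bar F(x;\alpha,\mu_i,k)\big)\right),
\end{equation*}
where $\bar F(x;\alpha,\lambda,k)=\alpha\,e^{-(x\lambda)^k}/(1-\bar\alpha\,e^{-(x\lambda)^k})$ is the common marginal survival function obtained from \eqref{4}, I would introduce an intermediate series system $W_{1:n}$ whose components have the $Y$-margins $EW(\alpha,\mu_i,k)$ but are coupled through the \emph{first} generator $\psi_1$, and then establish the two inequalities $\bar F_{X_{1:n}}\le \bar F_{W_{1:n}}\le \bar F_{Y_{1:n}}$ separately. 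Chaining them yields the claim.

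For the second inequality (same margins $\boldsymbol\mu$, different generators) I would invoke Lemma \ref{Pre-lem2.1f}, applied to the survival copulas: super-additivity of $\phi_2\circ\psi_1$ gives $\hat C_{\psi_1}(\boldsymbol v)\le \hat C_{\psi_2}(\boldsymbol v)$ for all $\boldsymbol v\in[0,1]^n$, and evaluating at $\boldsymbol v=(\bar F(x;\alpha,\mu_1,k),\dots,\bar F(x;\alpha,\mu_n,k))$ gives $\bar F_{W_{1:n}}(x)\le \bar F_{Y_{1:n}}(x)$ immediately.

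The first inequality (same generator $\psi_1$, scale vectors $\boldsymbol\lambda$ versus $\boldsymbol\mu$) is where the majorization hypothesis enters. Since $\psi_1$ is decreasing, it suffices to show
\begin{equation*}
\sum_{i=1}^{n}\phi_1\big(\bar F(x;\alpha,\lambda_i,k)\big)\ \ge\ \sum_{i=1}^{n}\phi_1\big(\bar F(x;\alpha,\mu_i,k)\big).
\end{equation*}
Putting $s=\log\lambda$ and $\eta(s)=\phi_1\big(\bar F(x;\alpha,e^{s},k)\big)$, the left side is the separable function $\sum_i\eta(s_i)$, whose Schur-convexity is equivalent (via Lemma \ref{schur-critetia}) to convexity of $\eta$. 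By the characterization of weak submajorization in \cite{Marshall2011}, once $\eta$ is shown to be \emph{increasing and convex}, the hypothesis $(\log\mu_i)\preceq_w(\log\lambda_i)$ forces $\sum_i\eta(\log\mu_i)\le\sum_i\eta(\log\lambda_i)$, giving $\bar F_{X_{1:n}}\le\bar F_{W_{1:n}}$. Monotonicity of $\eta$ is immediate, because $s\mapsto\bar F(x;\alpha,e^{s},k)$ is decreasing and $\phi_1$ is decreasing.

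The hard part will be the convexity of $\eta$. I would differentiate $\bar F(x;\alpha,e^{s},k)=\psi_1(\eta(s))$ twice and express $\eta''$ through the logarithmic derivatives $(\log\bar F)''$, $(\log\psi_1)'$ and $(\log\psi_1)''$; the $d$-monotonicity of $\psi_1$ makes $\phi_1$ convex and decreasing, which is also used. Here Lemma \ref{dec-2} is the crucial input, as it is exactly the assertion that $s\mapsto\bar F(x;\alpha,e^{s},k)$ is log-concave in $s$ (its hypothesis $0\le a\le 1$ corresponds to $\bar\alpha=1-\alpha\in[0,1)$, i.e.\ $0<\alpha\le 1$), while the assumed log-concavity of $\psi_1$ governs the generator terms. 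The delicate point, and the main obstacle, is that log-concavity of $\psi_1$ and log-concavity of the margin push $\eta''$ in opposite directions, so convexity is not a formal consequence of the two; it requires the explicit, quantitative extended-Weibull form supplied by Lemma \ref{dec-2}, and this is precisely where the standing hypotheses $0<\alpha\le1$ and log-concavity of $\psi_1$ are genuinely needed.
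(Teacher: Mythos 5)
Your overall architecture coincides with the paper's: the same intermediate system $W_{1:n}$ with the $\boldsymbol{\mu}$-margins coupled by $\psi_1$, the same appeal to Lemma \ref{Pre-lem2.1f} for comparing $W_{1:n}$ with $Y_{1:n}$, and the same reduction of the comparison of $X_{1:n}$ with $W_{1:n}$ to a monotone-plus-Schur-convex statement in $(\log\lambda_1,\ldots,\log\lambda_n)$. The paper applies the Schur--Ostrowski criterion to the composite $\delta(\boldsymbol{e^v})=1-\psi_1\big[\sum_m\phi_1(\cdot)\big]$, but since $1-\psi_1$ is increasing this is equivalent to your requirement that the summand $\eta(s)=\phi_1\big(\bar F(x;\alpha,e^{s},k)\big)$ be increasing and convex; your identification of Lemma \ref{dec-2} as the log-concavity of the marginal survival function in $\log\lambda$ is also exactly how that lemma enters the paper.

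The gap is that the one step carrying all the mathematical content, the convexity of $\eta$, is never proved: you correctly observe that log-concavity of $\psi_1$ and log-concavity of the margin push $\eta''$ in opposite directions, and then stop at ``I would differentiate twice.'' Concretely, $\eta'(s)=\big[\tfrac{d}{ds}\log\bar F(s)\big]\cdot\dfrac{\psi_1(\eta(s))}{\psi_1'(\eta(s))}$; the first factor is non-positive and decreasing by Lemma \ref{dec-2}, while log-concavity of $\psi_1$ makes $t\mapsto\psi_1(t)/\psi_1'(t)$ increasing, hence the second factor is non-positive and \emph{increasing} in $s$ (because $\eta$ is increasing). A product of such a pair has indeterminate monotonicity, so convexity of $\eta$ genuinely does not follow formally from the two stated hypotheses --- exactly the obstacle you flagged. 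The paper's proof gets past this point by asserting that its factor $\Gamma(v_i)=\bar F_i/\psi_1'(\phi_1(\bar F_i))$ (your second factor) is \emph{decreasing}; if you trace the signs in that display, the assertion appears to rest on a dropped minus sign in the prefactor $\partial\bar F_i/\partial v_i$, so the difficulty you identified is not actually dispatched there either. As it stands, your proposal reduces the theorem to its hardest claim and leaves that claim open; completing it requires an explicit argument (or an additional hypothesis on $\psi_1$) forcing $\eta'$ to be increasing.
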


\begin{proof}
{ The distribution functions of $X_{1:n}$ and $Y_{1:n}$ can be written as }
\begin{equation*}
    F_{X_{1:n}}(x) = 1-\psi_{1}\Big[\sum_{m=1}^{n} \phi_{1}\Big(\frac{\alpha e^{-(x\lambda_m)^k}}{1-\bar{\alpha}e^{-(x\lambda_m)^k}}\Big)\Big]
\end{equation*}
{  and}
\begin{equation*}
    F_{Y_{1:n}}(x) = 1-\psi_{2}\Big[\sum_{m=1}^{n} \phi_{2}\Big(\frac{ \alpha e^{-(x\mu_m)^k}}{1-\bar{\alpha}e^{-(x\mu_m)^k}}\Big)\Big],
\end{equation*}
respectively.
{ Now, from Lemma \ref{Pre-lem2.1f}, super-additivity property of $\phi_{2}\circ\psi_{1}$ yields}
\begin{equation*}
    \psi_{2}\Big[\sum_{m=1}^{n} \phi_{2}\Big(\frac{\alpha e^{-(x\mu_m)^k}}{1-\bar{\alpha}e^{-(x\mu_m)^k}}\Big)\Big]
   \geq \psi_{1}\Big[\sum_{m=1}^{n} \phi_{1}\Big(\frac{\alpha e^{-(x\mu_m)^k}}{1-\bar{\alpha}e^{-(x\mu_m)^k}}\Big)\Big],
\end{equation*}
which is equivalent to stating that
\begin{equation*}
   1-\psi_{2}\Big[\sum_{m=1}^{n} \phi_{2}\Big(\frac{\alpha e^{-(x\mu_m)^k}}{1-\bar{\alpha}e^{-(x\mu_m)^k}}\Big)\Big]
   \leq 1-\psi_{1}\Big[\sum_{m=1}^{n} \phi_{1}\Big(\frac{\alpha e^{-(x\mu_m)^k}}{1-\bar{\alpha}e^{-(x\mu_m)^k}}\Big)\Big].
\end{equation*}
Therefore, {to establish the required result, we only need} to prove that  
\begin{equation*}
    1-\psi_{1}\Big[\sum_{m=1}^{n} \phi_{1}\Big(\frac{\alpha e^{-(x\lambda_m)^k}}{1-\bar{\alpha}e^{-(x\lambda_m)^k}}\Big)\Big]\ge 1-\psi_{1}\Big[\sum_{m=1}^{n} \phi_{1}\Big(\frac{\alpha e^{-(x\mu_m)^k}}{1-\bar{\alpha}e^{-(x\mu_m)^k}}\Big)\Big].
\end{equation*}
{Now, let $\delta(\boldsymbol{e^v})=1-\psi_{1}\Big[\sum_{m=1}^{n} \phi_{1}\Big(\frac{\alpha e^{-(xe^{v_m})^k}}{1-\bar{\alpha}e^{-(xe^{v_m})^k}}\Big)\Big],$ where
$\boldsymbol{e^v}=(e^{v_{1}},\ldots,e^{v_{n}})$ and $(v_{1},\ldots,v_{n})=(\log{\lambda_{1}},\ldots, \log{\lambda_{n}}).$
Due to Theorem A.8 of \cite{Marshall2011}, we just have to -show that $\delta(\boldsymbol{e^v})$ is increasing and Schur-convex in $\boldsymbol{v}.$ 
Taking partial derivative of $\delta(\boldsymbol{e^v})$ with respect to $v_i$, for $i=1,\ldots,n,$ we have}
\begin{equation}\label{th1_1}
 \frac{\partial \delta(\boldsymbol{e^v})}{\partial v_{i}}=\eta(v_i)\Gamma(v_i) \psi_{1}'\Big[\sum_{m=1}^{n}\phi_{1}\Big(\frac{\alpha e^{-(x\lambda_m)^k}}{1-\bar{\alpha}e^{-(x\lambda_{m})^k}}\Big)\Big]\geq 0,
\end{equation}
{ where
$\eta(v_i)= \frac{ke^{v_{i}k}}{1-\bar{\alpha}e^{-(xe^{v_{i}})^k}}$ and $\Gamma(v_i)= \frac{\frac{\alpha e^{-(xe^{v_{i}})^k}}{1-\bar{\alpha}e^{-(xe^{v_{i}})^k}}}{\psi_{1}'\Big(\phi_{1}\Big(\frac{\alpha e^{-(xe^{v_{i}})^k}}{1-\bar{\alpha}e^{-(xe^{v_{i}})^k}}\Big)\Big)},$ for $i={1,\ldots,n}.$
Therefore, from \eqref{th1_1}, we can see that $\delta(\boldsymbol{e^v})$ is increasing in $v_i,$ for $i=1,\ldots,n$.
Now, the derivative of $\Gamma(v_i)$ with respect to $v_i$, is given by
\begin{align*}
    {\Big[\psi_{1}'\Big(\phi_{1}\Big(\frac{\alpha e^{-(xe^{v_{i}})^k}}{1-\bar{\alpha}e^{-(xe^{v_{i}})^k}}\Big)\Big)\Big]^2}\frac{\partial \Gamma(v_{i})}{\partial v_{i}}&=ke^{v_{i}k}\frac{\frac{\alpha x^k e^{-(xe^{v_{i}})^k}}{(1-\bar{\alpha}e^{-(xe^{v_{i}})^k})^2}}{\psi_{1}'\Big(\phi_{1}\Big(\frac{\alpha e^{-(xe^{v_{i}})^k}}{1-\bar{\alpha}e^{-(xe^{v_{i}})^k}}\Big)\Big)}\Big[\Big(\psi_{1}'\Big(\phi_{1}\Big(\frac{\alpha e^{-(xe^{v_{i}})^k}}{1-\bar{\alpha}e^{-(xe^{v_{i}})^k}}\Big)\Big)\Big)^2\nonumber\\
    &-\frac{\alpha e^{-(xe^{v_{i}})^k}}{1-\bar{\alpha}e^{-(xe^{v_{i}})^k}}
    \times{\psi_{1}^{''}\Big(\phi_{1}\Big(\frac{\alpha e^{-(xe^{v_{i}})^k}}{1-\bar{\alpha}e^{-(xe^{v_{i}})^k}}\Big)\Big)\Big]}\leq 0,
\end{align*}
since $\psi_{1}$ is decreasing and log-concave.
This implies that $\Gamma(v_{i})$ is decreasing and non-positive in $v_i,$ for $i=1,\ldots,n$.}
Also, $\eta(v_{i})$ is increasing and non-negative in $v_i,$ from Lemma \ref{dec-2}.
Therefore, $\eta(v_{i})\Gamma(v_{i})$ is decreasing in $v_i,$ for $i=1,\ldots,n$.
Next, we have 
\begin{align*}
    &(v_i-v_j)\Big(\frac{\partial \delta(\boldsymbol{e^v})}{\partial e^{v_{i}}}-\frac{\partial \delta(\boldsymbol{e^v})}{\partial e^{v_{j}}}\Big)\nonumber\\
    &=x^{k}(v_i-v_j)[\eta(v_i)\Gamma(v_i)-\eta(v_j)\Gamma(v_j)]\times\psi_{1}'\Big[\sum_{m=1}^{n}\phi_{1}\Big(\frac{\alpha e^{-(xe^{v_i})^k}}{1-\bar{\alpha}e^{-(xe^{v_i})^k}}\Big)\Big]\nonumber\\&\geq 0. 
\end{align*}
Hence, $\delta(\boldsymbol{e^v})$ is Schur-convex in $\boldsymbol{v}$ from Lemma \ref{schur-critetia}, which completes the proof of the theorem.

\end{proof}

From Theorem 3.1, we can say that if  $\bm{\alpha}$ and $\bm{\beta}$, the shape parameters $\bm{k}$ and $\bm{l}$ are the same and scalar-valued, then under the stated assumptions, we can say that the lifetime $X_{1:n}$ is stochastically less than the lifetime
$Y_{1:n}.$ 

Note in Theorem 3.1 that we have considered the tilt parameter ${\alpha}$ to lie in $(0,1].$ {A natural question that arises is whether under the same condition, we can establish the inequality between two largest order statistics with respect to the usual stochastic order. The following counterexample gives the answer to be negative.
\begin{counterexample}\label{cex1.1}
Let $X_i \sim EW(\alpha,\lambda_i,k)$ ($i=1,2,3$) and $Y_i \sim EW(\alpha,\mu_i,k)$ ($i=1,2,3$).  Set $\alpha=0.55$, $(\lambda_1, \lambda_2, \lambda_3)=(2.14,1.4,1)$ and $(\mu_1, \mu_2, \mu_3)=(0.77, 0.8, 0.8)$. It is then easy to see that $(\log{\lambda_1},\log{\lambda_2}, \log{\lambda_3})\succeq_{w}(\log{\mu_1}, \log{\mu_2},\log{\mu_3})$. 
	Now, suppose we choose the Gumbel-Hougaard copula with parameters $\theta_1=3$ and $\theta_2=0.6$ and $k=1.63$. 
	\begin{figure} \label{fig_1.1}
		\begin{center}
			\includegraphics[height=2.8in]{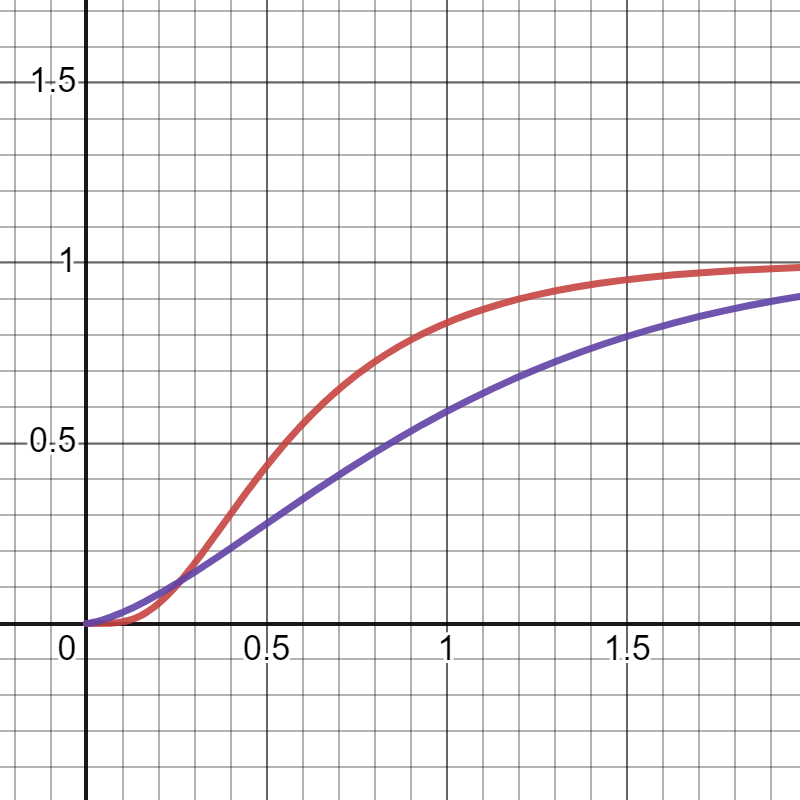}
			\caption{Plots of  ${F}_{X_{3:3}}(x)$ and ${F}_{Y_{3:3}}(x)$ in Counterexample \ref{cex1.1}, where the red line corresponds to ${F}_{X_{3:3}}(x)$ and the blue line  corresponds to ${F}_{Y_{3:3}}(x)$.}
		\end{center}
	\end{figure}
	{Then, Figure 1 presents plots of $F_{X_{3:3}}(x)$ and $F_{Y_{3:3}}(x)$, from which it is evident that when $k \geq 1$, the graph of $F_{X_{3:3}}(x)$ intersects with that of $F_{Y_{3:3}}(x) $ for some $x\geq 0,$ where the two distribution functions are given by}

\begin{align}
F_{X_{3:3}}(x)&=\exp\Bigg\{1-\Bigg(\left(1-\ln\left(\frac{1-\mathrm{e}^{-\left(\lambda_1x\right)^k}}{1-\bar{\alpha}\mathrm{e}^{-\left(\lambda_1x\right)^k}}\right)\right)^\frac{1}{\theta_1}+\left(1-\ln\left(\frac{1-\mathrm{e}^{-\left(\lambda_2x\right)^k}}{1-\Bar{\alpha}\mathrm{e}^{-\left(\lambda_2x\right)^k}}\right)\right)^\frac{1}{\theta_1}\nonumber\\ 
& +\left(1-\ln\left(\frac{1-\mathrm{e}^{-\left(\lambda_3x\right)^k}}{1-\Bar{\alpha}\mathrm{e}^{-\left(\lambda_3x\right)^k}}\right)\right)^\frac{1}{\theta_1}-2\Bigg)^{\theta_1}\Bigg\}
\end{align}

and
\begin{align}
F_{Y_{3:3}}(x)&=\exp\Bigg\{1-\Bigg(\left(1-\ln\left(\frac{1-\mathrm{e}^{-\left(\mu_1x\right)^k}}{1-\Bar{\alpha}\mathrm{e}^{-\left(\mu_1x\right)^k}}\right)\right)^\frac{1}{{\theta_2}}+\left(1-\ln\left(\frac{1-\mathrm{e}^{-\left(\mu_2x\right)^k}}{1-\Bar{\alpha}\mathrm{e}^{-\left(\mu_2x\right)^k}}\right)\right)^\frac{1}{{\theta_2}}\nonumber\\
& +\left(1-\ln\left(\frac{1-\mathrm{e}^{-\left(\mu_3x\right)^k}}{1-\Bar{\alpha}\mathrm{e}^{-\left(\mu_3x\right)^k}}\right)\right)^\frac{1}{\theta_2}-2\Bigg)^{\theta_2}\Bigg\}.
\end{align}
\end{counterexample}
 So, from this counterexample, we show that in order to establish comparisons results between the lifetimes of $X_{n:n}$ and $Y_{n:n},$ we require some other sufficient conditions.}

\begin{theorem}\label{th2}
Let $X_i\sim EW(\alpha, \lambda_i,k)$ $(i=1,\ldots,n)$ and  $Y_i\sim EW(\alpha, \mu_i,k) $ $(i=1,\ldots,n)$ where $0<k\leq1,$  and their associated Archimedean copulas be with generators $\psi_1$ and $\psi_2$, respectively. Also, suppose $\phi_2\circ\psi_1$ is super-additive. Then, for $0<\alpha\leq 1$, we have
$$\boldsymbol{\lambda}\succeq^{w}\boldsymbol{\mu}\Rightarrow X_{n:n}\succeq_{st}Y_{n:n}.$$
\end{theorem}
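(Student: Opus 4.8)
The plan is to follow the architecture of the proof of Theorem \ref{th1}, but to work with the \emph{distribution} functions of the maxima: since $X_{n:n}$ and $Y_{n:n}$ are parallel systems, their CDFs are obtained from the Archimedean copulas acting on the marginal distribution functions rather than on the survival functions. Writing $F(x;\lambda)=\dfrac{1-e^{-(x\lambda)^k}}{1-\bar{\alpha}e^{-(x\lambda)^k}}$ for the common marginal CDF, we have
\begin{equation*}
F_{X_{n:n}}(x)=\psi_{1}\Big[\sum_{m=1}^{n}\phi_{1}\big(F(x;\lambda_m)\big)\Big],\qquad F_{Y_{n:n}}(x)=\psi_{2}\Big[\sum_{m=1}^{n}\phi_{2}\big(F(x;\mu_m)\big)\Big].
\end{equation*}
Because $X_{n:n}\succeq_{st}Y_{n:n}$ is equivalent to $F_{X_{n:n}}(x)\le F_{Y_{n:n}}(x)$ for all $x$, this is the inequality I would prove.

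First I would remove the two different generators exactly as in Theorem \ref{th1}. By Lemma \ref{Pre-lem2.1f}, super-additivity of $\phi_{2}\circ\psi_{1}$ gives $C_{\psi_1}(\boldsymbol{v})\le C_{\psi_2}(\boldsymbol{v})$ for every $\boldsymbol{v}\in[0,1]^n$; evaluating at $\boldsymbol{v}=\big(F(x;\mu_1),\ldots,F(x;\mu_n)\big)$ yields
\begin{equation*}
\psi_{1}\Big[\sum_{m=1}^{n}\phi_{1}\big(F(x;\mu_m)\big)\Big]\le\psi_{2}\Big[\sum_{m=1}^{n}\phi_{2}\big(F(x;\mu_m)\big)\Big]=F_{Y_{n:n}}(x).
\end{equation*}
It therefore suffices to establish the single-generator inequality $\psi_{1}\big[\sum_{m}\phi_{1}(F(x;\lambda_m))\big]\le\psi_{1}\big[\sum_{m}\phi_{1}(F(x;\mu_m))\big]$.

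For this single-generator step I would observe that the functional is now a symmetric separable sum composed with the decreasing generator $\psi_1$. Since $\psi_1$ is decreasing, the inequality is equivalent to $\sum_m g(\lambda_m)\ge\sum_m g(\mu_m)$, where $g(\lambda):=\phi_{1}\big(F(x;\lambda)\big)$. The hypothesis $\boldsymbol{\lambda}\succeq^{w}\boldsymbol{\mu}$ reads $\boldsymbol{\mu}\preceq^{w}\boldsymbol{\lambda}$, so by the weak-supermajorization criterion of \cite{Marshall2011} (a separable sum $\sum_m g(\cdot)$ is Schur-convex and decreasing when $g$ is convex and decreasing, and for such $\Phi$ one has $\boldsymbol{\mu}\preceq^{w}\boldsymbol{\lambda}\Rightarrow\Phi(\boldsymbol{\mu})\le\Phi(\boldsymbol{\lambda})$) it is enough to show that $g$ is decreasing and convex. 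Monotonicity is immediate: $F(x;\cdot)$ is increasing in $\lambda$ and $\phi_1=\psi_1^{-1}$ is decreasing, so $g$ is decreasing.

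The crux is the convexity of $g=\phi_1\circ F$. The composition rule gives $g''=\phi_1''(F)(F')^2+\phi_1'(F)F''$ with $F'=\partial F/\partial\lambda$; since an Archimedean generator $\psi_1$ is decreasing and convex, its inverse $\phi_1$ is decreasing and convex as well ($\phi_1'\le0$, $\phi_1''\ge0$), so the first term is nonnegative and it remains only to prove $F''\le0$, i.e. that $F(x;\lambda)$ is concave in $\lambda$. This is where $0<k\le1$ enters. A direct computation gives
\begin{equation*}
\frac{\partial F(x;\lambda)}{\partial\lambda}=x^{k}\cdot\frac{k\lambda^{k-1}}{1-\bar{\alpha}e^{-(x\lambda)^k}}\cdot\frac{\alpha e^{-(x\lambda)^k}}{1-\bar{\alpha}e^{-(x\lambda)^k}},
\end{equation*}
a product of $x^k$ with two strictly positive factors: the first is decreasing in $\lambda$ by Lemma \ref{dec-1} (with $a=\bar{\alpha}$ and $b=x$, using $0<k\le1$), and the second equals the survival function $\bar F(x;\lambda)$, which is decreasing in $\lambda$. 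A product of positive decreasing functions is decreasing, hence $\partial F/\partial\lambda$ is decreasing and $F(x;\lambda)$ is concave in $\lambda$; thus $g$ is convex and the proof closes. I expect this concavity to be the main obstacle, and the restriction $0<k\le1$ to be indispensable: for $k>1$ the factor $k\lambda^{k-1}$ is increasing, Lemma \ref{dec-1} fails, and $g$ need not be convex, which is consistent with the lack of control over the maxima exhibited in Counterexample \ref{cex1.1}. Notably, no log-concavity of $\psi_1$ is required here, in contrast with Theorem \ref{th1}; running the Schur-criterion argument of that theorem verbatim, the cross-difference is governed by the monotonicity of $\eta(\lambda_i)\Gamma(\lambda_i)$, where $\eta(\lambda_i)=\frac{k\lambda_i^{k-1}}{1-\bar{\alpha}e^{-(x\lambda_i)^k}}$ is decreasing by Lemma \ref{dec-1} and $\Gamma$ is the counterpart of the factor appearing in the proof of Theorem \ref{th1}, and this product is increasing precisely because $g$ is convex, so the two routes coincide.
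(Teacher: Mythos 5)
Your proposal is correct and follows essentially the same route as the paper: the same reduction via Lemma \ref{Pre-lem2.1f} to a single-generator inequality, followed by a weak-supermajorization argument whose substance --- that $\lambda\mapsto\phi_1'\big(F(x;\lambda)\big)\,\partial F(x;\lambda)/\partial\lambda$ is increasing, i.e.\ that the product $\eta_1\Gamma_1$ in the paper's notation is increasing --- is the identical computation. Your repackaging (stripping off the outer $\psi_1$ and checking monotonicity and convexity of the summand $g=\phi_1\circ F$ via the concavity of $F$ in $\lambda$, which uses Lemma \ref{dec-1} exactly where the paper does) is a clean, equivalent way of verifying the same Schur property, and your closing remark correctly identifies that the two routes coincide.
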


\begin{proof}
The distribution functions of $X_{n:n}$ and $Y_{n:n}$ can be written as
\begin{equation*}
    F_{X_{n:n}}(x) = \psi_{1}\Big[\sum_{m=1}^{n} \phi_{1}\Big(\frac{1-e^{-(x\lambda_m)^k}}{1-\bar{\alpha}e^{-(x\lambda_m)^k}}\Big)\Big]
\end{equation*}
and
\begin{equation*}
    F_{Y_{n:n}}(x) = \psi_{2}\Big[\sum_{m=1}^{n} \phi_{2}\Big(\frac{1-e^{-(x\mu_m)^k}}{1-\bar{\alpha}e^{-(x\lambda_m)^k}}\Big)\Big],
\end{equation*}
respectively.
Now, from Lemma \ref{Pre-lem2.1f}, the super-additivity of $\phi_{2}$ o $\psi_{1}$ implies that
\begin{equation*}
   \psi_{2}\Big[\sum_{m=1}^{n} \phi_{2}\Big(\frac{1-e^{-(x\mu_m)^k}}{1-\bar{\alpha}e^{-(x\mu_m)^k}}\Big)\Big]
   \geq \psi_{1}\Big[\sum_{m=1}^{n} \phi_{1}\Big(\frac{1-e^{-(x\mu_m)^k}}{1-\bar{\alpha}e^{-(x\mu_m)^k}}\Big)\Big].
\end{equation*}
So, in order to establish the required result, we only need to prove that
\begin{equation*}
    \psi_{1}\Big[\sum_{m=1}^{n} \phi_{1}\Big(\frac{1-e^{-(x\lambda_m)^k}}{1-\bar{\alpha}e^{-(x\lambda_m)^k}}\Big)\Big]\leq \psi_{1}\Big[\sum_{m=1}^{n} \phi_{1}\Big(\frac{1-e^{-(x\mu_m)^k}}{1-\bar{\alpha}e^{-(x\lambda_m)^k}}\Big)\Big].
\end{equation*}
Let us now define $\delta_{1}({ \boldsymbol{\lambda}})=\psi_{1}\Big[\sum_{m=1}^{n} \phi_{1}\Big(\frac{1-e^{-(x\lambda_m)^k}}{1-\bar{\alpha}e^{-(x\lambda_m)^k}}\Big)\Big]$, where ${ \boldsymbol{\lambda}}=(\lambda_1,\ldots,\lambda_n)$.
Due to Theorem A.8 of \cite{Marshall2011}, we just have to show that $\delta_{1}(\boldsymbol{\lambda})$ is increasing and Schur-convcave in $\lambda$. Taking partial derivative of $\delta_1({ \boldsymbol{\lambda}})$ with respect to $\lambda_i,$ for i = $1,\ldots,n,$ we get
\begin{equation*}
 \frac{\partial \delta_{1}(\boldsymbol{\lambda})}{\partial \lambda_i}=\eta_1(\lambda_i)\Gamma_1(\lambda_i)\psi_{1}'\Big[\sum_{m=1}^{n}\phi_{1}\Big(\frac{1- e^{-(x\lambda_m)^k}}{1-\bar{\alpha}e^{-(x\lambda_m)^k}}\Big)\Big]\geq 0,    
\end{equation*}
where
\begin{equation*}
\eta_{1}(\lambda_i)= \frac{k\lambda_i^{k-1}}{1-\bar{\alpha}e^{-(x\lambda_i)^k}}   
\end{equation*}
and 
\begin{equation*}
    \Gamma_{1}(\lambda_i)= \frac{\frac{\alpha e^{-(x\lambda_i)^k}}{1-\bar{\alpha}e^{-(x\lambda_i)^k}}}{\psi_{1}'\Big(\phi_{1}\Big(\frac{1- e^{-(x\lambda_i)^k}}{1-\bar{\alpha}e^{-(x\lambda_i)^k}}\Big)\Big)}.
\end{equation*}
Now, $\eta_{1}(\lambda_i)$ is non-negative and decreasing in $\lambda_i$, for $0 < k\leq 1$, from Lemma \ref{dec-1} and $\Gamma_{1}(\lambda_i)$ is non-positive.
Taking derivative of $\Gamma_{1}(\lambda_i)$ with respect to $\lambda_i$, we get 
{\small\begin{align*}
    \frac{\partial \Gamma_{1}(\lambda_i)}{\partial \lambda_{i}}=&-\Big[\Big(\psi_{1}'\Big(\phi_{1}\Big(\frac{1- e^{-(x\lambda_i)^k}}{1-\bar{\alpha}e^{-(x\lambda_i)^k}}\Big)\Big)\Big)^2+\frac{\alpha e^{-(x\lambda_i)^k}}{1-\bar{\alpha}e^{-(x\lambda_i)^k}}\psi_{1}''\Big(\phi_{1}\Big(\frac{1- e^{-(x\lambda_i)^k}}{1-\bar{\alpha}e^{-(x\lambda_i)^k}}\Big)\Big)\Big]\nonumber\\
    &\times k\lambda_i^{k-1}\frac{\frac{x^k\alpha e^{-(x\lambda_i)^k}}{(1-\bar{\alpha}e^{-(x\lambda_i)^k})^2}}{\psi_{1}'\Big(\phi_{1}\Big(\frac{1- e^{-(x\lambda_i)^k}}{1-\bar{\alpha}e^{-(x\lambda_i)^k}}\Big)\Big)}\times \frac{1}{\Big[\psi_{1}'\Big(\phi_{1}\Big(\frac{1- e^{-(x\lambda_i)^k}}{1-\bar{\alpha}e^{-(x\lambda_i)^k}}\Big)\Big)\Big]^2}\geq 0,
\end{align*}}
 which shows that $\Gamma_{1}(\lambda_i)$ is non-positive and increasing in $\lambda_i,$ for $i={1,2,\ldots,n}$.
Also, $\eta_{1}(\lambda_i)$ is non-negative and decreasing. Hence, $\eta_{1}(\lambda_i) \Gamma_{1}(\lambda_i)$ is increasing in $\lambda_i,$ for $i={1,2,\ldots,n}$.\\
Therefore, for $i\neq j,$
\begin{align*}
    &(\lambda_i-\lambda_j)\Big(\frac{\partial \delta_{1}(\boldsymbol{\lambda})}{\partial \lambda_{i}}-\frac{\partial \delta_{1}(\boldsymbol{\lambda})}{\partial \lambda_{j}}\Big)\nonumber\\
    &=(\lambda_i-\lambda_j)x^k\psi_{1}'\Big[\sum_{m=1}^{n}\phi_{1}\Big(\frac{1- e^{-(x\lambda_i)^k}}{1-\bar{\alpha}e^{-(x\lambda_i)^k}}\Big)\Big][\eta_{1}(\lambda_i)\Gamma_{1}(\lambda_i)-\eta_{1}(\lambda_j)\Gamma_{1}(\lambda_j)]\nonumber\\
    &\leq 0, 
    \end{align*}
which shows that $\delta_1(\boldsymbol{\lambda})$ is Schur-concave in $\boldsymbol{\lambda}$ by { Lemma \ref{schur-critetia}}. Hence, the therorem. 
\end{proof}
\begin{remark}
    In Theorem \ref{th2}, if we take $k=1,$ we simply get the result in Theorem $1$ of \cite{barmalzan2020EE}. 
\end{remark}
\begin{remark} \label{rem1}
	It is important to note that the condition  ``$\phi_{2} \circ \psi_1$    is super-additive''  in Theorem \ref{th2} is quite general and is easy to verify for many
	well-known Archimedean copulas. For example, for  the Gumbel-Hougaard copula with  generator $\psi(t)=e^{1-(1+t)^{\theta}}$ for $\theta \in [1,\infty)$, it is easy to see that $\log \psi(t)=1-(1+t)^{\theta}$ is concave in $t \in [0,1]$. Let us now set $\psi_{1}(t)=e^{1-(1+t)^{\alpha}}$ and $\psi_{2}(t)=e^{1-(1+t)^{\beta}}$. It can then be observed that 
	$\phi_{2} \circ \psi_1(t)=(1+t)^{{\alpha/\beta}-1}.$ Taking  derivative of $\phi_{2} \circ \psi_1(t)$ twice with respect to $t$, it can be seen that 
	$[\phi_{2} \circ \psi_1(t)]^{\prime \prime}=(\frac{\alpha}{\beta})(\frac{\alpha}{\beta}  -1 ) (1+t)^{{\alpha/\beta}-1} \ge 0 $ for $\alpha >\beta>1$, which implies the super-additivity of  $\phi_{2} \circ \psi_1(t)$.
\end{remark}

Now, we present an example that demonstrates that if we consider two parallel systems with their components being mutually dependent with Gumbel-Hougaard copula having  parameters $\theta_1=8.9$ and $\theta_2=3.05$ and following extended Weibull distributions, then under the setup of Theorem \ref{th2}, the survival function of one parallel system is less than that of the other.
\begin{example}\label{ex1}
	Let $X_i \sim EW(\alpha,\lambda_i,k)$ ($i=1,2$) and $Y_i \sim EW(\alpha,\mu_i,k)$ ($i=1,2$).  Set $\alpha=0.6$, $(\lambda_1, \lambda_2)=(0.46,0.5)$ and $(\mu_1, \mu_2)=(1.7,0.43)$. It is then easy to see that $(\mu_1, \mu_2)\stackrel{w}{\preceq} (\lambda_1,\lambda_2)$.  
	Now, suppose we choose the Gumbel-Hougaard copula with parameters $\theta_1=8.9,$ $\theta_2=3.05$ and $k=0.9$. In this case, the distribution functions of $X_{2:2}$ and $Y_{2:2}$ are
	{\fontsize{9}{12}\selectfont\begin{equation*}
	F_{X_{2:2}} (x)=\exp\left\{1-\left(\left[1-\ln\Bigg(\frac{1-e^{-(\lambda_1x)^k}}{1-\bar{\alpha}e^{-(\lambda_1x)^k}}\Bigg)\right]^{1/\theta_1}+\left[1-\ln\Bigg(  \frac{1-e^{-(\lambda_2x)^k}}{1-\bar{\alpha}e^{-(\lambda_2x)^k}}  \Bigg)\right]^{1/\theta_1}-1\right)^{\theta_1} \right\}
	\end{equation*}}
	and
	{\fontsize{9}{12}\selectfont\begin{equation*}
	F_{Y_{2:2}} (x)=\exp\left\{1-\left(\left[1-\ln\Bigg(\frac{1-e^{-(\mu_1x)^k}}{1-\bar{\alpha}e^{-(\mu_1x)^k}}\Bigg)\right]^{1/\theta_2}+\left[1-\ln\left(  \frac{1-e^{-(\mu_2x)^k}}{1-\bar{\alpha}e^{-(\mu_2x)^k}}  \right)\right]^{1/\theta_2}-1\right)^{\theta_2} \right\},
	\end{equation*}}
	respectively. Then, $F_{X_{2:2}}(x) \leq F_{Y_{2:2}}(x)$, for all $x\geq 0$, as already proved in Theorem \ref{th2}.  
	
\end{example}
A natural question that arises here is whether we can extend Theorem \ref{th2} for $k\geq 1.$ The answer to this question is negative as the following counterexample illustrates.

\begin{counterexample}\label{cex1}
	Let $X_i \sim EW(\alpha,\lambda_i,k)$ ($i=1,2$) and $Y_i \sim EW(\alpha,\mu_i,k)$ ($i=1,2$).  Set $\alpha=0.6$, $(\lambda_1, \lambda_2)=(0.46,0.5)$ and $(\mu_1, \mu_2)=(1.7,0.43)$. It is then evident that $(\mu_1, \mu_2)\stackrel{w}{\preceq} (\lambda_1,\lambda_2)$.  
	Now, suppose we choose the Gumbel-Hougaard copula with parameters $\theta_1=8.9,$ $\theta_2=3.05$ and $k=8.06$ which violates the condition stated in Theorem\ref{th2}. 
	\begin{figure} \label{fig_1}
		\begin{center}
			\includegraphics[height=2.8in]{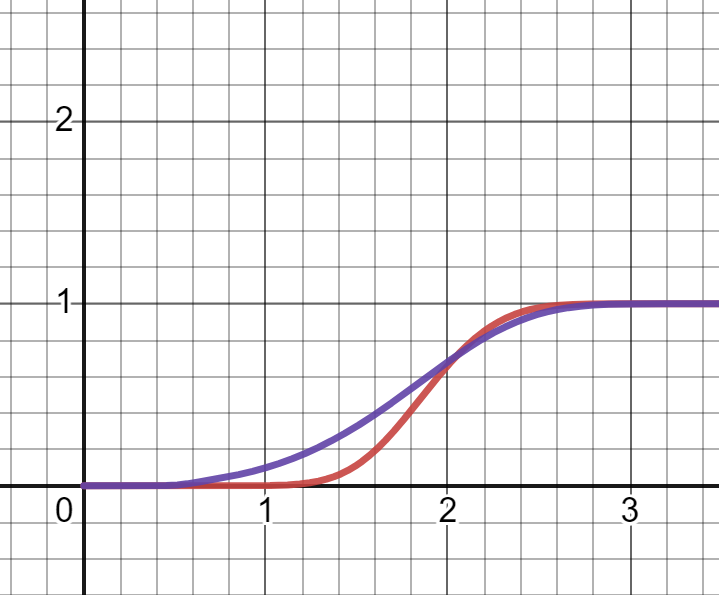}
			\caption{Plots of  ${F}_{X_{2:2}}(x)$ and ${F}_{Y_{2:2}}(x)$ in Counterexample \ref{cex1}, where the red line corresponds to ${F}_{X_{3:3}}(x)$ and blue line corrsponds to ${F}_{Y_{3:3}}(x)$.}
		\end{center}
	\end{figure}
	{Figure 2 plots $F_{X_{2:2}}(x)$ and $F_{Y_{2:2}}(x)$, from which it is evident that when $k \geq 1$, the graph of $F_{X_{2:2}}(x)$ intersects that of $F_{Y_{2:2}}(x) ,$ for some $x\geq 0.$ } 
\end{counterexample}

Now, we establish another result for the case when the shape parameters are connected in majorization order.

\begin{theorem}\label{th4}
Let $X_i\sim EW(\alpha, \lambda,k_i)\hspace{0.1in} (i=1,\ldots,n)$ and $Y_i\sim EW(\alpha, \lambda,l_i)\hspace{0.1in}  (i=1,\ldots,n)$ and the associated Archimedean copula be with generators $\psi_1$ and $\psi_2$, respectively. Further, let $\phi_2\circ\psi_1$ be super-additive and $\alpha t\phi_1^{''}(t)+\phi_1^{'}(t)\geq 0$. Then, for $0<\alpha\leq 1$, we have
$$ \boldsymbol{l}\succeq^{m}\boldsymbol{k} \Rightarrow X_{n:n}\succeq_{st}Y_{n:n}.$$

\end{theorem}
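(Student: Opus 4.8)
The plan is to run the argument of Theorem~\ref{th2}, but with the shape parameters playing the role of the varying coordinates. First I would express the two parallel-system distribution functions through the Archimedean copulas, $F_{X_{n:n}}(x)=\psi_1\Big[\sum_{m=1}^{n}\phi_1\Big(\frac{1-e^{-(x\lambda)^{k_m}}}{1-\bar\alpha e^{-(x\lambda)^{k_m}}}\Big)\Big]$ and the analogous expression for $F_{Y_{n:n}}$ built from $\psi_2$, $\phi_2$ and $\boldsymbol l$. By the super-additivity of $\phi_2\circ\psi_1$ and Lemma~\ref{Pre-lem2.1f}, $F_{Y_{n:n}}(x)\ge\psi_1\Big[\sum_{m=1}^{n}\phi_1\Big(\frac{1-e^{-(x\lambda)^{l_m}}}{1-\bar\alpha e^{-(x\lambda)^{l_m}}}\Big)\Big]$, so it is enough to compare the two quantities built from the single generator $\psi_1$. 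Setting $\delta(\boldsymbol k)=\psi_1\Big[\sum_{m=1}^{n}\phi_1\Big(\frac{1-e^{-(x\lambda)^{k_m}}}{1-\bar\alpha e^{-(x\lambda)^{k_m}}}\Big)\Big]$, the claim reduces to $\delta(\boldsymbol k)\le\delta(\boldsymbol l)$ whenever $\boldsymbol k\preceq^m\boldsymbol l$; since the hypothesis is full majorization, it suffices to prove that $\delta$ is Schur-convex in $\boldsymbol k$, and (unlike in Theorem~\ref{th2}) no separate monotonicity of $\delta$ is needed.

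Next I would apply Lemma~\ref{schur-critetia}. Symmetry of $\delta$ is immediate, so everything rests on the sign of $(k_i-k_j)\big(\partial_{k_i}\delta-\partial_{k_j}\delta\big)$. Writing $u_m=(x\lambda)^{k_m}$ and $v_m=\frac{1-e^{-u_m}}{1-\bar\alpha e^{-u_m}}$ and differentiating, one gets $\partial_{k_i}\delta=\eta(k_i)\,\Gamma(k_i)\,\psi_1'\big[\sum_{m}\phi_1(v_m)\big]$, with $\eta(k_i)=\frac{u_i\ln(x\lambda)}{1-\bar\alpha e^{-u_i}}$ and $\Gamma(k_i)=\frac{\alpha e^{-u_i}/(1-\bar\alpha e^{-u_i})}{\psi_1'(\phi_1(v_i))}$, in exact parallel with the functions $\eta_1,\Gamma_1$ of Theorem~\ref{th2}. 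Since $\psi_1'[\,\cdot\,]\le0$, the Schur condition is controlled entirely by the monotonicity of the product $\eta(k_i)\Gamma(k_i)$ in $k_i$.

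The crux, and the only place the extra hypothesis $\alpha t\phi_1''(t)+\phi_1'(t)\ge0$ is used, is to pin down this monotonicity. Substituting $u_i=e^{k_i\ln(x\lambda)}$ and using $\phi_1'(v)=1/\psi_1'(\phi_1(v))$, a short computation gives $\frac{d}{dk_i}\big(\eta\,\Gamma\big)=(\ln(x\lambda))^2\,u_i\,\Phi'(u_i)$, where $\Phi(u)=u\,\phi_1'(v(u))\,v'(u)$ and $v'(u)=\frac{\alpha e^{-u}}{(1-\bar\alpha e^{-u})^2}$; the nonnegative prefactor $(\ln(x\lambda))^2u_i$ neatly absorbs the sign of $\ln(x\lambda)$, so whether $x\lambda\gtrless1$ becomes irrelevant and the problem collapses to the sign of $\Phi'(u)$ on $(0,\infty)$. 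Expanding $\Phi'(u)=\phi_1'(v)\big(v'+uv''\big)+u(v')^2\phi_1''(v)$ and noting $v'>0$, $v''<0$, $\phi_1'<0$, the term carrying $\phi_1''$ is the one to be tamed through the inequality $\alpha v\phi_1''(v)+\phi_1'(v)\ge0$, together with $0<\alpha\le1$ and the elementary bounds $e^{t}\ge1+t$ already exploited in Lemmas~\ref{dec-2}--\ref{lemma2.7}. I expect the reconciliation of the concavity of $v$ (i.e. $v''<0$) against the bound on $\phi_1''$ forced by the hypothesis to be the main obstacle; once the sign of $\Phi'$ is settled in the direction making $\eta\Gamma$ monotone, Lemma~\ref{schur-critetia} yields the Schur-convexity of $\delta$, and with the reduction above this gives $X_{n:n}\succeq_{st}Y_{n:n}$.
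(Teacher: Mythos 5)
Your setup coincides with the paper's proof step for step: the same reduction via Lemma \ref{Pre-lem2.1f} to a single-generator functional (the paper's $\delta_3$), the same derivative decomposition --- your product $\eta(k_i)\Gamma(k_i)$ is exactly the paper's $I_3(k_i)=\phi_1'(v_i)\,\alpha (\lambda x)^{k_i}\log(\lambda x)e^{-(\lambda x)^{k_i}}/(1-\bar\alpha e^{-(\lambda x)^{k_i}})^2$ --- and the same observation that the $(\ln(x\lambda))^2$ prefactor removes any case split on $x\lambda\gtrless 1$. But you stop precisely where the proof has to be done: you reduce everything to the sign of $\Phi'(u)$ and then say you ``expect'' the hypothesis to settle it. That sign determination is the entire content of this theorem beyond the template of Theorem \ref{th2}. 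The paper carries it out: it expands $\partial I_3/\partial k_i=\phi_1''(v)(dv/dk_i)^2+\phi_1'(v)\,d^2v/dk_i^2$, normalizes by the positive factor $(dv/dk_i)^2/v$, bounds the resulting coefficient of $\phi_1'(v)$ from below via the elementary inequality $\frac{(1-u)e^{u}+\bar\alpha u+\bar\alpha}{u}\,(1-e^{-u})\ge -1$, and only then invokes $\alpha t\phi_1''(t)+\phi_1'(t)\ge 0$ to conclude $\partial I_3/\partial k_i\ge v\,\phi_1''(v)+\phi_1'(v)/\alpha\ge 0$. Without some version of this chain your argument is a plan, not a proof.

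There is also a direction problem you should confront before filling that gap. You correctly note that, for the statement as written, the reduction calls for Schur-\emph{convexity} of $\delta$ in $\boldsymbol k$. But the computation you have set up delivers the opposite: if $I_3$ is increasing in $k_i$, then $(k_i-k_j)\bigl(\partial_{k_i}\delta-\partial_{k_j}\delta\bigr)=\psi_1'[\,\cdot\,]\,(k_i-k_j)\bigl(I_3(k_i)-I_3(k_j)\bigr)\le 0$ because $\psi_1'\le 0$, i.e.\ Schur-\emph{concavity} --- which is exactly what the paper proves. A quick check in the independence case ($\psi_1(t)=e^{-t}$, $\alpha=1$) shows $\delta(\boldsymbol k)=\prod_m\bigl(1-e^{-(\lambda x)^{k_m}}\bigr)$ is genuinely Schur-concave, so the Schur-convexity you are aiming for is false and your route dead-ends. (The paper has its own inconsistency here: it proves Schur-concavity and then declares the displayed implication established, and its illustrating Example \ref{ex2} in fact uses a $\boldsymbol k$ that majorizes $\boldsymbol l$ rather than the reverse.) So the bookkeeping --- which vector majorizes which, and which Schur property the computation actually yields --- must be reconciled before the analytic step can close the argument.
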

    \begin{proof}
The distribution functions of $X_{n:n}$ and $Y_{n:n}$ can be expressed as 
\begin{equation*}
    F_{X_{n:n}}(x) = \psi_{1}\Big[\sum_{m=1}^{n} \phi_{1}\Big(\frac{1-e^{-(x\lambda)^{k_m}}}{1-\bar{\alpha}e^{-(x\lambda)^{k_m}}}\Big)\Big]
\end{equation*}
and
\begin{equation*}
F_{Y_{n:n}}(x) = \psi_{2}\Big[\sum_{m=1}^{n} \phi_{2}\Big(\frac{1-e^{-(x\lambda)^{l_m}}}{1-\bar{\alpha}e^{-(x\lambda)^{l_m}}}\Big)\Big].
\end{equation*}
Now, from Lemma \ref{Pre-lem2.1f}, the super-additivity of $\phi_{2}$ o $\psi_{1}$ yields
\begin{equation*}
   \psi_{2}\Big[\sum_{m=1}^{n} \phi_{2}\Big(\frac{1-e^{-(x\lambda)^{l_m}}}{1-\bar{\alpha}e^{-(x\lambda)^{l_m}}}\Big)\Big]
   \geq \psi_{1}\Big[\sum_{m=1}^{n} \phi_{1}\Big(\frac{1-e^{-(x\lambda)^{l_m}}}{1-\bar{\alpha}e^{-(x\lambda)^{l_m}}}\Big)\Big].
\end{equation*}
So, to establish the required result, we only need to prove that 
\begin{equation*}
    \psi_{1}\Big[\sum_{m=1}^{n} \phi_{1}\Big(\frac{1-e^{-(x\lambda)^{k_m}}}{1-\bar{\alpha}e^{-(x\lambda)^{k_m}}}\Big)\Big]\leq \psi_{1}\Big[\sum_{m=1}^{n} \phi_{1}\Big(\frac{1-e^{-(x\mu)^{l_m}}}{1-\bar{\alpha}e^{-(x\lambda)^{l_m}}}\Big)\Big].
\end{equation*}
For this purpose, let us define     
    \begin{equation*}
        \delta_3(\boldsymbol{k})=\psi_1\Big[\sum_{m=1}^n \phi_1 \Big(\frac{1-e^{-(\lambda x)^{k_m}}}{1-\bar{\alpha}e^{-(\lambda x)^{k_m}}}\Big)\Big],
    \end{equation*}where $\boldsymbol{k}=(k_1,\ldots,k_n)$. Upon, differentiating $\delta_3(\boldsymbol{k})$  with respect to $k_i$, we get
    \begin{equation*}
        \frac{\partial \delta_3(\boldsymbol{k})}{\partial k_i} = \psi_1^{'}\Big[\sum_{m=1}^n \phi_1 \Big(\frac{1-e^{-(\lambda x)^{k_m}}}{1-\bar{\alpha}e^{-(\lambda x)^{k_m}}}\Big)\Big]  \phi_1^{'} \Big(\frac{1-e^{-(\lambda x)^{k_i}}}{1-\bar{\alpha}e^{-(\lambda x)^{k_i}}}\Big)\frac{(\alpha (\lambda x)^{k_i}\log(\lambda x)e^{-(\lambda x)^{k_i}})}{(1-\bar{\alpha}e^{-(\lambda x)^{k_i}})^2}.
    \end{equation*}
    Let us now define a function $I_3(k_i)$ as
    \begin{equation*}
      I_3(k_i) = \phi_1^{'} \Big(\frac{1-e^{-(\lambda x)^{k_i}}}{1-\bar{\alpha}e^{-(\lambda x)^{k_i}}}\Big)\frac{(\alpha (\lambda x)^{k_i}\log(\lambda x)e^{-(\lambda x)^{k_i}})}{(1-\bar{\alpha}e^{-(\lambda x)^{k_i}})^2}
    \end{equation*}
 which, upon differentiating with respect to $k_i$, yields  
    \begin{eqnarray*}
        \frac{\partial I_3(k_i)}{\partial k_i}
        &=& \phi_1^{''} \Big(\frac{1-e^{-(\lambda x)^{k_i}}}{1-\bar{\alpha}e^{-(\lambda x)^{k_i}}}\Big)\bigg(\frac{(\alpha (\lambda x)^{k_i}\log(\lambda x)e^{(\lambda x)^{k_i}})}{(e^{(\lambda x)^{k_i}}-\bar{\alpha})^2}\bigg)^2 \\
        &-& \phi_1^{'} \Big(\frac{1-e^{-(\lambda x)^{k_i}}}{1-\bar{\alpha}e^{-(\lambda x)^{k_i}}}\Big)\frac{\alpha (\lambda x)^{k_i}\log(\lambda x)^2 e^{(\lambda x)^{k_i}\big(((\lambda x)^{k_i}-1)e^{(\lambda x)^{k_i}}+\bar{\alpha}(\lambda x)^{k_i}+\bar{\alpha}\big)}}{(e^{(\lambda x)^{k_i}}-\bar{\alpha})^3}\\
        &=& \phi_1^{''} \Big(\frac{1-e^{-(\lambda x)^{k_i}}}{1-\bar{\alpha}e^{-(\lambda x)^{k_i}}}\Big)\bigg(\frac{(\alpha (\lambda x)^{k_i}\log(\lambda x)e^{(\lambda x)^{k_i}})}{(e^{(\lambda x)^{k_i}}-\bar{\alpha})^2}\bigg)^2 \\
        &&
        - \phi_1^{'} \Big(\frac{1-e^{-(\lambda x)^{k_i}}}{1-\bar{\alpha}e^{-(\lambda x)^{k_i}}}\Big)\frac{\alpha (\lambda x)^{k_i}\log(\lambda x)^2 e^{(\lambda x)^{k_i}\big(((\lambda x)^{k_i}-1)e^{(\lambda x)^{k_i}}+\bar{\alpha}(\lambda x)^{k_i}+\bar{\alpha}\big)}}{(e^{(\lambda x)^{k_i}}-\bar{\alpha})^3}\\
        &=& \phi_1^{''} \Big(\frac{1-e^{-(\lambda x)^{k_i}}}{1-\bar{\alpha}e^{-(\lambda x)^{k_i}}}\Big)\Big(\frac{1-e^{-(\lambda x)^{k_i}}}{1-\bar{\alpha}e^{-(\lambda x)^{k_i}}}\Big)\\
        && - \phi_1^{'} \Big(\frac{1-e^{-(\lambda x)^{k_i}}}{1-\bar{\alpha}e^{-(\lambda x)^{k_i}}}\Big)\frac{\big(((\lambda x)^{k_i}-1)e^{(\lambda x)^{k_i}}+\bar{\alpha}(\lambda x)^{k_i}+\bar{\alpha}\big)(1-e^{-(\lambda x)^{k_i}})}{\alpha (\lambda x)^{k_i}}.
    \end{eqnarray*}
    Now, since for $ 0\leq a\leq 1$ and $x\geq0$, $$\frac{(1-x)e^x+ax+a}{x}(1-e^{-x})\geq -1,$$ 
    we obtain
    $$\frac{\partial I_3(k_i)}{\partial k_i}\geq \phi_1^{''} \Big(\frac{1-e^{-(\lambda x)^{k_i}}}{1-\bar{\alpha}e^{-(\lambda x)^{k_i}}}\Big)\Big(\frac{1-e^{-(\lambda x)^{k_i}}}{1-\bar{\alpha}e^{-(\lambda x)^{k_i}}}\Big) + \phi_1^{'} \Big(\frac{1-e^{-(\lambda x)^{k_i}}}{1-\bar{\alpha}e^{-(\lambda x)^{k_i}}}\Big)\frac{1}{\alpha}\geq 0$$
    as $\alpha t\phi_1^{''}(t)+\phi_1^{'}(t)\geq 0$,  and $I_3(k_i)$ is increasing in $k_i,$ for $i=1,\ldots,n.$\\
    Now, for $i\neq j,$
\begin{align*}
    &(k_i-k_j)\Big(\frac{\partial \delta_3(\boldsymbol{k})}{\partial k_{i}}-\frac{\partial \delta_{3}(\boldsymbol{k})}{\partial k_{j}}\Big)\nonumber\\
    &=(k_i-k_j)\psi_{1}'\Big[\sum_{m=1}^{n}\phi_{1}\Big(\frac{1- e^{-(x\lambda)^{k_m}}}{1-\bar{\alpha}e^{-(x\lambda)^{k_m}}}\Big)\Big][I_3(k_i)-I_3(k_j)]\nonumber\\
    &\leq 0, 
    \end{align*}
which implies $\delta_3(\boldsymbol{k})$ is Schur-concave in $\boldsymbol{k}$ Lemma \ref{schur-critetia}. This completes the proof the theorem.
\end{proof}
\begin{remark}\label{rem2}
	It is useful to observe that the condition  ''$\phi_{2} \circ \psi_1$ is super-additive''  in Theorem \ref{th6} is quite general and is easy to verify for many
	well-known Archimedean copulas. For example, we consider the copula
	$\phi(t)=e^{\frac{\theta}{t}}-e^{\theta}$, for $t\geq 0$, that satisfies the relation $$t\phi^{''}(t)+2\phi^{'}(t)\geq0,$$ where $t\geq0$ and the inverse of $\phi(t)$ is $$\psi(t)=\frac{\theta}{\ln(t+e^{\theta})}.$$ Suppose we have two such copulas, but with parameters $\alpha$ and $\beta$, and we want to find the condition for $\phi_{2} \circ \psi_1$ to be super-additive. Taking double-derivative of $\phi_{2} \circ \psi_1$ with respect to $x$, we get that
	$$(\phi_{2} \circ \psi_1)^{''}(x)=\left(\frac{\beta}{\alpha}\right)\left(\frac{\beta}{\alpha}-1\right)(t+e^{\theta_1})^{\frac{\beta}{\alpha}-2}.$$
\end{remark}

For illustrating the result in Theorem \ref{th4}, let us consider the following example.

\begin{example}\label{ex2}
	Let $X_i \sim EW(\alpha,\lambda,k_i)$ ($i=1,2,3$) and $Y_i \sim EW(\alpha,\lambda,l_i)$ ($i=1,2,3$).  Set $\alpha=0.5$, $\lambda= 4.83$, $(k_1, k_2, k_3)=(3, 0.5, 1)$ and $(l_1, l_2, l_3)=(2, 1.5, 1)$. It is then easy to see that $ (k_1,k_2,k_3) \stackrel{m}{\preceq} (l_1, l_2, l_3)$.  
	Now, suppose we choose the copula $\phi(t)=e^{\frac{\theta}{t}}-e^{\theta}$, for $t\geq 0$, that satisfies the relation  $$\alpha t\phi_1^{''}(t)+\phi_1^{'}(t)\geq0,$$ where $t\geq0$, as $\alpha=0.5$ and the parameters $\theta_1=2.2$ and $\theta_2=2.45$ ensures the super-additivity of $\phi_{2} \circ \psi_1$. The distribution functions of $X_{3:3}$ and $Y_{3:3}$ are,   respectively,
	\begin{equation*}
	F_{X_{3:3}} (x)=\frac{\theta_{1}}{\ln\left(e^{\theta_1\frac{1-\bar{\alpha}{\mathrm{e}}^{-(\lambda x)^{k_{1}}}}{1-{\mathrm{e}}^{-(\lambda x)^{k_{1}}}}}+e^{\theta_1\frac{1-\bar{\alpha}{\mathrm{e}}^{-(\lambda x)^{k_{2}}}}{1-{\mathrm{e}}^{-(\lambda x)^{k_{2}}}}}+e^{\theta_1\frac{1-\bar{\alpha}{\mathrm{e}}^{-(\lambda x)^{k_{3}}}}{1-{\mathrm{e}}^{-(\lambda x)^{k_{3}}}}}-2\mathrm{e}^{\theta_{1}}\right)}
	\end{equation*}
	and
	\begin{equation*}
	F_{Y_{3:3}} (x)=\frac{\theta_{2}}{\ln\left(e^{\theta_2\frac{1-\bar{\alpha}{\mathrm{e}}^{-(\lambda x)^{l_{1}}}}{1-{\mathrm{e}}^{-(\lambda x)^{l_{1}}}}}+e^{\theta_2\frac{1-\bar{\alpha}{\mathrm{e}}^{-(\lambda x)^{l_{2}}}}{1-{\mathrm{e}}^{-(\lambda x)^{l_{2}}}}}+e^{\theta_2\frac{1-\bar{\alpha}{\mathrm{e}}^{-(\lambda x)^{l_{3}}}}{1-{\mathrm{e}}^{-(\lambda x)^{l_{3}}}}}-2\mathrm{e}^{\theta_{2}}\right)}.
	\end{equation*}
	
 Then, $F_{X_{3:3}}(x) \leq F_{Y_{3:3}}(x),$ for all $x\geq 0,$ as established in Theorem \ref{th4}.
	
\end{example}
It is useful to observe that the condition ``$\phi_{2} \circ \psi_1$ is super-additive" provides the copula with generator $\psi_2$ to be more positively dependent than the copula with generator $\psi_1.$
In Theorem \ref{th4}, we have considered $\phi_{2} \circ \psi_1$ to be super-additive which is important to establish the inequality between the survival functions of $X_{n:n}$ and $Y_{n:n}$ when the parameters $\bm{l}$ and $\bm{k}$ are comparable in terms of majorization order. We now present a counterexample which allows us to show that if the condition is violated, then the theorem does not hold.  
\begin{counterexample}\label{cex2}
	Let $X_i \sim EW(\alpha,\lambda,k_i)$ ($i=1,2,3$) and $Y_i \sim EW(\alpha,\mu,l_i)$ ($i=1,2,3$).  Set $\alpha=0.5$, $\lambda= 4.83$, $(k_1, k_2, k_3)=(0.5, 1 , 3)$ and $(l_1, l_2, l_3)=(1, 1.5, 2)$. It is easy to see  $ (k_1,k_2,k_3) \stackrel{m}{\preceq} (l_1, l_2, l_3)$.  
 Now, suppose we choose the copula $\phi(t)=e^{\frac{\theta}{t}}-e^{\theta}$, for $t\geq 0$, that satisfies $$\alpha t\phi^{''}(t)+\phi^{'}(t)\geq0,$$ where $t\geq0$, as $\alpha=0.5$ and the parameters $\theta_1=2.48$ and $\theta_2=2.24$ violate the condition of super-additivity of $\phi_{2} \circ \psi_1$.
	\begin{figure} \label{fig2}
		\begin{center}
			\includegraphics[height=2.8in]{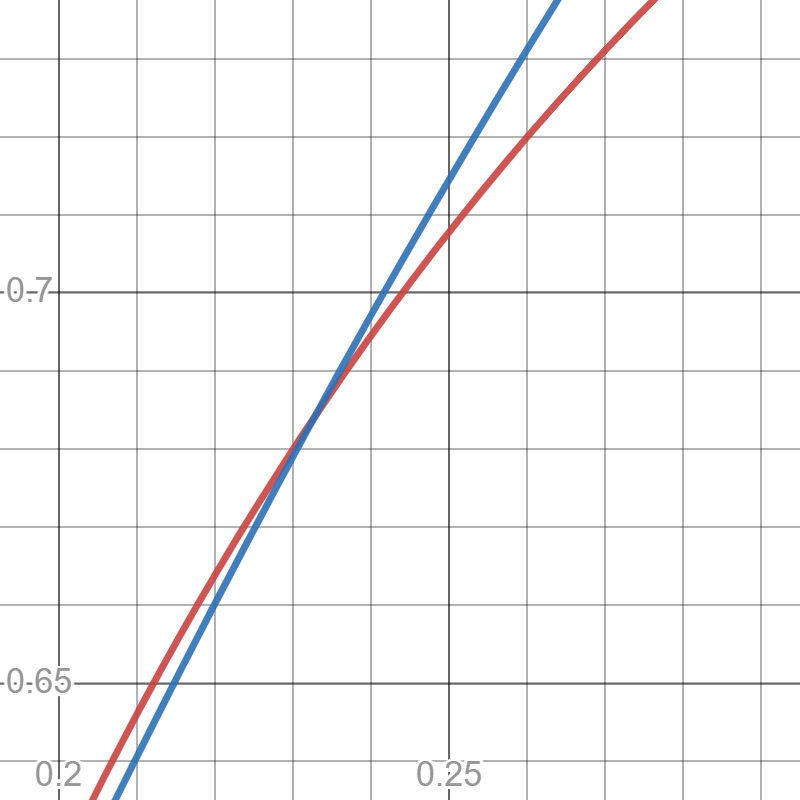}
			\caption{Plots of  ${F}_{X_{3:3}}(x)$ and ${F}_{Y_{3:3}}(x)$ in Counterexample \ref{cex2}. Here, the red line corresponds to ${F}_{X_{3:3}}(x)$ and the blue line corresponds to ${F}_{Y_{3:3}}(x)$.}
		\end{center}
	\end{figure}
 Figure 3 plots $F_{X_{3:3}}(x)$ and $F_{Y_{3:3}}(x)$ from it, is evident that when the condition of super-additivity of $\phi_{2} \circ \psi_1$ is violated in Theorem \ref{th4}, $F_{X_{3:3}}(x)$ is greater than $ F_{Y_{3:3}}(x) $ for some $x\geq 0$.
\end{counterexample}
In Theorem \ref{th4}, if we replace the condition $\alpha t\phi_1^{''}(t)+\phi_1^{'}(t)\geq 0$ by $t\phi_1^{''}(t)+\phi_1^{'}(t)\geq 0$, then we can also compare the smallest order statistics $X_{1:n}$ and $Y_{1:n}$ with respect to the usual stochastic order under the same conditions as in Theorem \ref{th4}. 
\begin{theorem}\label{th5}
Let $X_i\sim EW(\alpha, \lambda,k_i)\hspace{0.1in}(i=1,\ldots,n)$ and $Y_i\sim EW(\alpha, \mu,l_i)\hspace{0.1in}(i=1,\ldots,n)$ and the associated Archimedean survival copulas be with generators $\psi_1$ and $\psi_2$, respectively. Also, let $\phi_2\circ\psi_1$ be super-additive and $t\phi_1^{''}(t)+\phi_1^{'}(t)\geq 0$. Then, for $0<\alpha\leq 1$, we have
$$ \boldsymbol{l}\succeq^{m}\boldsymbol{k} \Rightarrow Y_{1:n}\succeq_{st}X_{1:n}.$$
\end{theorem}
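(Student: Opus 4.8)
The plan is to mirror the proof of Theorem \ref{th4}, but working with the survival copulas and survival functions (since we now deal with the minimum $X_{1:n}$) and with the weakened condition $t\phi_1''(t)+\phi_1'(t)\ge0$. First I would write down the survival functions of the two series systems. Using the survival copula representation together with the survival function $\frac{\alpha e^{-(x\lambda)^{k}}}{1-\bar\alpha e^{-(x\lambda)^k}}$ of $EW(\alpha,\lambda,k)$ (and taking the common scale $\lambda$, as in Theorem \ref{th4}), these are
$$\bar F_{X_{1:n}}(x)=\psi_1\Big[\sum_{m=1}^n\phi_1\Big(\frac{\alpha e^{-(x\lambda)^{k_m}}}{1-\bar\alpha e^{-(x\lambda)^{k_m}}}\Big)\Big],\qquad \bar F_{Y_{1:n}}(x)=\psi_2\Big[\sum_{m=1}^n\phi_2\Big(\frac{\alpha e^{-(x\lambda)^{l_m}}}{1-\bar\alpha e^{-(x\lambda)^{l_m}}}\Big)\Big].$$
Since $Y_{1:n}\succeq_{st}X_{1:n}$ is equivalent to $\bar F_{Y_{1:n}}\ge\bar F_{X_{1:n}}$, the first reduction is exactly as before: by Lemma \ref{Pre-lem2.1f} the super-additivity of $\phi_2\circ\psi_1$ gives $\bar F_{Y_{1:n}}(x)\ge\psi_1\big[\sum_m\phi_1(\bar F_{Y_m}(x))\big]$, so it remains only to compare two expressions built from the same generator $\psi_1$.

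Accordingly I would set $\delta(\boldsymbol k)=\psi_1\big[\sum_{m=1}^n\phi_1\big(\frac{\alpha e^{-(x\lambda)^{k_m}}}{1-\bar\alpha e^{-(x\lambda)^{k_m}}}\big)\big]$ and reduce the theorem to showing that $\delta$ is Schur-concave in $\boldsymbol k$, so that the majorization hypothesis $\boldsymbol l\succeq^m\boldsymbol k$ yields $\delta(\boldsymbol l)\ge\delta(\boldsymbol k)=\bar F_{X_{1:n}}(x)$ and hence $\bar F_{Y_{1:n}}\ge\bar F_{X_{1:n}}$. Differentiating, $\partial\delta/\partial k_i=\psi_1'[\cdots]\,I(k_i)$ with $I(k_i)=\phi_1'(\bar F(k_i))\,\partial\bar F(k_i)/\partial k_i$, and since the argument of $\psi_1'$ is symmetric in the $k_m$, Lemma \ref{schur-critetia} reduces Schur-concavity to proving that $I$ is increasing in $k_i$ (recall $\psi_1'\le0$).

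The main obstacle is precisely this monotonicity of $I$. Writing $u=(\lambda x)^{k_i}$ and $t=\bar F(k_i)=\frac{\alpha e^{-u}}{1-\bar\alpha e^{-u}}$, a direct computation should reduce $I'(k_i)$, after factoring out a manifestly positive quantity (containing $\log^2(\lambda x)$), to the sign of
$$u\,t\,\phi_1''(t)-\phi_1'(t)\big[(1-u)-\bar\alpha(1+u)e^{-u}\big].$$
Here I would use $\phi_1''\ge0$, $\phi_1'\le0$ and the hypothesis $t\phi_1''(t)+\phi_1'(t)\ge0$ (that is, $u\,t\,\phi_1''(t)\ge-u\,\phi_1'(t)$) to bound this expression below by $-\phi_1'(t)\big[1-\bar\alpha(1+u)e^{-u}\big]$, which is non-negative because $-\phi_1'(t)\ge0$ and, by the elementary inequality $e^u\ge1+u$ together with $\bar\alpha\le1$, one has $\bar\alpha(1+u)e^{-u}\le1$. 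This gives $I'(k_i)\ge0$.

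Finally, with $I$ increasing and $\psi_1'\le0$, Schur's criterion (Lemma \ref{schur-critetia}) gives $(k_i-k_j)(\partial_i\delta-\partial_j\delta)\le0$, so $\delta$ is Schur-concave; combined with the reduction above and the majorization hypothesis this completes the proof. I expect the only genuinely delicate point to be the algebraic simplification of $I'(k_i)$ into the displayed sign-determinate form and the careful handling of the sign of $\log(\lambda x)$, which survives only through the even power $\log^2(\lambda x)$ and hence does not affect monotonicity; the decisive analytic inputs are the weakened condition $t\phi_1''(t)+\phi_1'(t)\ge0$ and the convexity inequality $e^u\ge1+u$ already exploited in Lemmas \ref{dec-2} and \ref{dec-3}.
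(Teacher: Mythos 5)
Your architecture follows the paper's (survival--copula representation, super-additivity reduction via Lemma \ref{Pre-lem2.1f}, Schur's criterion in the shape parameters), and your sign analysis of $I'(k_i)$ is correct: the reduction to the sign of $u\,t\,\phi_1''(t)-\phi_1'(t)\big[(1-u)-\bar\alpha(1+u)e^{-u}\big]$ and the bound using $t\phi_1''(t)+\phi_1'(t)\ge0$ together with $(1+u)e^{-u}\le1$ do show that $I$ is increasing in $k_i$. The gap is in the final step. With $\delta(\boldsymbol k)=\psi_1\big[\sum_m\phi_1(\bar F(k_m))\big]$, $I$ increasing and $\psi_1'\le 0$ give $(k_i-k_j)(\partial_i\delta-\partial_j\delta)\le0$, i.e.\ $\delta$ is Schur-\emph{concave}; under $\boldsymbol k\preceq^{m}\boldsymbol l$ this yields $\delta(\boldsymbol k)\ge\delta(\boldsymbol l)$, not the inequality $\delta(\boldsymbol l)\ge\delta(\boldsymbol k)$ you assert. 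The chain you need is $\bar F_{Y_{1:n}}\ge\delta(\boldsymbol l)\ge\delta(\boldsymbol k)=\bar F_{X_{1:n}}$, whose middle link requires Schur-\emph{convexity} of $\delta$, equivalently $I$ \emph{decreasing}. So your (correct) monotonicity result points the wrong way, and the two halves of the argument do not connect: you end up with $\bar F_{Y_{1:n}}$ and $\bar F_{X_{1:n}}$ each bounded below by $\delta(\boldsymbol l)$, which compares nothing.

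This is not a sign slip you can patch. The paper's proof asserts precisely the monotonicity your computation refutes: it works with $I_4=-I$ and claims $I_4$ is increasing (i.e.\ $I$ decreasing), which is what the Schur step needs but contradicts your derivation; your derivation is the correct one. Indeed, take $\psi_1(t)=\psi_2(t)=e^{-t}$ and $\alpha=1$, so that $\phi_2\circ\psi_1$ is the identity (super-additive) and $t\phi_1''(t)+\phi_1'(t)=0$; all hypotheses hold. Then $\bar F_{X_{1:n}}(x)=\exp\big\{-\sum_m(\lambda x)^{k_m}\big\}$, and since $c\mapsto(\lambda x)^{c}=e^{c\ln(\lambda x)}$ is convex, $\sum_m(\lambda x)^{c_m}$ is Schur-convex, so $\boldsymbol k\preceq^{m}\boldsymbol l$ forces $\bar F_{X_{1:n}}(x)\ge\bar F_{Y_{1:n}}(x)$ for all $x$, with strict inequality somewhere unless $\boldsymbol l$ is a permutation of $\boldsymbol k$ --- the reverse of the claimed conclusion. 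So the statement fails in this admissible special case, and no rearrangement of the Schur argument can rescue it; the paper's proof appears to go through only because of a sign error in its claim that $I_4$ is increasing.
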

    \begin{proof}
The distribution functions of $X_{1:n}$ and $Y_{1:n}$ can be written as 
\begin{equation*}
    F_{X_{1:n}}(x) = 1-\psi_{1}\Big[\sum_{m=1}^{n} \phi_{1}\Big(\frac{\alpha e^{-(x\lambda)^{k_m}}}{1-\bar{\alpha}e^{-(x\lambda)^{k_m}}}\Big)\Big]
\end{equation*}
and
\begin{equation*}
    F_{Y_{1:n}}(x) = 1-\psi_{2}\left[\sum_{m=1}^{n} \phi_{2}\left(\frac{\alpha e^{-(x\lambda)^{l_m}}}{1-\bar{\alpha}e^{-(x\lambda)^{l_m}}}\right)\right],
\end{equation*}
respectively.
Now from Lemma \ref{Pre-lem2.1f}, the super-additivity of $\phi_{2}$ o $\psi_{1}$ provides
\begin{equation*}
   \psi_{2}\Big[\sum_{m=1}^{n} \phi_{2}\Big(\frac{\alpha e^{-(x\lambda)^{l_m}}}{1-\bar{\alpha}e^{-(x\lambda)^{l_m}}}\Big)\Big]
   \geq \psi_{1}\Big[\sum_{m=1}^{n} \phi_{1}\Big(\frac{\alpha e^{-(x\lambda)^{l_m}}}{1-\bar{\alpha}e^{-(x\lambda)^{l_m}}}\Big)\Big].
\end{equation*}
Hence, to establish the required result, we only need to prove that
\begin{equation*}
    \psi_{1}\Big[\sum_{m=1}^{n} \phi_{1}\Big(\frac{\alpha e^{-(x\lambda)^{k_m}}}{1-\bar{\alpha}e^{-(x\lambda)^{k_m}}}\Big)\Big]\leq \psi_{1}\Big[\sum_{m=1}^{n} \phi_{1}\Big(\frac{\alpha e^{-(x\mu)^{l_m}}}{1-\bar{\alpha}e^{-(x\lambda)^{l_m}}}\Big)\Big].
\end{equation*}
Let us define     
    \begin{eqnarray*}
        \delta_4(\boldsymbol{k})&=& 1-\psi_1\Big[\sum_{m=1}^n \phi_1 \Big(\frac{\alpha e^{-(\lambda x)^{k_m}}}{1-\bar{\alpha}e^{-(\lambda x)^{k_m}}}\Big)\Big],
    \end{eqnarray*}where $\boldsymbol{k}=(k_1,\ldots,k_n)$. Upon taking partial-derivative of $\delta_4(\boldsymbol{k})$ with respect to $k_i$, we get
    \begin{eqnarray*}
        \frac{\partial \delta_4(\boldsymbol{k})}{\partial k_i} = -\psi_1^{'}\Big[\sum_{i=1}^n \phi_1 \Big(\frac{\alpha e^{-(\lambda x)^{k_i}}}{1-\bar{\alpha}e^{-(\lambda x)^{k_i}}}\Big)\Big]  \phi_1^{'} \Big(\frac{\alpha e^{-(\lambda x)^{k_i}}}{1-\bar{\alpha}e^{-(\lambda x)^{k_i}}}\Big)\frac{((-1)\alpha (\lambda x)^{k_i}\log(\lambda x)e^{-(\lambda x)^{k_i}})}{(1-\bar{\alpha}e^{-(\lambda x)^{k_i}})^2}.
    \end{eqnarray*}
    Next, let us define a function $I_4(k_i)$ as
    \begin{eqnarray*}
      I_4(k_i) = -\phi_1^{'} \Big(\frac{\alpha e^{-(\lambda x)^{k_i}}}{1-\bar{\alpha}e^{-(\lambda x)^{k_i}}}\Big)\frac{((-1)\alpha (\lambda x)^{k_i}\log(\lambda x)e^{-(\lambda x)^{k_i}})}{(1-\bar{\alpha}e^{-(\lambda x)^{k_i}})^2},
    \end{eqnarray*}
 which upon taking partial-derivative with respect to $k_i$ yields  
 \begin{eqnarray*}
        \frac{\partial I_4(k_i)}{\partial k_i}
       &=& -\phi_1^{''} \Big(\frac{\alpha e^{-(\lambda x)^{k_i}}}{1-\bar{\alpha}e^{-(\lambda x)^{k_i}}}\Big)\bigg(\frac{(\alpha (\lambda x)^{k_i}\log(\lambda x)e^{(\lambda x)^{k_i}})}{(e^{(\lambda x)^{k_i}}-\bar{\alpha})^2}\bigg)^2 \\
        &&+ \phi_1^{'} \Big(\frac{\alpha e^{-(\lambda x)^{k_i}}}{1-\bar{\alpha}e^{-(\lambda x)^{k_i}}}\Big)\frac{\alpha (\lambda x)^{k_i}\log(\lambda x)^2 e^{(\lambda x)^{k_i}}\big(((\lambda x)^{k_i}-1)e^{(\lambda x)^{k_i}}+\bar{\alpha}(\lambda x)^{k_i}+\bar{\alpha}\big)}{(e^{(\lambda x)^{k_i}}-\bar{\alpha})^3}\\&=& -\phi_1^{''} \Big(\frac{\alpha e^{-(\lambda x)^{k_i}}}{1-\bar{\alpha}e^{-(\lambda x)^{k_i}}}\Big)\bigg(\frac{(\alpha (\lambda x)^{k_i}\log(\lambda x)e^{(\lambda x)^{k_i}})}{(e^{(\lambda x)^{k_i}}-\bar{\alpha})^2}\bigg)^2 \\&&
        + \phi_1^{'} \Big(\frac{\alpha e^{-(\lambda x)^{k_i}}}{1-\bar{\alpha}e^{-(\lambda x)^{k_i}}}\Big)\frac{\alpha (\lambda x)^{k_i}\log^2(\lambda x) e^{(\lambda x)^{k_i}}\big(((\lambda x)^{k_i}-1)e^{(\lambda x)^{k_i}}+\bar{\alpha}(\lambda x)^{k_i}+\bar{\alpha}\big)}{(e^{(\lambda x)^{k_i}}-\bar{\alpha})^3}\\&=& -\phi_1^{''} \Big(\frac{\alpha e^{-(\lambda x)^{k_i}}}{1-\bar{\alpha}e^{-(\lambda x)^{k_i}}}\Big)\Big(\frac{\alpha e^{-(\lambda x)^{k_i}}}{1-\bar{\alpha}e^{-(\lambda x)^{k_i}}}\Big)\\&& + \phi_1^{'} \Big(\frac{\alpha e^{-(\lambda x)^{k_i}}}{1-\bar{\alpha}e^{-(\lambda x)^{k_i}}}\Big)\frac{\big(((\lambda x)^{k_i}-1)e^{(\lambda x)^{k_i}}+\bar{\alpha}(\lambda x)^{k_i}+\bar{\alpha}\big)(e^{-(\lambda x)^{k_i}})}{(\lambda x)^{k_i}}.
    \end{eqnarray*}
    Now, since for $ 0\leq a\leq 1$ and $x\geq0$, $$\frac{(x-1)e^x+ax+a}{x}e^{-x}\leq 1,$$
    we obtain
    $$\frac{\partial I_4(k_i)}{\partial k_i}\geq -\phi_1^{''} \Big(\frac{\alpha e^{-(\lambda x)^{k_i}}}{1-\bar{\alpha}e^{-(\lambda x)^{k_i}}}\Big)\Big(\frac{\alpha e^{-(\lambda x)^{k_i}}}{1-\bar{\alpha}e^{-(\lambda x)^{k_i}}}\Big) + \phi_1^{'} \Big(\frac{1-e^{-(\lambda x)^{k_i}}}{1-\bar{\alpha}e^{-(\lambda x)^{k_i}}}\Big),$$
   and since $t\phi_1^{''}(t)+\phi_1^{'}(t)\geq 0$, it shows $I_4(k_i)$ is increasing in $k_i$, for $i=1,\ldots,n$.\\
    Finally, for $i\neq j,$
\begin{align*}
    &(k_i-k_j)\Big(\frac{\partial \delta_4(\boldsymbol{k})}{\partial k_{i}}-\frac{\partial \delta_4(\boldsymbol{k})}{\partial k_{j}}\Big)\nonumber\\
    &=(k_i-k_j)\psi_{1}'\Big[\sum_{m=1}^{n}\phi_{1}\Big(\frac{1- e^{-(x\lambda)^{k_m}}}{1-\bar{\alpha}e^{-(x\lambda)^{k_m}}}\Big)\Big][I_4(k_i)-I_4(k_j)]\nonumber\\
    &\leq 0, 
    \end{align*}
which implies $\delta_4(\boldsymbol{k})$ is Schur-concave in $\boldsymbol{k}$ by Lemma \ref{schur-critetia}. This completes the proof of the theorem. 
\end{proof}

In all the previous theorems, we have developed results concerning the usual stochastic order between two extremes, where the tilt parameters for both sets of variables are the same and scalar-valued. Next, we prove another result for comparing two parallel systems containing $n$ number of dependent  components following extended Weibull distribution wherein the dependency is modelled by Archimedean  copulas having different generators and the tilt parameters are connected in weakly sub-majorization order. To establish the following theorem, we need $\phi_2\circ\psi_1$ to be super-additive and 
$t\phi_1^{''}(t)+2\phi_1^{'}(t)\geq 0,$ where $
\phi_{1}$ is the inverse of $\psi_1$.
\begin{theorem}\label{th6}
Let $X_i\sim EW(\alpha_i,\lambda,k)\hspace{0.1in} (i=1,\ldots,n)$ and $Y_i\sim EW(\beta_i, \lambda, k)\hspace{0.1in}(i=1,\ldots,n)$ and the associated Archimedean copulas be with generators $\psi_1$ and $\psi_2$, respectively. Also, let $\phi_2\circ\psi_1$ be super-additive and 
\begin{equation*}
t\phi_1^{''}(t)+2\phi_1^{'}(t)\geq 0.
\end{equation*} Then, we have
$$ \boldsymbol{\alpha}\succeq_{w}\boldsymbol{\beta} \Rightarrow X_{n:n}\succeq_{st}Y_{n:n}.$$
\end{theorem}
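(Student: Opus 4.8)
The plan is to mirror the template of Theorems~\ref{th2}--\ref{th5}, now treating the common-shape, common-scale family with the tilt vector as the free parameter. Writing $\bar{\alpha}_{m}=1-\alpha_{m}$ and $\bar{\beta}_{m}=1-\beta_{m}$, the distribution functions of the two maxima are
\begin{equation*}
F_{X_{n:n}}(x)=\psi_{1}\Big[\sum_{m=1}^{n}\phi_{1}\Big(\frac{1-e^{-(x\lambda)^{k}}}{1-\bar{\alpha}_{m}e^{-(x\lambda)^{k}}}\Big)\Big],\qquad F_{Y_{n:n}}(x)=\psi_{2}\Big[\sum_{m=1}^{n}\phi_{2}\Big(\frac{1-e^{-(x\lambda)^{k}}}{1-\bar{\beta}_{m}e^{-(x\lambda)^{k}}}\Big)\Big].
\end{equation*}
By Lemma~\ref{Pre-lem2.1f}, the super-additivity of $\phi_{2}\circ\psi_{1}$ yields $F_{Y_{n:n}}(x)\ge\psi_{1}\big[\sum_{m}\phi_{1}\big((1-e^{-(x\lambda)^{k}})/(1-\bar{\beta}_{m}e^{-(x\lambda)^{k}})\big)\big]$, exactly as in the earlier proofs. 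Hence it suffices to show $\delta_{5}(\boldsymbol{\alpha})\le\delta_{5}(\boldsymbol{\beta})$, where $\delta_{5}(\boldsymbol{\alpha})=\psi_{1}\big[\sum_{m=1}^{n}\phi_{1}\big((1-e^{-(x\lambda)^{k}})/(1-\bar{\alpha}_{m}e^{-(x\lambda)^{k}})\big)\big]$, so that the whole problem reduces to a monotonicity-plus-Schur statement about $\delta_{5}$.

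Second, I would fix $x$ and abbreviate $u=e^{-(\lambda x)^{k}}\in(0,1)$ and $t_{m}=(1-u)/(1-\bar{\alpha}_{m}u)$, so that $\delta_{5}(\boldsymbol{\alpha})=\psi_{1}\big[\sum_{m}\phi_{1}(t_{m})\big]$. Direct differentiation gives $\frac{\partial\delta_{5}}{\partial\alpha_{i}}=\psi_{1}'\big[\sum_{m}\phi_{1}(t_{m})\big]\,\phi_{1}'(t_{i})\,(\partial t_{i}/\partial\alpha_{i})$; since $\psi_{1}$ and its inverse $\phi_{1}$ are both decreasing and $t_{i}$ is decreasing in $\alpha_{i}$, this derivative is non-positive, so $\delta_{5}$ is decreasing in each coordinate. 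The crux is the Schur property. Setting $J(\alpha)=\phi_{1}'(t(\alpha))\,t'(\alpha)$ with $t(\alpha)=(1-u)/(1-\bar{\alpha}u)$, the key computation I expect to carry out is the factorisation
\begin{equation*}
J'(\alpha)=\frac{u^{2}(1-u)}{(1-\bar{\alpha}u)^{3}}\big[t\,\phi_{1}''(t)+2\phi_{1}'(t)\big],
\end{equation*}
in which the bracket is precisely the hypothesis $t\phi_{1}''(t)+2\phi_{1}'(t)\ge0$ and the prefactor is positive; this shows $J$ is increasing. Consequently
\begin{equation*}
(\alpha_{i}-\alpha_{j})\Big(\frac{\partial\delta_{5}}{\partial\alpha_{i}}-\frac{\partial\delta_{5}}{\partial\alpha_{j}}\Big)=\psi_{1}'\big[\textstyle\sum_{m}\phi_{1}(t_{m})\big]\,(\alpha_{i}-\alpha_{j})\big[J(\alpha_{i})-J(\alpha_{j})\big]\le0,
\end{equation*}
so $\delta_{5}$ is Schur-concave by Lemma~\ref{schur-critetia}.

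Finally, since $\delta_{5}$ is decreasing and Schur-concave, $-\delta_{5}$ is increasing and Schur-convex, so by Theorem~A.8 of \cite{Marshall2011} applied to weak submajorization, $\boldsymbol{\beta}\preceq_{w}\boldsymbol{\alpha}$ (that is, $\boldsymbol{\alpha}\succeq_{w}\boldsymbol{\beta}$) gives $\delta_{5}(\boldsymbol{\alpha})\le\delta_{5}(\boldsymbol{\beta})$; combined with the copula step this yields $F_{X_{n:n}}\le F_{Y_{n:n}}$, i.e.\ $X_{n:n}\succeq_{st}Y_{n:n}$. I expect the single genuine obstacle to be the factorisation displayed above, since it is the step where the extended-Weibull form of $t(\alpha)$ and the coefficient $2$ in the hypothesis must combine exactly; one must also confirm $1-\bar{\alpha}u>0$ for every admissible tilt (which holds because $\bar{\alpha}u<1$ whenever $\alpha>0$, so no restriction $\alpha\le1$ is needed here, in contrast to the earlier theorems), so that all the sign claims are legitimate. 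Matching the correct weak-majorization direction, \emph{sub}majorization here versus the \emph{super}majorization of Theorem~\ref{th2}, is the other place where care is required.
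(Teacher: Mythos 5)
Your proposal is correct and follows essentially the same route as the paper: the same reduction via Lemma \ref{Pre-lem2.1f}, the same function $\delta_{5}$, the same key quantity (your $J$ is the paper's $\Gamma_{5}$), and the same factorisation producing the hypothesis $t\phi_{1}''(t)+2\phi_{1}'(t)\ge 0$ as the bracketed term with a positive prefactor $\frac{u^{2}(1-u)}{(1-\bar{\alpha}u)^{3}}$. Your side remarks — that no restriction $\alpha\le 1$ is needed since $\bar{\alpha}u<1$ for any $\alpha>0$, and that the relevant weak-majorization direction is submajorization — are also consistent with the paper's statement and argument.
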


\begin{proof}

The distribution functions of $X_{n:n}$ and $Y_{n:n}$ can be written as 
\begin{equation*}
    F_{X_{n:n}}(x) = \psi_{1}\Big[\sum_{m=1}^{n} \phi_{1}\Big(\frac{1-e^{-(x\lambda)^{k}}}{1-\bar{\alpha_m}e^{-(x\lambda)^{k}}}\Big)\Big]
\end{equation*}
and
\begin{equation*}
    F_{Y_{n:n}}(x) = \psi_{2}\Big[\sum_{m=1}^{n} \phi_{2}\Big(\frac{1-e^{-(x\lambda)^{k}}}{1-\bar{\beta_m}e^{-(x\lambda)^{k}}}\Big)\Big],
\end{equation*}
respectively.
Now from Lemma \ref{Pre-lem2.1f}, the super-additivity of $\phi_{2}$ o $\psi_{1}$ provides
\begin{equation*}
   \psi_{2}\Big[\sum_{m=1}^{n} \phi_{2}\Big(\frac{1-e^{-(x\lambda)^{k}}}{1-\bar{\beta_m}e^{-(x\lambda)^{k}}}\Big)\Big]
   \geq \psi_{1}\Big[\sum_{m=1}^{n} \phi_{1}\Big(\frac{1-e^{-(x\lambda)^{k}}}{1-\bar{\beta_m}e^{-(x\lambda)^{k}}}\Big)\Big].
\end{equation*}
Hence, to establish the required result, we only need to prove  that
\begin{equation*}
    \psi_{1}\Big[\sum_{m=1}^{n} \phi_{1}\Big(\frac{1-e^{-(x\lambda)^{k}}}{1-\bar{\beta_m}e^{-(x\lambda)^{k}}}\Big)\Big]\geq \psi_{1}\Big[\sum_{m=1}^{n} \phi_{1}\Big(\frac{1-e^{-(x\lambda)^{k}}}{1-\bar{\alpha_m}e^{-(x\lambda)^{k}}}\Big)\Big].
\end{equation*}
Let us define     
    \begin{eqnarray*}
        \delta_5(\boldsymbol{\alpha})&=&\psi_1\Big[\sum_{m=1}^n \phi_1 \Big(\frac{1-e^{-(\lambda x)^{k}}}{1-\bar{\alpha_m}e^{-(\lambda x)^{k}}}\Big)\Big].
    \end{eqnarray*} Now, differentiating $\delta_5(\boldsymbol{\alpha})$ with respect to $\alpha_i$, we get 
\begin{eqnarray*}
\frac{\partial \delta_5(\boldsymbol{\alpha})}{\partial \alpha_i}&=&\psi_1^{'}\Big[\sum_{i=1}^n \phi_1 \Big(\frac{1-e^{-(\lambda x)^{k}}}{1-\bar{\alpha_i}e^{-(\lambda x)^{k}}}\Big)\Big]\phi_1^{'} \Big(\frac{1-e^{-(\lambda x)^{k}}}{1-\bar{\alpha_i}e^{-(\lambda x)^{k}}}\Big)\frac{(-1)(1-e^{-(\lambda x)^k})e^{-(\lambda x)^k}}{(1-\bar{\alpha_i}e^{-(\lambda x)^k} )^2} \leq 0.
\end{eqnarray*}
For $i=1,\dots,n$, let
\begin{eqnarray*}
\Gamma_5(\alpha_i)&=&\phi_1^{'} \Big(\frac{1-e^{-(\lambda x)^{k}}}{1-\bar{\alpha_i}e^{-(\lambda x)^{k}}}\Big)\frac{(-1)(1-e^{-(\lambda x)^k})e^{-(\lambda x)^k}}{(1-\bar{\alpha_i}e^{-(\lambda x)^k} )^2}.
\end{eqnarray*}
Upon partial derivative of $\Gamma_5(\alpha_i)$ with respect to  $\alpha_i$, we get 
{\fontsize{9}{12}\selectfont\begin{eqnarray*}
\frac{\partial \Gamma_5(\alpha_i)}{\partial \alpha_i}&=&\phi_1^{''}\Big(\frac{1-e^{-(\lambda x)^{k}}}{1-\bar{\alpha_i}e^{-(\lambda x)^{k}}}\Big)\Big(\frac{(-1)(1-e^{-(\lambda x)^k})e^{-(\lambda x)^k}}{(1-\bar{\alpha_i}e^{-(\lambda x)^k} )^2}\Big)^2+2\phi_1^{'}\Big(\frac{1-e^{-(\lambda x)^{k}}}{1-\bar{\alpha_i}e^{-(\lambda x)^{k}}}\Big)\frac{(1-e^{-(\lambda x)^{k}})e^{-2(\lambda x)^k}}{(1-\bar{\alpha_i}e^{-(\lambda x)^{k}})^3}\\
&=&\frac{(1-e^{-(\lambda x)^{k}})e^{-2(\lambda x)^k}}{(1-\bar{\alpha_i}e^{-(\lambda x)^{k}})^3}\Bigg[\phi_1^{''} \Big(\frac{1-e^{-(\lambda x)^{k}}}{1-\bar{\alpha_i}e^{-(\lambda x)^{k}}}\Big)\frac{1-e^{-(\lambda x)^{k}}}{1-\bar{\alpha_i}e^{-(\lambda x)^{k}}}+2\phi_1^{'} \Big(\frac{1-e^{-(\lambda x)^{k}}}{1-\bar{\alpha_i}e^{-(\lambda x)^{k}}}\Big)\Bigg]\geq 0,
\end{eqnarray*}}
since $
t\phi_1^{''}(t)+2\phi_1^{'}(t)\geq 0.$
Hence, 
$$ 
    (\alpha_i-\alpha_j)\Big(\frac{\partial \delta_5(\boldsymbol{\alpha})}{\partial \alpha_{i}}-\frac{\partial \delta_5(\boldsymbol{\alpha})}{\partial \alpha_{j}}\Big)
    =(\alpha_i-\alpha_j)\psi_{1}'\Big[\sum_{m=1}^{n}\phi_{1}\Big(\frac{1-e^{-(x\lambda)^k}}{1-\bar{\alpha_m}e^{-(x\lambda)^k}}\Big)\Big][\Gamma_5(\alpha_i)-\Gamma_5(\alpha_j)]\leq 0 .
$$
{and so, $\delta_5(\boldsymbol{\alpha})$ is decreasing and Schur-concave in $\alpha,$ from Lemma \ref{schur-critetia}. This completes the proof of the theorem.}
\end{proof}
Similarly, we can also derive conditions under which two series systems are comparable when the tilt parameter vectors are connected by weakly super-majorization order, as done in the following theorem.

\begin{theorem}\label{th7}
Let $X_i\sim EW(\alpha_i,\lambda,k) $ $(i=1,\ldots,n)$ and $Y_i\sim EW(\beta_i, \lambda, k)$ for $(i=1,\ldots,n)$ and the associated Archimedean survival copulas be with generators $\psi_1$ and $\psi_2$, respectively. Further, let $\phi_2\circ\psi_1$ be super-additive. Then, 
$$ \boldsymbol{\alpha}\succeq^{w}\boldsymbol{\beta} \Rightarrow X_{1:n}\succeq_{st}Y_{1:n}.$$
\end{theorem}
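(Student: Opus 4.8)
The plan is to treat Theorem \ref{th7} as the series (survival-copula) analogue of the parallel result in Theorem \ref{th6}, exactly as Theorem \ref{th5} is the series analogue of Theorem \ref{th4}. First I would record that, since each component has survival function $\bar F(x)=\alpha e^{-(x\lambda)^{k}}/(1-\bar\alpha e^{-(x\lambda)^{k}})$, the two series systems admit the Archimedean survival-copula representations
\[
\bar F_{X_{1:n}}(x)=\psi_{1}\Big[\sum_{m=1}^{n}\phi_{1}\Big(\tfrac{\alpha_m e^{-(x\lambda)^{k}}}{1-\bar\alpha_m e^{-(x\lambda)^{k}}}\Big)\Big],\qquad
\bar F_{Y_{1:n}}(x)=\psi_{2}\Big[\sum_{m=1}^{n}\phi_{2}\Big(\tfrac{\beta_m e^{-(x\lambda)^{k}}}{1-\bar\beta_m e^{-(x\lambda)^{k}}}\Big)\Big].
\]
Because both systems share the common scale $\lambda$ and shape $k$ and differ only through the tilt vectors and the generators, this pins the whole comparison on the dependence on $\boldsymbol\alpha$ through $\psi_1$.

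The second step strips away the difference between the generators by means of Lemma \ref{Pre-lem2.1f}. Super-additivity of $\phi_2\circ\psi_1$ gives $\psi_{1}[\sum_m\phi_1(v_m)]\le\psi_{2}[\sum_m\phi_2(v_m)]$ for every argument vector; evaluating this at the $\boldsymbol\beta$-marginals reduces the required survival-function inequality to a single-generator comparison between
\[
\psi_{1}\Big[\sum_{m=1}^{n}\phi_{1}\Big(\tfrac{\beta_m e^{-(x\lambda)^{k}}}{1-\bar\beta_m e^{-(x\lambda)^{k}}}\Big)\Big]\ \text{and}\ \psi_{1}\Big[\sum_{m=1}^{n}\phi_{1}\Big(\tfrac{\alpha_m e^{-(x\lambda)^{k}}}{1-\bar\alpha_m e^{-(x\lambda)^{k}}}\Big)\Big]
\]
under the hypothesis $\boldsymbol\alpha\succeq^{w}\boldsymbol\beta$. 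This is the same reduction device employed in Theorems \ref{th5} and \ref{th6}, and the super-additivity step is what fixes the direction of the comparison.

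For fixed $x$ I would then set $\delta(\boldsymbol\alpha)=\psi_{1}[\sum_m\phi_1(\bar F_m^{\alpha_m})]$ and appeal to Theorem A.8 of \cite{Marshall2011}: because the majorization hypothesis is weak supermajorization $\succeq^{w}$, it suffices to show that $\delta$ is monotone and Schur-concave in $\boldsymbol\alpha$. Writing $\partial\delta/\partial\alpha_i=\psi_1'[\cdots]\,\Gamma(\alpha_i)$ with $\Gamma(\alpha_i)=\phi_1'(\bar F_m^{\alpha_i})\,\partial_{\alpha_i}\bar F_m^{\alpha_i}$ and $\partial_{\alpha_i}\bar F_m^{\alpha_i}\ge0$, monotonicity of $\delta$ is immediate from $\psi_1'\le0$ and $\phi_1'\le0$. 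By Lemma \ref{schur-critetia}, Schur-concavity reduces to the monotonicity of the one-variable function $\Gamma$: after differentiating and factoring out the nonnegative prefactor, what remains is the bracket $\phi_1''(t)(1-t)-2\phi_1'(t)$ with $t=\bar F_m^{\alpha_i}\in[0,1]$.

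The crux of the argument is the sign of this bracket, and here the series system behaves more favourably than the parallel one. Since $\psi_1$ is a genuine generator, $\phi_1=\psi_1^{-1}$ is decreasing and convex, so $\phi_1'\le0$ and $\phi_1''\ge0$; hence both $\phi_1''(t)(1-t)\ge0$ and $-2\phi_1'(t)\ge0$, and the bracket is automatically nonnegative. This is precisely why, unlike Theorem \ref{th6}, no auxiliary condition such as $t\phi_1''(t)+2\phi_1'(t)\ge0$ needs to be imposed. With $\Gamma$ increasing, the Schur criterion yields $(\alpha_i-\alpha_j)(\partial_i\delta-\partial_j\delta)\le0$, so $\delta$ is Schur-concave, and the weak-supermajorization hypothesis together with the reduction of the second paragraph delivers the stated ordering. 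The only points requiring genuine care are the bookkeeping of signs in $\partial_{\alpha_i}\Gamma(\alpha_i)$ and the verification that $\bar F_m^{\alpha_i}$ indeed stays in $[0,1]$ so that the factor $1-t$ is nonnegative.
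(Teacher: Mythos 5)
Your proposal follows the paper's own proof of Theorem \ref{th7} essentially step for step: the same Archimedean survival--copula representations of $\bar F_{X_{1:n}}$ and $\bar F_{Y_{1:n}}$, the same reduction via Lemma \ref{Pre-lem2.1f}, the same auxiliary function $\delta_{51}(\boldsymbol{\alpha})=\psi_1\big[\sum_m\phi_1\big(\alpha_m e^{-(\lambda x)^k}/(1-\bar\alpha_m e^{-(\lambda x)^k})\big)\big]$ shown to be increasing and Schur-concave through Lemma \ref{schur-critetia} and the monotonicity of the one-variable factor $\Gamma_{51}$, and the same observation that the resulting bracket $\phi_1''(t)(1-t)-2\phi_1'(t)$ is automatically nonnegative because $\phi_1=\psi_1^{-1}$ is decreasing and convex (your factor $1-t$ equals $(1-e^{-(\lambda x)^k})/(1-\bar\alpha_i e^{-(\lambda x)^k})$, which is exactly the factor in the paper's display), so that, unlike Theorem \ref{th6}, no side condition on $\phi_1$ is needed. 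On all of these points your computation agrees with the paper's.

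There is, however, one caveat, which your write-up shares with (and inherits from) the paper's own proof: when the two halves are chained together they establish the reverse of the stated conclusion. Lemma \ref{Pre-lem2.1f} gives $\bar F_{Y_{1:n}}(x)=\psi_2\big[\sum_m\phi_2(\cdot)\big]\ge\psi_1\big[\sum_m\phi_1(\cdot)\big]=\delta_{51}(\boldsymbol{\beta})$, which is a \emph{lower} bound on $\bar F_{Y_{1:n}}$, while the Schur argument (increasing plus Schur-concave, applied with $\boldsymbol{\beta}\preceq^{w}\boldsymbol{\alpha}$ via Theorem 3.A.8 of Marshall--Olkin) gives $\delta_{51}(\boldsymbol{\beta})\ge\delta_{51}(\boldsymbol{\alpha})=\bar F_{X_{1:n}}$. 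The only inequality this chain can produce is $\bar F_{Y_{1:n}}\ge\bar F_{X_{1:n}}$, i.e.\ $Y_{1:n}\succeq_{st}X_{1:n}$, not $X_{1:n}\succeq_{st}Y_{1:n}$; to get the stated direction one would need an upper bound on $\bar F_{Y_{1:n}}$, i.e.\ super-additivity of $\phi_1\circ\psi_2$ rather than of $\phi_2\circ\psi_1$, together with the opposite Schur property. Your second paragraph leaves the orientation of the ``single-generator comparison'' unspecified and then asserts that the argument ``delivers the stated ordering,'' which papers over this sign reversal. Since the defect lies in the theorem statement (or in the orientation of the super-additivity hypothesis) rather than in anything you introduced, you should either flag the discrepancy explicitly or state the conclusion in the direction your inequalities actually support.
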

\begin{proof}
First, to establish the required result, we only need to prove that 
\begin{equation}
    \psi_{1}\Big[\sum_{m=1}^{n} \phi_{1}\Big(\frac{\beta_{m}e^{-(x\lambda)^{k}}}{1-\bar{\beta_m}e^{-(x\lambda)^{k}}}\Big)\Big]\geq \psi_{1}\Big[\sum_{m=1}^{n} \phi_{1}\Big(\frac{\alpha_{m}e^{-(x\lambda)^{k}}}{1-\bar{\alpha_m}e^{-(x\lambda)^{k}}}\Big)\Big].
\end{equation}
Let us define     
    \begin{eqnarray*}
        \delta_{51}(\boldsymbol{\alpha})&=&\psi_1\Big[\sum_{m=1}^n \phi_1 \Big(\frac{\alpha_{m}e^{-(\lambda x)^{k}}}{1-\bar{\alpha_m}e^{-(\lambda x)^{k}}}\Big)\Big].
    \end{eqnarray*} Upon differentiating $\delta_{51}(\boldsymbol{\alpha})$ with respect to $\alpha_i$, we get 
\begin{eqnarray*}
\frac{\partial \delta_{51}(\boldsymbol{\alpha})}{\partial \alpha_i}&=&\psi_1^{'}\Big[\sum_{i=1}^n \phi_1 \Big(\frac{\alpha_{m}e^{-(\lambda x)^{k}}}{1-\bar{\alpha_i}e^{-(\lambda x)^{k}}}\Big)\Big]\phi_1^{'} \Big(\frac{\alpha_{m}e^{-(\lambda x)^{k}}}{1-\bar{\alpha_i}e^{-(\lambda x)^{k}}}\Big)\frac{(1-e^{-(\lambda x)^k})e^{-(\lambda x)^k}}{(1-\bar{\alpha_i}e^{-(\lambda x)^k} )^2} \geq 0.
\end{eqnarray*}
Now, for $i=1,\dots,n$, let  
\begin{eqnarray*}
\Gamma_{51}(\alpha_i)&=&\phi_1^{'} \Big(\frac{\alpha_{m}e^{-(\lambda x)^{k}}}{1-\bar{\alpha_i}e^{-(\lambda x)^{k}}}\Big)\frac{(1-e^{-(\lambda x)^k})e^{-(\lambda x)^k}}{(1-\bar{\alpha_i}e^{-(\lambda x)^k} )^2}.
\end{eqnarray*}
Upon taking partial derivative of $\Gamma_{51}(\alpha_i)$ with respect to  $\alpha_i$, we get 
{\fontsize{9}{12}\selectfont\begin{eqnarray*}
\frac{\partial \Gamma_{51}(\alpha_i)}{\partial \alpha_i}&=&\phi_1^{''}\Big(\frac{\alpha_{m}e^{-(\lambda x)^{k}}}{1-\bar{\alpha_i}e^{-(\lambda x)^{k}}}\Big)\Big(\frac{(1-e^{-(\lambda x)^k})e^{-(\lambda x)^k}}{(1-\bar{\alpha_i}e^{-(\lambda x)^k} )^2}\Big)^2-2\phi_1^{'}\Big(\frac{\alpha_{m}e^{-(\lambda x)^{k}}}{1-\bar{\alpha_i}e^{-(\lambda x)^{k}}}\Big)\frac{(1-e^{-(\lambda x)^{k}})e^{-2(\lambda x)^k}}{(1-\bar{\alpha_i}e^{-(\lambda x)^{k}})^3}\\
&=&\frac{(1-e^{-(\lambda x)^{k}})e^{-2(\lambda x)^k}}{(1-\bar{\alpha_i}e^{-(\lambda x)^{k}})^3}\Bigg[\phi_1^{''} \Big(\frac{\alpha_{m}e^{-(\lambda x)^{k}}}{1-\bar{\alpha_i}e^{-(\lambda x)^{k}}}\Big)\frac{1-e^{-(\lambda x)^{k}}}{1-\bar{\alpha_i}e^{-(\lambda x)^{k}}}-2\phi_1^{'} \Big(\frac{\alpha_{m}e^{-(\lambda x)^{k}}}{1-\bar{\alpha_i}e^{-(\lambda x)^{k}}}\Big)\Bigg]\geq 0.
\end{eqnarray*}}

Hence, 
\begin{align}\label{th7_1}
    (\alpha_i-\alpha_j)\Big(\frac{\partial \delta_{51}(\boldsymbol{\alpha})}{\partial \alpha_{i}}-\frac{\partial \delta_{51}(\boldsymbol{\alpha})}{\partial \alpha_{j}}\Big)
    &=(\alpha_i-\alpha_j)\psi_{1}'\Big[\sum_{m=1}^{n}\phi_{1}\Big(\frac{\alpha_{m}e^{-(x\lambda)^k}}{1-\bar{\alpha_m}e^{-(x\lambda)^k}}\Big)\Big][\Gamma_5(\alpha_i)-\Gamma_5(\alpha_j)]\nonumber\\
    &\leq 0 
\end{align}
and so we have $\delta_{51}(\boldsymbol{\alpha})$ to be increasing and Schur-concave in $\alpha,$ from Lemma \ref{schur-critetia}. This completes the proof of the theorem.
\end{proof}    

For the purpose of illustrating Theorem \ref{th6}, present the following example. 
\begin{example}\label{ex3}
	Let $X_i \sim EW(\alpha_i,\lambda,k)$ and $Y_i \sim EW(\beta_i,\mu,k)$ for $i=1,2,3$. Set $k=5.67$, $\lambda= 5.37$  $(\alpha_1, \alpha_2, \alpha_3)=(0.4, 0.9, 0.1)$ and $(\beta_1, \beta_2, \beta_3)=(0.5, 0.8, 0.1)$. It is now easy to see that $(\alpha_1, \alpha_2, \alpha_3)\succeq_{w}(\beta_1, \beta_2, \beta_3)$.  
	Now, suppose we choose the copula $\phi(t)=e^{\frac{\theta}{t}}-e^{\theta}$, for $t\geq 0$, that satisfies $$ t\phi^{''}(t)+2\phi^{'}(t)\geq0,$$ where $t\geq0,$ and the parameters $\theta_1=2.4$ and $\theta_2=3.7$ which ensure the super-additivity of $\phi_{2} \circ \psi_1$. The distribution functions of $X_{3:3}$ and $Y_{3:3}$ are, respectively,
	
	\begin{equation*}
	F_{X_{3:3}} (x)=\frac{\theta_{1}}{\ln\left(e^{\theta_1\frac{1-\bar{\alpha_{1}}{\mathrm{e}}^{-(\lambda x)^{k}}}{1-{\mathrm{e}}^{-(\lambda x)^{k}}}}+e^{\theta_1\frac{1-\bar{\alpha_{2}}{\mathrm{e}}^{-(\lambda x)^{k}}}{1-{\mathrm{e}}^{-(\lambda x)^{k}}}}+e^{\theta_1\frac{1-\bar{\alpha_{3}}{\mathrm{e}}^{-(\lambda x)^{k}}}{1-{\mathrm{e}}^{-(\lambda x)^{k}}}}-2\mathrm{e}^{\theta_{1}}\right)}
	\end{equation*}
	and
	\begin{equation*}
	F_{Y_{3:3}} (x)=\frac{\theta_{2}}{\ln\left(e^{\theta_2\frac{1-\bar{\beta_{1}}{\mathrm{e}}^{-(\lambda x)^{k}}}{1-{\mathrm{e}}^{-(\lambda x)^{k}}}}+e^{\theta_2\frac{1-\bar{\beta_{2}}{\mathrm{e}}^{-(\lambda x)^{k}}}{1-{\mathrm{e}}^{-(\lambda x)^{k}}}}+e^{\theta_2\frac{1-\bar{\beta_{3}}{\mathrm{e}}^{-(\lambda x)^{k}}}{1-{\mathrm{e}}^{-(\lambda x)^{k}}}}-2\mathrm{e}^{\theta_{2}}\right)}.
	\end{equation*}
Then, $F_{X_{3:3}}(x) \leq F_{Y_{3:3}}(x) ,$ for all $x\geq 0,$ as established in Theorem \ref{th6}.
\end{example}
As in Counterexample \ref{cex2}, in the counterexample below, we show that if we violate the condition "$\phi_2\circ\psi_1$ is super-additive'' in Theorem \ref{th6}, then the distribution functions of $X_{n:n}$ and $Y_{n:n}$ cross each other.
\begin{counterexample} \label{cex3}
Let $X_i \sim EW(\alpha_i,\lambda,k)$ and $Y_i \sim EW(\beta_i,\mu,k),$ for $i=1,2,3$.  Set $k=3.16$, $\lambda= 12.5$, $(\alpha_1, \alpha_2, \alpha_3)=(0.82,0.85,0.95)$ and $(\beta_1, \beta_2, \beta_3)=(0.4,0.84,0.87)$. It is easy to see that $(\alpha_1, \alpha_2, \alpha_3)\succeq_{w}(\beta_1, \beta_2, \beta_3)$.  
Now, suppose we choose the copula $\phi(t)=e^{\frac{\theta}{t}}-e^{\theta}$, for $t\geq 0$, that satisfies $$ t\phi^{''}(t)+2\phi^{'}(t)\geq0,$$ where $t\geq0,$ and the parameters $\theta_1=22.6$ and $\theta_2=10.7$ which violate the condition of super-additivity of $\phi_{2} \circ \psi_1$. 
\begin{figure} \label{fig3}
	\begin{center}
		\includegraphics[height=2.8in]{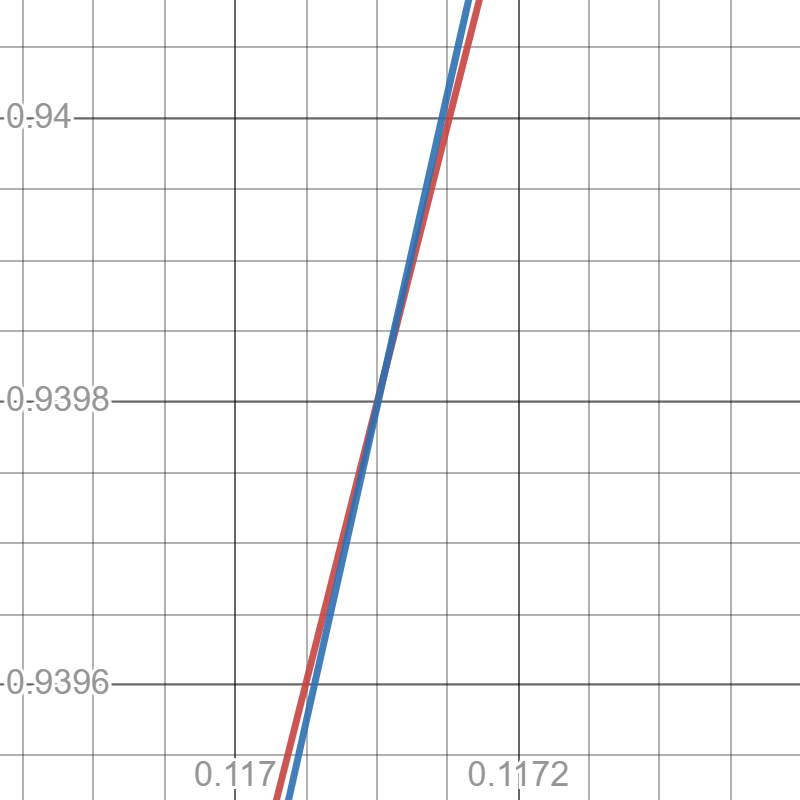}
		\caption{Plots of  ${F}_{X_{3:3}}(x)$ and ${F}_{Y_{3:3}}(x)$ as in Counterexample \ref{cex3}. Here, the red line corresponds to ${F}_{X_{3:3}}(x)$ and the blue line corresponds to ${F}_{Y_{3:3}}(x)$.}
	\end{center}
\end{figure}
Figure 4 plots $F_{X_{3:3}}(x)$ and $F_{Y_{3:3}}(x)$, and it is evident from it that when the condition of super-additivity of $\phi_{2} \circ \psi_1$ is violated in Theorem \ref{th6}, $F_{X_{3:3}}(x)$ is greater than $F_{Y_{3:3}}(x),$ for some $x\geq 0$.
\end{counterexample}
Next, we establish another result with regard to the comparison of $X_{1:n}$ and $Y_{1:n}$ in terms of the hazard rate order where the Archimedean survival copula is taken as independence copula with same generators.
\begin{theorem}
    Let $X_i\sim EW(\alpha,\lambda_{i},k)$ and $Y_i\sim EW(\alpha, \mu_{i}, k)$, for $i=1,\ldots,n$, and the associated Archimedean survival copulas are with generators $\psi_1=\psi_2=e^{-x}$, that is , $X_i$'s and $Y_i$'s are independent random variables. Also, let $0<\alpha\leq 1$ and ${k}\geq 1$. Then, 
$$ \boldsymbol{\lambda}\succeq^{m}\boldsymbol{\mu} \Rightarrow Y_{1:n}\succeq_{hr}X_{1:n}.$$
\end{theorem}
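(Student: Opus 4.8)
The plan is to reduce the hazard-rate comparison of the two series systems to a Schur-convexity statement about the common single-component hazard rate, and then to invoke Lemma \ref{lemma2.7}. First I would exploit the independence assumption: since $\psi_1=\psi_2=e^{-x}$ is the independence copula, $\bar F_{X_{1:n}}(x)=\prod_{i=1}^n \bar F_{X_i}(x)$ and likewise for the $Y$'s, so the hazard rates of the two series systems are additive,
\[
r_{X_{1:n}}(x)=\sum_{i=1}^n r_{X_i}(x),\qquad r_{Y_{1:n}}(x)=\sum_{i=1}^n r_{Y_i}(x).
\]
From \eqref{4} the survival function of $X_i$ is $\bar F_{X_i}(x)=\alpha e^{-(x\lambda_i)^k}/\bigl(1-\bar{\alpha}e^{-(x\lambda_i)^k}\bigr)$, whence a direct logarithmic differentiation (equivalently, the Marshall--Olkin hazard-rate formula recalled in the Introduction) gives the component hazard rate
\[
r_{X_i}(x)=\frac{k\lambda_i^k x^{k-1}}{1-\bar{\alpha}e^{-(x\lambda_i)^k}}.
\]
By the definition of the hazard rate order, $Y_{1:n}\succeq_{hr}X_{1:n}$ is equivalent to $r_{X_{1:n}}(x)\ge r_{Y_{1:n}}(x)$ for all $x>0$, i.e.\ to $\sum_i r_{X_i}(x)\ge\sum_i r_{Y_i}(x)$.

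Next I would fix $x>0$ and view $\Phi(\boldsymbol{\lambda}):=r_{X_{1:n}}(x)=\sum_{i=1}^n\varphi(\lambda_i)$ as a function of the scale vector, where $\varphi(\lambda)=k\lambda^k x^{k-1}/\bigl(1-\bar{\alpha}e^{-(x\lambda)^k}\bigr)$. Since the same $\varphi$ governs the $Y$'s, we have $r_{Y_{1:n}}(x)=\Phi(\boldsymbol{\mu})$, so it suffices to show $\Phi(\boldsymbol{\lambda})\ge\Phi(\boldsymbol{\mu})$ whenever $\boldsymbol{\lambda}\succeq^{m}\boldsymbol{\mu}$; that is, it suffices to prove $\Phi$ is Schur-convex. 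Note that here full majorization is assumed, so (unlike the weak-majorization theorems, which additionally need monotonicity via Theorem A.8 of \cite{Marshall2011}) Schur-convexity alone delivers the conclusion. The function $\Phi$ is plainly symmetric, and because it is separable,
\[
(\lambda_i-\lambda_j)\Bigl(\frac{\partial\Phi}{\partial\lambda_i}-\frac{\partial\Phi}{\partial\lambda_j}\Bigr)=(\lambda_i-\lambda_j)\bigl(\varphi'(\lambda_i)-\varphi'(\lambda_j)\bigr),
\]
so by the Schur--Ostrowski criterion (Lemma \ref{schur-critetia}) Schur-convexity of $\Phi$ is equivalent to $\varphi'$ being increasing, i.e.\ to convexity of $\varphi$ in $\lambda$.

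The main obstacle is exactly this convexity of the single-component hazard rate $\varphi$ in the scale parameter, and it is supplied directly by Lemma \ref{lemma2.7} (taking $a=\alpha$, so that $1-a=\bar{\alpha}$, under $k\ge 1$ and $0<\alpha\le 1$). Granting it, $\Phi$ is Schur-convex, and hence $\boldsymbol{\lambda}\succeq^{m}\boldsymbol{\mu}$ yields $\Phi(\boldsymbol{\lambda})\ge\Phi(\boldsymbol{\mu})$, i.e.\ $r_{X_{1:n}}(x)\ge r_{Y_{1:n}}(x)$ for every $x>0$, which is precisely $Y_{1:n}\succeq_{hr}X_{1:n}$. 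I would finally remark that the hypothesis $k\ge 1$ enters only through Lemma \ref{lemma2.7} (the convexity, and hence the result, is expected to fail for $k<1$, paralleling the role of $0<k\le1$ in Theorem \ref{th2}), and that the entire argument rests on the additivity of hazard rates, which is what forces the independence-copula assumption $\psi_1=\psi_2=e^{-x}$.
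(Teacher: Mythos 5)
Your proposal is correct and follows essentially the same route as the paper: under the independence copula the series-system hazard rate is the sum $\sum_i k\lambda_i^k x^{k-1}/\bigl(1-\bar{\alpha}e^{-(x\lambda_i)^k}\bigr)$, Lemma \ref{lemma2.7} gives convexity of each summand in the scale parameter for $k\ge 1$, and Schur-convexity of the separable sum (which the paper gets from Proposition C.1 of \cite{Marshall2011} and you get equivalently from the Schur--Ostrowski criterion) combined with $\boldsymbol{\lambda}\succeq^{m}\boldsymbol{\mu}$ yields $r_{X_{1:n}}\ge r_{Y_{1:n}}$ pointwise, i.e.\ $Y_{1:n}\succeq_{hr}X_{1:n}$.
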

\begin{proof}
    Under the independence copula, the hazard rate function of $X_{1:n}$ is given by
    \begin{align}
        r_{X_{1:n}}(x)&=\sum_{i=1}^{n}\frac{k\lambda_{i}(\lambda_{i} x)^{k-1}}{1-\bar{\alpha}e^{-(\lambda_{i} x)^{k}}}.
    \end{align}
     By applying Lemma \ref{lemma2.7}, it is easy to observe that $r_{X_{1:n}}(x)$ is convex in $\bm{\lambda}.$ Now, upon using Proposition $C1$ of \cite{Marshall2011}, we observe that $r_{X_{1:n}}(x)$ is Schur-convex with respect to $\bm{\lambda}$, which proves the theorem.
\end{proof}

We then present two more results concerning the hazard rate order and the reversed hazard rate order between the smallest order statistics when the tilt parameters are connected in weakly super-majorization order and under independence copula with generator $\psi_1(x)=\psi_2(x)=e^{-x},~x>0.$ The proofs of two results can be completed by using Lemma $3.2$ of \cite{balakrishnan2018modified}.
\begin{theorem}
    Let $X_i\sim EW(\alpha_{i},\lambda,k)$ and $Y_i\sim EW(\beta_{i}, \lambda, k)$, for $i=1,\ldots,n$, and the associated Archimedean survival copulas be with generators $\psi_1=\psi_2=e^{-x}$, respectively. Then, 
$$ \boldsymbol{\alpha}\succeq^{w}\boldsymbol{\beta} \Rightarrow Y_{1:n}\succeq_{hr}X_{1:n}.$$
\end{theorem}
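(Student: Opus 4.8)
The plan is to use that the generators $\psi_1=\psi_2=e^{-x}$ give the independence copula, so the series systems are products of the marginal survival functions and their hazard rates are additive. First I would record that $EW(\alpha_i,\lambda,k)$ has survival function $\bar F_{X_i}(x)=\alpha_i e^{-(x\lambda)^k}/\bigl(1-\bar\alpha_i e^{-(x\lambda)^k}\bigr)$ and, via the hazard-rate formula for the tilted family, hazard rate
$$r_{X_i}(x)=\frac{k\lambda(x\lambda)^{k-1}}{1-\bar\alpha_i e^{-(x\lambda)^k}},\qquad \bar\alpha_i=1-\alpha_i.$$
Since $\bar F_{X_{1:n}}=\prod_{i=1}^n\bar F_{X_i}$ under independence, we have $r_{X_{1:n}}(x)=\sum_{i=1}^n r_{X_i}(x)$, and likewise for $Y_{1:n}$. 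Hence $Y_{1:n}\succeq_{hr}X_{1:n}$ is equivalent to $r_{Y_{1:n}}(x)\le r_{X_{1:n}}(x)$ for every $x>0$.

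Fixing $x>0$ and setting $t=e^{-(x\lambda)^k}\in(0,1)$, the positive factor $k\lambda(x\lambda)^{k-1}$ is common to both systems and pulls out, so it suffices to compare
$$\Phi(\boldsymbol\alpha)=\sum_{i=1}^n g(\alpha_i),\qquad g(\alpha)=\frac{1}{1-t+\alpha t}.$$
The second step is to extract the two properties of $g$ that drive the majorization argument. Differentiating gives $g'(\alpha)=-t/(1-t+\alpha t)^2<0$ and $g''(\alpha)=2t^2/(1-t+\alpha t)^3>0$, so $g$ is decreasing and convex for every admissible $t$; note this requires no restriction on $k$ or $\alpha$, matching the absence of such hypotheses in the statement. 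Because $\Phi$ is a separable sum of a convex function over the coordinates, it is Schur-convex (Proposition C.1 of \cite{Marshall2011}), and it is decreasing in each coordinate.

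The third step is to feed this into the preservation property of weak super-majorization, which is precisely Lemma 3.2 of \cite{balakrishnan2018modified}: a decreasing Schur-convex function reverses $\preceq^w$ in the appropriate sense. Since $\boldsymbol\alpha\succeq^w\boldsymbol\beta$ unwinds to $\boldsymbol\beta\preceq^w\boldsymbol\alpha$, this yields $\Phi(\boldsymbol\beta)\le\Phi(\boldsymbol\alpha)$; restoring the common factor gives $r_{Y_{1:n}}(x)\le r_{X_{1:n}}(x)$ for all $x$, i.e. $Y_{1:n}\succeq_{hr}X_{1:n}$.

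The derivative computations for $g$ are routine; the step that needs care—and where arguments of this shape typically go wrong—is the bookkeeping of the order direction. One must track that $\boldsymbol\alpha\succeq^w\boldsymbol\beta$ is the same as $\boldsymbol\beta\preceq^w\boldsymbol\alpha$, and that for weak super-majorization it is the \emph{decreasing} (not increasing) Schur-convex class that delivers the inequality, so that $\Phi(\boldsymbol\beta)\le\Phi(\boldsymbol\alpha)$ emerges with the sign making the larger tilt vector produce the larger hazard rate, hence the stochastically smaller minimum. A one-line monotonicity check—larger $\alpha_i$ shrinks $g$—confirms the signs are consistent.
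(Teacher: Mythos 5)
Your proof is correct and matches the paper's intended route: the paper writes out no argument for this theorem beyond the remark that it ``can be completed by using Lemma 3.2 of \cite{balakrishnan2018modified}'', and your computation --- hazard rates add under the independence copula, the summand $1/(1-\bar{\alpha}_i e^{-(x\lambda)^k})$ is decreasing and convex in $\alpha_i$, hence the system hazard rate is a decreasing Schur-convex function of $\boldsymbol{\alpha}$ and is therefore monotone with respect to $\preceq^{w}$ --- is exactly the content that citation stands in for. One minor attribution slip: the preservation statement you invoke (decreasing Schur-convex functions preserve weak supermajorization) is Theorem 3.A.8 of \cite{Marshall2011} rather than Lemma 3.2 of \cite{balakrishnan2018modified}, which is the distribution-specific Schur-convexity lemma; this does not affect the validity of your argument.
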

\begin{theorem}
    Let $X_i\sim EW(\alpha_{i},\lambda,k)$ and $Y_i\sim EW(\beta_{i}, \lambda, k)$, for $i=1,\ldots,n$, and the associated Archimedean survival copulas be with generators $\psi_1=\psi_2=e^{-x}$, respectively. Then, 
$$ \boldsymbol{\alpha}\succeq^{w}\boldsymbol{\beta} \Rightarrow Y_{n:n}\succeq_{rh}X_{n:n}.$$
\end{theorem}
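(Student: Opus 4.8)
The plan is to exploit the fact that, under the independence copula $\psi_1=\psi_2=e^{-x}$, the distribution function of the maximum factorizes as $F_{X_{n:n}}(x)=\prod_{i=1}^{n}F_{X_i}(x)$, so that the reversed hazard rate of the maximum is the sum of the marginal reversed hazard rates,
$$\tilde r_{X_{n:n}}(x)=\frac{d}{dx}\log F_{X_{n:n}}(x)=\sum_{i=1}^{n}\tilde r_{X_i}(x),$$
and similarly for $Y_{n:n}$. By the definition of the reversed hazard rate order, it therefore suffices to show that $\sum_{i=1}^{n}\tilde r_{Y_i}(x)\ge\sum_{i=1}^{n}\tilde r_{X_i}(x)$ for every fixed $x>0$.

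First I would compute the marginal reversed hazard rate of a single $EW(a,\lambda,k)$ variable. Writing $u=e^{-(\lambda x)^k}$ and differentiating $\log F(x)=\log(1-u)-\log(1-\bar a u)$, a short calculation gives
$$\tilde r(x;a)=\frac{a\,k\lambda(\lambda x)^{k-1}e^{-(\lambda x)^k}}{\bigl(1-e^{-(\lambda x)^k}\bigr)\bigl(1-\bar a e^{-(\lambda x)^k}\bigr)}.$$
The point of this reduction is that, for fixed $x$ (hence fixed $\lambda,k$), this is a function of the single tilt parameter $a$ of the transparent form $g(a)=\dfrac{c\,a}{(1-u)+au}$, where $c=k\lambda(\lambda x)^{k-1}u/(1-u)>0$ is a positive constant independent of $a$.

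Next I would record the shape of $g$. Differentiating, $g'(a)=\dfrac{c(1-u)}{\{(1-u)+au\}^{2}}>0$ and $g''(a)=\dfrac{-2c(1-u)u}{\{(1-u)+au\}^{3}}<0$, so $g$ is increasing and concave in $a\in(0,\infty)$. Consequently $-g$ is decreasing and convex. Since $\boldsymbol{\alpha}\succeq^{w}\boldsymbol{\beta}$ is precisely $\boldsymbol{\beta}\preceq^{w}\boldsymbol{\alpha}$, applying Lemma $3.2$ of \cite{balakrishnan2018modified} (the preservation of weak supermajorization under separable sums of decreasing convex functions) to $-g$ yields $\sum_{i=1}^{n}(-g)(\beta_i)\le\sum_{i=1}^{n}(-g)(\alpha_i)$, i.e. $\sum_{i=1}^{n}g(\beta_i)\ge\sum_{i=1}^{n}g(\alpha_i)$. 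As $x>0$ was arbitrary, this is exactly $\tilde r_{Y_{n:n}}(x)\ge\tilde r_{X_{n:n}}(x)$ for all $x$, which gives $Y_{n:n}\succeq_{rh}X_{n:n}$.

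I expect the main obstacle to be bookkeeping of directions rather than any deep analysis: one must verify both that $g$ is increasing \emph{and} concave (the sign of $g''$ is where the computation could go astray), and that the direction of the weak-supermajorization inequality delivered by Lemma $3.2$ matches the required $\ge$ after passing to $-g$. It is also worth noting the positivity of the constant $c$ and of both denominators on $(0,\infty)$, so that the monotonicity and concavity conclusions hold uniformly in $x$ and the marginal-to-maximum reduction is legitimate.
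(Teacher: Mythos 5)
Your proof is correct and follows essentially the route the paper itself indicates: the paper offers no written proof for this theorem beyond the remark that it ``can be completed by using Lemma $3.2$ of \cite{balakrishnan2018modified}'', and your argument — factorizing $F_{X_{n:n}}$ under the independence copula, computing the marginal reversed hazard rate $\tilde r(x;a)=ca/\{(1-u)+au\}$, checking it is increasing and concave in the tilt parameter, and then applying the weak-supermajorization preservation result to $-g$ — is precisely the intended filling-in of that sketch. The sign bookkeeping (direction of $\succeq^{w}$, concavity of $g$, and the resulting $\sum g(\beta_i)\ge\sum g(\alpha_i)$) all checks out.
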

We now derive some conditions on model parameters, for comparing the extremes with respect to the dispersive, star and Lorenz orders, when the variables are dependent and follow extended Weibull distributions structured with Archimedean copula having the same generator.

 Before stating our main resluts, we present the following two lemmas which will be used to prove the main results. 

\begin{lemma}(\cite{saunders1978})\label{lem_star}
    Let $X_{a}$ be a random variable with distribution function $F_{a}$, for each $a\in(0,\infty),$ such that
    \begin{itemize}
        \item[(i)] $F_{a}$ is supported on some interval $(x^{(a)}_{-},x^{(a)}_{+})\subset(0,\infty)$ and has density $f_{a}$ which does not vanish in any subinterval of $(x^{(a)}_{-},x^{(a)}_{+});$
        \item[(ii)] The derivative of $F_{a}$ with respect to $a$ exists and is denoted by $F^{'}_{a}.$ 
    \end{itemize}
Then, $X_{a}\geq_{*}X_{a^{*}},$ for $a,~a^{*}\in (0,\infty)$ and $a>a^{*},$ iff $F^{'}_{a}(x)/xf_{a}(x)$ is decreasing in $x.$    
\end{lemma}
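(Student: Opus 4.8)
The plan is to strip the scale-invariant star order down to a monotonicity statement about a ratio of quantile functions, and then to derive that statement from the pointwise (in the parameter $a$) differential condition by integrating in $a$. Throughout I will use Assumption (i), which guarantees $f_a>0$ on the support so that $F_a$ is a strictly increasing, invertible map, and Assumption (ii), which legitimizes differentiation in the parameter.

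First I would recast the star order. By the definition given in the preliminaries, $X_a\ge_{*}X_{a^*}$ (that is, $X_{a^*}\le_{*}X_a$) means that $F_a^{-1}(F_{a^*}(x))/x$ is increasing in $x$. Substituting $p=F_{a^*}(x)$, so that $x=F_{a^*}^{-1}(p)$ by strict monotonicity of $F_{a^*}$, this is equivalent to requiring that the quantile ratio
\begin{equation*}
R_{a,a^*}(p):=\frac{F_a^{-1}(p)}{F_{a^*}^{-1}(p)}
\end{equation*}
be increasing in $p\in(0,1)$.

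Next I would introduce the quantile $q(a,p):=F_a^{-1}(p)$, so that $F_a(q(a,p))=p$. Implicit differentiation in $a$ (valid since $f_a=\partial_xF_a>0$) yields $\partial_a q(a,p)=-F_a'(q(a,p))/f_a(q(a,p))$, whence
\begin{equation*}
\frac{\partial}{\partial a}\log q(a,p)=\frac{\partial_a q(a,p)}{q(a,p)}=-\frac{F_a'(x)}{x\,f_a(x)}\Big|_{x=q(a,p)}.
\end{equation*}
Because $q(a,\cdot)$ is strictly increasing, the map $p\mapsto\partial_a\log q(a,p)$ is increasing in $p$ if and only if $x\mapsto -F_a'(x)/(x f_a(x))$ is increasing in $x$, i.e. if and only if $F_a'(x)/(x f_a(x))$ is decreasing in $x$ — precisely the stated condition, read for each fixed $a$.

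Finally I would join the two ends through the identity
\begin{equation*}
\log R_{a,a^*}(p)=\log q(a,p)-\log q(a^*,p)=\int_{a^*}^{a}\frac{\partial}{\partial s}\log q(s,p)\,ds.
\end{equation*}
For sufficiency, if the condition holds for every $s$, then the integrand is increasing in $p$ for each $s$, so the integral, hence $R_{a,a^*}$, is increasing in $p$, giving $X_a\ge_{*}X_{a^*}$. For necessity, assuming $X_a\ge_{*}X_{a^*}$ for all $a>a^*$ makes $\log R_{a,a^*}(p)$ increasing in $p$; dividing by $a-a^*>0$ and letting $a^*\uparrow a$ exhibits $\partial_a\log q(a,p)$ as a pointwise limit of functions increasing in $p$, hence itself nondecreasing in $p$, which returns the condition via the equivalence above. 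I expect the main obstacle to be the rigorous handling of this quantile-differentiation step — justifying the implicit differentiation and the passage to the limit that preserves monotonicity — rather than any lengthy computation.
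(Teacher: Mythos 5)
The paper does not prove this lemma at all: it is quoted verbatim (as Lemma~2.8) from Saunders and Moran (1978) and used as a black box in Theorems~3.11 and~3.12, so there is no internal proof to compare yours against. Your argument is, however, the standard and correct derivation of the Saunders--Moran star-order criterion. The reduction of $X_{a}\geq_{*}X_{a^{*}}$ to monotonicity of the quantile ratio $F_a^{-1}(p)/F_{a^*}^{-1}(p)$ is exactly right under the paper's definition of the star order, the implicit-differentiation identity $\partial_a \log F_a^{-1}(p)=-F_a'(x)/(xf_a(x))\big|_{x=F_a^{-1}(p)}$ is correct given that $f_a>0$ on the support, and the integration-in-$a$ step cleanly delivers sufficiency while the difference-quotient limit delivers necessity (pointwise limits of nondecreasing functions being nondecreasing). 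The only loose ends are the ones you yourself flag: the implicit function theorem and the fundamental-theorem-of-calculus step need $F_a'$ and $f_a$ to be jointly continuous (or at least $s\mapsto\log F_s^{-1}(p)$ to be absolutely continuous), which the lemma's hypotheses do not state explicitly but which are tacitly assumed both in the original reference and in the paper's applications, where the distribution functions are smooth in both arguments. With that caveat noted, your proof is complete and matches the intent of the cited result.
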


We now establish some sufficient conditions for the comparison of two extremes in the sense of star order, with the first result being for parallel systems and the second one being for series systems. 
\begin{theorem}\label{Maj-star1}
Let $X_i \sim EW (\alpha,\lambda_1,k)$  $(i=1,\ldots,p)$ and $X_j \sim EW (\alpha,\lambda_2,k)$  $(j=p+1,\ldots,n)$, and $Y_i \sim EW (\alpha,\mu_1,k)$  $(i=1,\ldots,p)$ and $Y_j \sim EW (\alpha,\mu_2,k )$  $(j=p+1,\ldots,n)$ be variables with a common Archimedean copula having generator $\psi$.  Then, if 
\begin{equation*}
(\alpha+\bar{\alpha}t)\ln\left(\frac{t}{\alpha+\bar{\alpha}t}\right)\left[\alpha+2\bar{\alpha}t-\frac{(\alpha+\bar{\alpha}
t)t\phi''(1-t)}{\phi'(1-t)}\right]
\end{equation*}
is increasing with respect to $t \in [0,1]$ and $0<k\leq 1$, we have
\begin{eqnarray*}
(\lambda_1-\lambda_2)(\mu_1-\mu_2) \ge 0\,\,\,\,\,and\,\,\,\,\,\,
\frac{\lambda_{2:2}}{\lambda_{1:2}} \ge \frac{\mu_{2;2}}{\mu_{1:2}}
\Longrightarrow
Y_{n:n} \le_{*} X_{n:n},
\end{eqnarray*}
where $p+q=n$.
\end{theorem}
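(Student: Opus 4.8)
The plan is to invoke the Saunders--Moran criterion (Lemma \ref{lem_star}) after reducing the comparison of the two configurations to the monotonicity of a single one-parameter scale family in the star order. First I would exploit the scale invariance of the star order, namely that $cZ\le_{*}Z'$ iff $Z\le_{*}Z'$ and $Z\le_{*}cZ'$ iff $Z\le_{*}Z'$ for every $c>0$. Writing the parallel-system distribution function as $F_{X_{n:n}}(x)=\psi\big[p\,\phi(H_{\lambda_1}(x))+q\,\phi(H_{\lambda_2}(x))\big]$ with $H_b(x)=\frac{1-e^{-(bx)^k}}{1-\bar{\alpha}e^{-(bx)^k}}$, scaling $X_{n:n}$ by $1/\lambda_{1:2}$ replaces $(\lambda_{1:2},\lambda_{2:2})$ by $(1,\lambda_{2:2}/\lambda_{1:2})$, and likewise for $Y_{n:n}$. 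The hypothesis $(\lambda_1-\lambda_2)(\mu_1-\mu_2)\ge0$ guarantees that the group carrying the larger scale is the same in both systems, so (assuming without loss of generality $\lambda_1\le\lambda_2$ and $\mu_1\le\mu_2$, the reverse case being symmetric) both systems become star-equivalent to members of the family $Z_a$, the parallel system built from $p$ components of scale $1$ and $q$ components of scale $a$. Concretely $X_{n:n}\stackrel{*}{\sim}Z_{r_\lambda}$ and $Y_{n:n}\stackrel{*}{\sim}Z_{r_\mu}$, where $r_\lambda=\lambda_{2:2}/\lambda_{1:2}\ge r_\mu=\mu_{2:2}/\mu_{1:2}\ge1$ by the second hypothesis.

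It then suffices to prove that $Z_a$ is increasing in the star order in $a$ on $[1,\infty)$, which yields $Z_{r_\mu}\le_{*}Z_{r_\lambda}$, that is $Y_{n:n}\le_{*}X_{n:n}$. Denoting by $G_a$ and $g_a$ the distribution and density of $Z_a$, Lemma \ref{lem_star} reduces the claim to showing that $\partial_a G_a(x)\big/\big(x\,g_a(x)\big)$ is decreasing in $x$. Using $G_a(x)=\psi\big[p\,\phi(H_1(x))+q\,\phi(H_a(x))\big]$ together with the identity $\partial_a H_a(x)=\tfrac{x}{a}\,\partial_x H_a(x)$, valid because $H_a$ depends on $x$ only through $ax$, the common factor $\psi'$ cancels and I would arrive at
\begin{equation*}
\frac{\partial_a G_a(x)}{x\,g_a(x)}=\frac{1}{a}\cdot\frac{q\,\phi'(H_a)\,\partial_x H_a}{p\,\phi'(H_1)\,\partial_x H_1+q\,\phi'(H_a)\,\partial_x H_a}.
\end{equation*}
Since the two summands in the denominator have a common sign, this quotient is decreasing in $x$ exactly when the ratio $\phi'(H_1)\,\partial_x H_1\big/\phi'(H_a)\,\partial_x H_a$ is increasing in $x$.

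To treat this ratio I would substitute $w=bx$ and set $P(w)=\phi'(m(w))\,m'(w)$ with $m(w)=\frac{1-e^{-w^k}}{1-\bar{\alpha}e^{-w^k}}$, so that $\phi'(H_b(x))\,\partial_x H_b(x)=b\,P(bx)$ and the ratio becomes $\tfrac{p}{qa}\,P(x)/P(ax)$. A short logarithmic-derivative argument shows that $P(x)/P(ax)$ is increasing in $x$ for every $a\ge1$ if and only if $w\,P'(w)/P(w)$ is decreasing in $w$. I would then change variables to $t=\bar H(x)=\frac{\alpha e^{-w^k}}{1-\bar{\alpha}e^{-w^k}}$, under which $\alpha+\bar{\alpha}t=\frac{\alpha}{1-\bar{\alpha}e^{-w^k}}$, $\ln\!\big(\tfrac{t}{\alpha+\bar{\alpha}t}\big)=-w^k$, and $1-t=m(w)$ so that $\phi'(1-t)$ and $\phi''(1-t)$ reproduce the generator terms in $P$. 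Expanding $w\,P'(w)/P(w)=w\,\phi''(m)m'/\phi'(m)+w\,m''/m'$ and simplifying should identify this quantity, up to a positive factor, with the displayed expression $(\alpha+\bar{\alpha}t)\ln\!\big(\tfrac{t}{\alpha+\bar{\alpha}t}\big)\big[\alpha+2\bar{\alpha}t-\frac{(\alpha+\bar{\alpha}t)t\phi''(1-t)}{\phi'(1-t)}\big]$; because $t=\bar H(x)$ is strictly decreasing in $x$, ``$w\,P'(w)/P(w)$ decreasing in $w$'' is equivalent to ``the displayed expression increasing in $t$'', which is precisely the stated hypothesis, closing the chain of equivalences.

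The routine-but-delicate part, and the main obstacle, is this last step: computing $m'(w)$ and $m''(w)$ for $m(w)=\frac{1-e^{-w^k}}{1-\bar{\alpha}e^{-w^k}}$, reorganizing $w\,P'(w)/P(w)$ into the compact $t$-form above, and checking that the restriction $0<k\le1$ delivers the correct sign for the terms carrying the factor $(k-1)$ that originate from the $w^{k-1}$ appearing in $m'(w)$. Equal care is needed to confirm that the reparametrization $x\mapsto t$ reverses monotonicity consistently, so that the Saunders--Moran hypothesis is genuinely met. Once this sign bookkeeping is settled, the reductions above assemble into the desired conclusion $Y_{n:n}\le_{*}X_{n:n}$.
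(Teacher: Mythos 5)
Your proposal follows essentially the same route as the paper: reduce to a one\-/parameter family of parallel systems, invoke the Saunders--Moran criterion (Lemma \ref{lem_star}), and translate the resulting monotonicity requirement into a condition in the variable $t=\frac{\alpha e^{-w^k}}{1-\bar{\alpha}e^{-w^k}}$. The one genuine structural difference is the normalization: the paper fixes $\lambda_1+\lambda_2=\mu_1+\mu_2=1$, parametrizes by $\lambda=\lambda_2\in(1/2,1]$, and needs a separate Case~(ii) (handled by scale invariance) when the sums differ, whereas you normalize by the smaller scale and parametrize by the ratio $a=\lambda_{2:2}/\lambda_{1:2}\ge 1$. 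This absorbs both of the paper's cases into one and lets the hypothesis $\lambda_{2:2}/\lambda_{1:2}\ge\mu_{2:2}/\mu_{1:2}$ enter directly as an ordering of the family parameter; your intermediate identities ($\partial_a H_a=\tfrac{x}{a}\partial_x H_a$, the cancellation of $\psi'$, the reduction to $wP'(w)/P(w)$ decreasing) are all correct. This is a mild but real streamlining of the paper's argument.

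The step you defer is, however, precisely where the stated hypothesis must appear, and carrying it out does not give quite what you assert. Using $m'(w)=\frac{kw^{k-1}t(\alpha+\bar{\alpha}t)}{\alpha}$ and $\frac{dt}{dw}=\frac{kt(\alpha+\bar{\alpha}t)}{\alpha w}\ln\big(\frac{t}{\alpha+\bar{\alpha}t}\big)$, one finds
\begin{equation*}
\frac{wP'(w)}{P(w)}=(k-1)+\frac{k}{\alpha}\,\ln\left(\frac{t}{\alpha+\bar{\alpha}t}\right)\left[\alpha+2\bar{\alpha}t-\frac{(\alpha+\bar{\alpha}t)t\,\phi''(1-t)}{\phi'(1-t)}\right],
\end{equation*}
so, since $t$ is decreasing in $w$, your criterion is that $\ln\big(\frac{t}{\alpha+\bar{\alpha}t}\big)\big[\alpha+2\bar{\alpha}t-\frac{(\alpha+\bar{\alpha}t)t\phi''(1-t)}{\phi'(1-t)}\big]$ be increasing in $t$ \emph{without} the leading factor $(\alpha+\bar{\alpha}t)$; this is not ``up to a positive factor'' the displayed hypothesis, because that factor is a nonconstant function of $t$. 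For what it is worth, the paper's own computation of $\Delta'(x)$, done carefully, lands on the same prefactor\-/free condition (a factor $\frac{1}{\alpha+\bar{\alpha}t_i}$ is dropped when the paper passes to the line beginning $\frac{t_1'}{t_1}[\cdots]$, and this is what reinstates $(\alpha+\bar{\alpha}t)$ in the final display), so your argument proves exactly what the paper's proof proves; but as written you claim a match with the stated hypothesis that the algebra does not deliver, and you should either derive the prefactor\-/free condition from the stated one or state explicitly the condition you actually use. Finally, note that the $(k-1)$ contribution above is an additive constant, so the restriction $0<k\le 1$ plays no role in this monotonicity step, contrary to the sign bookkeeping you anticipate having to do.
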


 \begin{proof} Assume that $(\lambda_1-\lambda_2)(\mu_1-\mu_2) \ge 0$. Now, without
loss of generality, let us assume that $\lambda_1\le\lambda_2$ and
$\mu_1\le\mu_2$. The distribution
functions of $X_{n:n}$ and $Y_{n:n}$ are 
\begin{eqnarray*}
F_{X_{n:n}}(x)=\psi\left[p\phi\left(\frac{1-e^{- (\lambda_1 x)^k}}{1-\bar{\alpha}e^{- (\lambda_1 x)^k}}
\right)+q\phi\left(\frac{1-e^{-(\lambda_2 x)^k}}{1-\bar{\alpha} e^{-(\lambda_2 x)^k}}\right)\right], \qquad x\in(0,\infty)
\end{eqnarray*}
and
\begin{eqnarray*}
F_{Y_{n:n}}(x)=\psi\left[p\phi\left(\frac{1-e^{- (\mu_1 x)^k}}{1-\bar{\alpha}e^{- (\mu_1 x)^k}}
\right)+q\phi\left(\frac{1-e^{-(\mu_2 x)^k}}{1-\bar{\alpha} e^{-(\mu_2 x)^k}}\right)\right], \qquad x\in(0,\infty),
\end{eqnarray*}
where $q=n-p$. In this case, the proof can be completed by considering the following two cases.\\
{\bf {Case (i)}:} \,\,$\lambda_1+\lambda_2=\mu_1+\mu_2.$ For convenience, let us assume that $\lambda_1+\lambda_2=\mu_1+\mu_2=1$. 
Set $\lambda_2=\lambda$, $\mu_2=\mu$,
$\lambda_1=1-\lambda$ and
$\mu_1=1-\mu$.  Then, under this setting, the distribution
functions of $X_{n:n}$ and $Y_{n:n}$ are 
\begin{eqnarray*}
F_{\lambda}(x)=\psi\left[p\phi\left(\frac{1-e^{-(1-\lambda)^k x^k}}{1-\bar{\alpha}e^{-(1-\lambda)^k x^k}}
\right)+q\phi\left(\frac{1-e^{-\lambda^k x^k}}{1-\bar{\alpha} e^{-\lambda^k x^k}}\right)\right], \qquad x\in(0,\infty),
\end{eqnarray*}
\begin{eqnarray*}
F_{\mu}(x)=\psi\left[p\phi\left(\frac{1-e^{-(1-\mu)^k x^k}}{1-\bar{\alpha}e^{-(1-\mu)^k x^k}}
\right)+q\phi\left(\frac{1-e^{-\mu^k x^k}}{1-\bar{\alpha} e^{-\mu^k x^k}}\right)\right], \qquad x\in(0,\infty),
\end{eqnarray*}
 respectively. Now, to obtain the required
result, it is sufficient to show that $\frac{F'_\lambda(x)}{x f_\lambda (x)}$ is decreasing  in $x \in(0,\infty),$ for $\lambda \in(1/2,1]$ by Lemma \ref{lem_star}.
The derivative of $F_{\lambda}$ with respect to $\lambda$ is

{\fontsize{9}{12}\selectfont\begin{eqnarray*}
F'_\lambda(x)&=&\psi'\left[p\phi\left(\frac{1-e^{-(1-\lambda)^{k} x^k}}{1-\bar{\alpha}e^{-(1-\lambda)^{k} x^k}}
\right)+q\phi\left(\frac{1-e^{-\lambda^k x^k}}{1-\bar{\alpha} e^{-\lambda^k x^k}}\right)\right]\\
&&\times \left[\frac{-p\alpha x^k e^{-(1-\lambda)^k x^k}}{(1-\bar{\alpha} e^{-(1-\lambda)^k x^k})^2}
\phi'\left(\frac{1-e^{-(1-\lambda)^k x^k}}{1-\bar{\alpha}e^{-(1-\lambda)^k x^k}} \right)k(1-\lambda)^{k-1}+q\frac{\alpha xe^{-\lambda^k x^k}}{(1-\bar{\alpha}e^{-\lambda^k x^k})^2}\phi'\left(\frac{1-e^{-\lambda^k x^k}}{1-\bar{\alpha}e^{-\lambda^k x^k}}\right)k\lambda^{k-1}\right].
\end{eqnarray*}}
On the other hand, the density function corresponding to
$F_{\lambda}$ has the form
{\fontsize{9}{12}\selectfont\begin{eqnarray*}
f_\lambda(x)&=&\psi'\left[p\phi\left(\frac{1-e^{-(1-\lambda)^k x^k}}{1-\bar{\alpha}e^{-(1-\lambda)^k x^k}}
\right)+q\phi\left(\frac{1-e^{-\lambda^k x^k}}{1-\bar{\alpha} e^{-\lambda^k x^k}}\right)\right]\\
&&\times \left[\frac{p\alpha (1-\lambda)^k e^{-(1-\lambda)^k x^k}}{(1-\bar{\alpha} e^{-(1-\lambda)^k x^k})^2}
\phi'\left(\frac{1-e^{-(1-\lambda)^k x^k}}{1-\bar{\alpha}e^{-(1-\lambda)^k x^k}} \right)+\frac{q\alpha \lambda^k e^{-\lambda^k x^k}}{(1-\bar{\alpha}e^{-\lambda^k x^k})^2}\phi'\left(\frac{1-e^{-\lambda^k x^k}}{1-\bar{\alpha}e^{-\lambda^k x^k}}\right)\right]kx^{k-1}.
\end{eqnarray*}}
So, we have 
{\fontsize{9}{12}\selectfont\begin{eqnarray*}
\frac{F'_\alpha(x)}{xf_\alpha(x)}&=&\frac{\frac{-p\alpha x^k e^{-(1-\lambda)^k x^k}}{(1-\bar{\alpha} e^{-(1-\lambda)^k x^k})^2}
\phi'\left(\frac{1-e^{-(1-\lambda)^k x^k}}{1-\bar{\alpha}e^{-(1-\lambda)^k x^k}} \right)k(1-\lambda)^{k-1}+q\frac{\alpha xe^{-\lambda^k x^k}}{(1-\bar{\alpha}e^{-\lambda^k x^k})^2}\phi'\left(\frac{1-e^{-\lambda^k x^k}}{1-\bar{\alpha}e^{-\lambda^k x^k}}\right)k\lambda^{k-1}}{\frac{p\alpha (1-\lambda)^k e^{-(1-\lambda)^k x^k}}{(1-\bar{\alpha} e^{-(1-\lambda)^k x^k})^2}
\phi'\left(\frac{1-e^{-(1-\lambda)^k x^k}}{1-\bar{\alpha}e^{-(1-\lambda)^k x^k}} \right)kx^k+\frac{q\alpha \lambda^k e^{-\lambda^k x^k}}{(1-\bar{\alpha}e^{-\lambda^k x^k})^2}\phi'\left(\frac{1-e^{-\lambda^k x^k}}{1-\bar{\alpha}e^{-\lambda^k x^k}}\right)kx^{k}}\\
&=&\left(\lambda+\frac{p\alpha x^k\frac{e^{-(1-\lambda)^k x^k}}{(1-\bar{\alpha} e^{-(1-\lambda)^k x^k})^2}\phi'\left(\frac{1-e^{(1-\lambda)^k x^k}}{1-\bar{\alpha}e^{(1-\lambda)^k x^k}}\right)k(1-\lambda)^{k-1}}{\frac{-p\alpha x^k e^{-(1-\lambda)^k x^k}}{(1-\bar{\alpha} e^{-(1-\lambda)^k x^k})^2}
\phi'\left(\frac{1-e^{-(1-\lambda)^k x^k}}{1-\bar{\alpha}e^{-(1-\lambda)^k x^k}} \right)k(1-\lambda)^{k-1}+q\frac{\alpha xe^{-\lambda^k x^k}}{(1-\bar{\alpha}e^{-\lambda^k x^k})^2}\phi'\left(\frac{1-e^{-\lambda^k x^k}}{1-\bar{\alpha}e^{-\lambda^k x^k}}\right)k\lambda^{k-1}}\right)^{-1}\\
&=&\left(\lambda+\left(\frac{\frac{qe^{-\lambda^k x^k}}{(1-\bar{\alpha}e^{-\lambda^k x^k})^2}\phi'\left(\frac{1-e^{-\lambda^k x^k}}{1-\bar{\alpha}e^{-\lambda^k x^k}}\right)\lambda^{k-1}}{\frac{p e^{-(1-\lambda)^k x^k}}{(1-\bar{\alpha} e^{-(1-\lambda)^k x^k})^2}\phi'\left(\frac{1-e^{(1-\lambda)^k x^k}}{1-\bar{\alpha}e^{(1-\lambda)^k x^k}}\right)(1-\lambda)^{k-1}}-1 \right)^{-1}\right)^{-1}.
\end{eqnarray*}}
Thus, it suffices to show that, for $\lambda\in (1/2,1]$,
\begin{equation*}
\Delta(x)=\frac{\frac{e^{-\lambda^k x^k}}{(1-\bar{\alpha}e^{-\lambda^k x^k})^2}\phi'\left(\frac{1-e^{-\lambda x^k}}{1-\bar{\alpha}e^{-\lambda^k x^k}}\right)\lambda^{k-1}}{\frac{e^{-(1-\lambda)^k x^k}}{(1-\bar{\alpha} e^{-(1-\lambda)^k x^k})^2}\phi'\left(\frac{1-e^{(1-\lambda)^k x^k}}{1-\bar{\alpha}e^{(1-\lambda)^k x^k}}\right)(1-\lambda)^{k-1}}
\end{equation*}
is decreasing in $x\in(0,\infty)$.  Now, let us set $t_1=\frac{\alpha  e^{-\lambda^k x^k}}{1-\bar{\alpha}e^{-\lambda^k x^k}}$ and $t_2=\frac{\alpha  e^{-(1-\lambda)^k x^k}}{1-\bar{\alpha}e^{-(1-\lambda)^k x^k}}$. 
From  the fact that  $\lambda \in (1/2, 1]$, we have  $t_1< t_2$ for all $x\in (0,\infty)$, and so 
{\begin{eqnarray*}
\Delta(x)&=&\frac{t_1^2\phi'(1-t_1)e^{\lambda^k x^k}}{t_2^2\phi'(1-t_2)e^{(1-\lambda)^k x^k}}\\
&=&\frac{t_1(\alpha+\bar{\alpha}t_1)\phi'(1-t_1)}{t_2(\alpha+\bar{\alpha}t_2)\phi'(1-t_2)}
\end{eqnarray*}}
from which we get the derivative of $\Delta(x)$ with respect to $x$ to be
\begin{eqnarray*}
\Delta'(x)&\stackrel{sgn}{=}&\left[t_1'(\alpha+\bar{\alpha}t_1)\phi'(1-t_1)-t_1(\alpha+\bar{\alpha}t_1)t_1'\phi''(1-t_1)+\bar{\alpha}t_1t_1'\phi'(1-t_1)\right]\times t_2(\alpha+\bar{\alpha}t_2)\phi'(1-t_2)\\
&&-\left[t_2'(\alpha+\bar{\alpha}t_2)\phi'(1-t_2)-t_2(\alpha+\bar{\alpha}t_2)t_2'\phi''(1-t_2)+\bar{\alpha}t_2t_2'\phi'(1-t_2)\right]\times t_1(\alpha+\bar{\alpha}t_1)\phi'(1-t_1)\\
&\stackrel{sgn}{=}&\frac{t_1'}{t_1}\left[\alpha+2\bar{\alpha}t_1-\frac{(\alpha+\bar{\alpha}
t_1)t_1\phi''(1-t_1)}{\phi'(1-t_1)}\right]-\frac{t_2'}{t_2}\left[\alpha+2\bar{\alpha}t_2-\frac{(\alpha+\bar{\alpha}
t_2)t_2\phi''(1-t_2)}{\phi'(1-t_2)}\right].
\end{eqnarray*}
It is easy to show that the derivatives of $t_1$ and $t_2$ with respect to $x$ are
\begin{eqnarray*}
t_1'=\frac{-\lambda^k t_1 k x^{k-1}}{1-\bar{\alpha}e^{-\lambda^k x^k}}=\frac{-\lambda^k k x^{k-1}}{\alpha}(\alpha+\bar{\alpha}t_1)t_1=\frac{k(\alpha+\bar{\alpha}t_1)t_1}{\alpha x}\ln\left(\frac{t_1}{\alpha+\bar{\alpha}t_1}\right),
\end{eqnarray*}
\begin{eqnarray*}
t_2'=\frac{-(1-\lambda)^k k x^{k-1} t_2}{1-\bar{\alpha}e^{(1-\lambda)^k x^k}}=\frac{-(1-\lambda)^k k x^{k-1}}{\alpha}(\alpha+\bar{\alpha}t_2)t_2=\frac{k(\alpha+\bar{\alpha}t_2)t_2}{\alpha x}\ln\left(\frac{t_2}{\alpha+\bar{\alpha}t_2}\right).
\end{eqnarray*}
Hence, we get
\begin{eqnarray*}
\Delta'(x)&\stackrel{sgn}{=}& k(\alpha+\bar{\alpha}t_1)\ln\left(\frac{t_1}{\alpha+\bar{\alpha}t_1}\right)\left[\alpha+2\bar{\alpha}t_1-\frac{(\alpha+\bar{\alpha}
t_1)t_1\phi''(1-t_1)}{\phi'(1-t_1)}\right]\\
&&-k(\alpha+\bar{\alpha}t_2)\ln\left(\frac{t_2}{\alpha+\bar{\alpha}t_2}\right)\left[\alpha+2\bar{\alpha}t_2-\frac{(\alpha+\bar{\alpha}
t_2)t_2\phi''(1-t_2)}{\phi'(1-t_2)}\right].
\end{eqnarray*}
As $t_1<t_2$,  $\Delta'<0$ if and only if 
\[(\alpha+\bar{\alpha}t)\ln\left(\frac{t}{\alpha+\bar{\alpha}t}\right)\left[\alpha+2\bar{\alpha}t-\frac{(\alpha+\bar{\alpha}
t)t\phi''(1-t)}{\phi'(1-t)}\right]\]
is increasing in $t\in[0,1]$. 

{\bf {Case (ii)}.}\,\,$\lambda_1+\lambda_2\neq \mu_1+\mu_2.$  In this case, we can note that $\lambda_1+\lambda_2=k(\mu_1+\mu_2)$,  where $k$ is
a scalar. We then have $(k\mu_1,k\mu_2)\stackrel{m}{\preceq}(\lambda_1 , \lambda_2)$. 
 Let $W_{1:n}$ be the lifetime of a series system
having $n$ dependent extended exponentially  distributed components whose lifetimes have an
Archimedean copula with generator $\psi$, where $W_i \sim  EW(\alpha, k,\mu_1)$ ($i=1,\ldots, p$) and $W_{j} \sim EW(\alpha,k,\mu_2)$
 ($j=p+1,\ldots,n$). From the result in  Case (i), we then have $  W_{n:n} \le_{*} X_{n:n}$. But,
since  star order is scale invariant, it then follows that $ Y_{n:n} \le_{*} X_{n:n}$. 
\end{proof}
\begin{theorem}\label{M-star2}
Let $X_i \sim EW (\alpha,\lambda_1,k)$ $(i=1,\ldots,p)$  and $X_j \sim EW (\alpha,\lambda_2,k)$ $(j=p+1,\ldots,n)$, and $Y_i \sim EW (\alpha,\mu_1,k)$  $(i=1,\ldots,p)$ and $Y_j \sim EW (\alpha,\mu_2,k)$ $(j=p+1,\ldots,n)$ be variables with a common Archimedean survival
copula having generator $\psi$. Then, if
\[(\alpha+\bar{\alpha}t)\ln\left(\frac{t}{\alpha+\bar{\alpha}t}\right)\left((\alpha +2\bar{\alpha}t)+(\alpha t+\bar{\alpha}t^2)\frac{\phi''(t)}{\phi'(t)}\right)\]
is decreasing with respect to $t \in [0,1]$ and $0\le k\leq 1$, we have
\begin{equation}
(\lambda_1-\lambda_2)(\mu_1-\mu_2) \ge 0\,\,\,\,\,and\,\,\,\,\,\,
\frac{\lambda_{2:2}}{\lambda_{1:2}} \ge \frac{\mu_{2;2}}{\mu_{1:2}}
\Longrightarrow
Y_{1:n} \le_{*} X_{1:n},
\end{equation}
where $p+q=n$.
\end{theorem}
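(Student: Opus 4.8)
The plan is to follow the proof of Theorem~\ref{Maj-star1} almost verbatim, switching from the parallel system governed by the copula to the series system governed by the survival copula. Since the survival function of $EW(\alpha,\lambda,k)$ is $\bar F(x)=\frac{\alpha e^{-(\lambda x)^{k}}}{1-\bar\alpha e^{-(\lambda x)^{k}}}$, under a common survival copula with generator $\psi$ the series lifetimes have distribution functions
\begin{equation*}
F_{X_{1:n}}(x)=1-\psi\!\left[p\,\phi\!\left(\frac{\alpha e^{-(\lambda_1 x)^{k}}}{1-\bar\alpha e^{-(\lambda_1 x)^{k}}}\right)+q\,\phi\!\left(\frac{\alpha e^{-(\lambda_2 x)^{k}}}{1-\bar\alpha e^{-(\lambda_2 x)^{k}}}\right)\right],\qquad q=n-p,
\end{equation*}
and the analogous one for $Y_{1:n}$ with $\mu_1,\mu_2$. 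As in Theorem~\ref{Maj-star1}, the hypothesis $(\lambda_1-\lambda_2)(\mu_1-\mu_2)\ge 0$ lets me assume $\lambda_1\le\lambda_2$ and $\mu_1\le\mu_2$, and I split the argument into Case~(i) $\lambda_1+\lambda_2=\mu_1+\mu_2$ and Case~(ii) $\lambda_1+\lambda_2\neq\mu_1+\mu_2$.

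In Case~(i) I normalise the common sum to $1$, write $\lambda_2=\lambda$, $\lambda_1=1-\lambda$ with $\lambda\in(1/2,1]$, and appeal to Lemma~\ref{lem_star}: it suffices to prove that $F'_\lambda(x)/\big(x f_\lambda(x)\big)$ is decreasing on $(0,\infty)$, where $F'_\lambda$ is the $\lambda$-derivative of the normalised distribution function and $f_\lambda$ its density. Differentiating, the common factor $\psi'[\cdots]$ cancels, and after collecting terms the ratio becomes an increasing function of the single quantity
\begin{equation*}
\Delta(x)=\frac{t_1(\alpha+\bar\alpha t_1)\,\phi'(t_1)}{t_2(\alpha+\bar\alpha t_2)\,\phi'(t_2)},\qquad
t_1=\frac{\alpha e^{-\lambda^{k} x^{k}}}{1-\bar\alpha e^{-\lambda^{k} x^{k}}},\quad
t_2=\frac{\alpha e^{-(1-\lambda)^{k} x^{k}}}{1-\bar\alpha e^{-(1-\lambda)^{k} x^{k}}},
\end{equation*}
so it is enough to show $\Delta$ is decreasing in $x$. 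The crucial structural difference from Theorem~\ref{Maj-star1} is that, the system now being a series system, the arguments of $\phi$ are the survival quantities $t_i$ themselves rather than $1-t_i$; this is exactly what produces $\phi''(t)/\phi'(t)$ (instead of $\phi''(1-t)/\phi'(1-t)$) with a plus sign in the final criterion.

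Next I would use the identities $e^{-\lambda^{k}x^{k}}=t_1/(\alpha+\bar\alpha t_1)$ together with $t_1'=\frac{k(\alpha+\bar\alpha t_1)t_1}{\alpha x}\ln\!\big(\frac{t_1}{\alpha+\bar\alpha t_1}\big)$, and likewise for $t_2$, just as in Theorem~\ref{Maj-star1}. Taking the logarithmic derivative of $\Delta$ and substituting, the sign of $\Delta'(x)$ is governed by the difference $h(t_1)-h(t_2)$, where
\begin{equation*}
h(t)=(\alpha+\bar\alpha t)\ln\!\left(\frac{t}{\alpha+\bar\alpha t}\right)\left((\alpha+2\bar\alpha t)+(\alpha t+\bar\alpha t^{2})\frac{\phi''(t)}{\phi'(t)}\right).
\end{equation*}
Because $\lambda>1/2$ forces $t_1<t_2$ for every $x>0$, the stated hypothesis that $h$ is decreasing on $[0,1]$ pins down the sign of this difference and gives $\Delta'\le 0$; hence $\Delta$, and therefore $F'_\lambda/(xf_\lambda)$, is decreasing. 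Lemma~\ref{lem_star} then yields $X_\lambda\geq_{*}X_\mu$ whenever $\lambda\ge\mu$, and the condition $\lambda_{2:2}/\lambda_{1:2}\ge\mu_{2:2}/\mu_{1:2}$ is precisely $\lambda\ge\mu$ after normalisation, so $Y_{1:n}\leq_{*}X_{1:n}$.

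Finally, in Case~(ii) I write $\lambda_1+\lambda_2=c(\mu_1+\mu_2)$ for a scalar $c>0$, so that $(c\mu_1,c\mu_2)\preceq^{m}(\lambda_1,\lambda_2)$; comparing $X_{1:n}$ with the series system $W_{1:n}$ built from rates $c\mu_1,c\mu_2$ reduces to Case~(i) and gives $W_{1:n}\leq_{*}X_{1:n}$, after which the scale invariance of the star order delivers $Y_{1:n}\leq_{*}X_{1:n}$. I expect the main obstacle to be the sign bookkeeping in the reduction of $\Delta'(x)$ to $h(t_1)-h(t_2)$: because $\phi'<0$ and the survival function sits directly inside $\phi$, several signs flip relative to the parallel case, and one must track them carefully to confirm that it is the \emph{decreasing} monotonicity of $h$ (with the argument $t$, not $1-t$, in $\phi''/\phi'$) that is required.
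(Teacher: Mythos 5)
Your overall route is the paper's own: reduce via Lemma \ref{lem_star} after normalising $\lambda_1+\lambda_2=\mu_1+\mu_2=1$, show that $F'_\lambda(x)/(xf_\lambda(x))$ is a monotone function of the single ratio
$\Delta(x)=\dfrac{t_1(\alpha+\bar\alpha t_1)\phi'(t_1)}{t_2(\alpha+\bar\alpha t_2)\phi'(t_2)}$,
differentiate using $t_i'=\frac{k(\alpha+\bar\alpha t_i)t_i}{\alpha x}\ln\bigl(\frac{t_i}{\alpha+\bar\alpha t_i}\bigr)$, and finish Case (ii) by the scale invariance of the star order. The quantity $\Delta$ (the paper calls it $\Omega$), the derivative identities, the observation that the series structure puts $t$ rather than $1-t$ inside $\phi$, and the two-case decomposition all coincide with the paper's argument.

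There is, however, a concrete sign problem in your Case (i), exactly at the spot you flagged. You normalise with $\lambda=\lambda_2\in(1/2,1]$ (the larger rate), which forces $t_1<t_2$, and you assert that $\Delta'(x)\stackrel{sgn}{=}h(t_1)-h(t_2)$ together with $h$ \emph{decreasing} yields $\Delta'\le 0$. It does not: decreasing $h$ and $t_1<t_2$ give $h(t_1)\ge h(t_2)$, i.e.\ $\Delta'\ge 0$, the opposite of what you need, and no reorientation of the ratio (replacing $\Delta$ by $1/\Delta$) repairs this, since that flips both the required monotonicity of $\Delta$ and the sign of its derivative simultaneously. The paper's proof of Theorem \ref{M-star2} takes the opposite normalisation, $\lambda=\lambda_1\in[0,1/2)$ (the \emph{smaller} rate), so that $t_1>t_2$ and the decreasing hypothesis on $h$ does deliver $h(t_1)-h(t_2)\le 0$; note this is the reverse of the convention used in Theorem \ref{Maj-star1}, which you carried over. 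So as written your argument stalls at the final inequality of Case (i): you must either switch to the paper's labelling and recheck how Lemma \ref{lem_star} is being invoked under it, or rederive which monotonicity of $h$ your labelling actually demands — the two conventions are not interchangeable, and "several signs flip" is precisely the step that cannot be waved through.
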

\begin{proof} Assume that $(\lambda_1-\lambda_2)(\mu_1-\mu_2) \ge 0$. Now, without
loss of generality, let us assume that $\lambda_1\le\lambda_2$ and
$\mu_1\le\mu_2$. The distribution
functions of $X_{1:n}$ and $Y_{1:n}$ are 
\begin{equation}
F_{X_{1:n}}(x)=1-\psi\left[p\phi\left(\frac{\alpha e^{-(\lambda_1 x)^k}}{1-\bar{\alpha} e^{-(\lambda_1 x)^k}}\right)+q\phi\left(\frac{\alpha e^{- (\lambda_2 x)^k}}{1-\bar{\alpha}e^{- (\lambda_2 x)^k}} \right)\right],\quad x\in (0,\infty),
\end{equation}
\begin{equation}
F_{X_{1:n}}(x)=1-\psi\left[p\phi\left(\frac{\alpha e^{-(\mu_1 x)^k}}{1-\bar{\alpha} e^{-(\mu_1 x)^k}}\right)+q\phi\left(\frac{\alpha e^{- (\mu_2 x)^k}}{1-\bar{\alpha}e^{-( \mu_2 x)^k}} \right)\right],\quad x\in (0,\infty),
\end{equation}
where $q=n-p$. In this case, the proof can be completed by  considering the following two cases.\\
{\bf {Case (i)}:} \,\,$\beta_1+\beta_2=\mu_1+\mu_2.$ For convenience, let us assume that $\beta_1+\beta_2=\mu_1+\mu_2=1$. 
Set $\lambda_1=\lambda$, $\mu_1=\mu$,
$\lambda_2=1-\lambda$ and
$\mu_1=1-\mu$. Then, under this setting, the distribution
functions of $X_{1:n}$ and $Y_{1:n}$ are 
\begin{equation}
F_{\lambda}(x)=1-\psi\left[p\phi\left(\frac{\alpha e^{-\lambda^k x^k}}{1-\bar{\alpha} e^{-\lambda^k x^k}}\right)+q\phi\left(\frac{\alpha e^{-(1-\lambda)^k x^k}}{1-\bar{\alpha}e^{-(1-\lambda)^k x^k}} \right)\right],\quad x\in (0,\infty),
\end{equation}
and
\begin{eqnarray*}
F_{\mu}(x)=1-\psi\left[p\phi\left(\frac{\alpha e^{-\mu^k x^k}}{1-\bar{\alpha} e^{-\mu^k x^k}}\right)+q\phi\left(\frac{\alpha e^{-(1-\mu)^k x^k}}{1-\bar{\alpha}e^{-(1-\mu)^k x^k}} \right)\right],\quad x\in (0,\infty),
\end{eqnarray*}
respectively. Now, to obtain the required
result, it is sufficient to show that $\frac{F'_\lambda(x)}{x f_\lambda (x)}$ decreasing  in $x \in(0,\infty)$ for $\alpha \in[0,1/2)$.
The derivative of $F_{\lambda}$ with respect to $\lambda$ is
{\fontsize{9}{12}\selectfont\begin{eqnarray*}
F'_\lambda(x)&=&-\psi'\left[p\phi\left(\frac{\alpha e^{-\lambda^k x^k}}{1-\bar{\alpha}e^{-\lambda^k x^k}}
\right)+q\phi\left(\frac{\alpha e^{-(1-\lambda)^k x^k}}{1-\bar{\alpha} e^{-(1-\lambda)^k x^k}}\right)\right]\\
&&\times \left[p\frac{-\alpha x^k e^{-\lambda^k x^k}}{(1-\bar{\alpha} e^{-\lambda^k x^k})^2}
\phi'\left(\frac{\alpha e^{-\lambda^k x^k}}{1-\bar{\alpha}e^{-\lambda^k x^k}} \right)k\lambda^{k-1}+q\frac{\alpha xe^{-(1-\lambda)^k x^k}}{(1-\bar{\alpha}e^{-(1-\lambda)^k x^k})^2}\phi'\left(\frac{\alpha e^{-(1-\lambda)^k x^k}}{1-\bar{\alpha}e^{-(1-\lambda)^k x^k}}\right)k(1-\lambda)^{k-1}\right].
\end{eqnarray*}}
On the other hand, the density function corresponding to
$F_{\lambda}$ has the form
{\fontsize{9.5}{12}\selectfont\begin{eqnarray*}
f_\lambda(x)&=&-\psi'\left[p\phi\left(\frac{\alpha e^{-\lambda^k  x^k}}{1-\bar{\alpha}e^{-\lambda^k x^k}}
\right)+q\phi\left(\frac{\alpha e^{-(1-\lambda)^k x^k}}{1-\bar{\alpha} e^{-(1-\lambda)^k x^k}}\right)\right]\\
&&\times \left[p\frac{-\alpha \lambda^k e^{-\lambda^k x^k}}{(1-\bar{\alpha} e^{-\lambda^k x^k})^2}
\phi'\left(\frac{\alpha e^{-\lambda^k x^k}}{1-\bar{\alpha}e^{-\lambda^k x^k}} \right)-q\frac{(1-\lambda)^k \alpha e^{-(1-\lambda)^k x^k}}{(1-\bar{\alpha}e^{-(1-\lambda)^k x^k})^2}\phi'\left(\frac{\alpha e^{-(1-\lambda)^k x^k}}{1-\bar{\alpha}e^{-(1-\lambda)^k x^k}}\right)\right]kx^{k-1}.
\end{eqnarray*}}
So, we have 
{\fontsize{9}{12}\selectfont\begin{eqnarray*}
\frac{F'_\lambda(x)}{xf_\lambda(x)}&=& \frac{p\frac{-\alpha x^k e^{-\lambda^k x^k}}{(1-\bar{\alpha} e^{-\lambda^k x^k})^2}
\phi'\left(\frac{\alpha e^{-\lambda^k x^k}}{1-\bar{\alpha}e^{-\lambda^k x^k}} \right)\lambda^{k-1}+q\frac{\alpha x^k e^{-(1-\lambda)^k x^k}}{(1-\bar{\alpha}e^{-(1-\lambda)^k x^k})^2}\phi'\left(\frac{\alpha e^{-(1-\lambda)^k x^k}}{1-\bar{\alpha}e^{-(1-\lambda)^k x^k}}\right)(1-\lambda)^{k-1}}{p\frac{-\alpha x^k\lambda^k e^{-\lambda^k x^k}}{(1-\bar{\alpha} e^{-\lambda^k x^k})^2}
\phi'\left(\frac{\alpha e^{-\lambda^k x^k}}{1-\bar{\alpha}e^{-\lambda^k x^k}} \right)-q\frac{(1-\lambda)^k x^k \alpha e^{-(1-\lambda)^k x^k}}{(1-\bar{\alpha}e^{-(1-\lambda)^k x^k})^2}\phi'\left(\frac{\alpha e^{-(1-\lambda)^k x^k}}{1-\bar{\alpha}e^{-(1-\lambda)^k x^k}}\right)}\\
&=&\left(\lambda+\frac{\frac{-q\alpha x^k e^{-(1-\lambda)^k x^k}}{(1-\bar{\alpha} e^{-(1-\lambda)^k x^k})^2}\phi'\left(\frac{\alpha e^{-(1-\lambda)^k x^k}}{1-\bar{\alpha}e^{-(1-\lambda)^k x^k}}\right)(1-\lambda)^{k-1}}{-p\frac{\alpha x^k e^{-\lambda^k x^k}}{(1-\bar{\alpha} e^{-\lambda^k x^k})^2}
\phi'\left(\frac{\alpha e^{-\lambda^k x^k}}{1-\bar{\alpha}e^{-\lambda^k x^k}} \right)\lambda^{k-1}+q\frac{\alpha x^k e^{-(1-\lambda)^k x^k}}{(1-\bar{\alpha}e^{-(1-\lambda)^k x^k})^2}\phi'\left(\frac{\alpha e^{-(1-\lambda)^k x^k}}{1-\bar{\alpha}e^{-(1-\lambda) x}}\right)(1-\lambda)^{k-1}} \right)^{-1}\\
&=&\left(\lambda+\left[\frac{p\frac{e^{-\lambda^k x^k}}{(1-\bar{\alpha}e^{-\lambda^k x^k})^2}\phi'\left(\frac{\alpha e^{-\lambda^k x^k}}{1-\bar{\alpha}e^{-\lambda^k x^k}}\right)\lambda^{k-1}}{q\frac{e^{-(1-\lambda)^k x^k}}{(1-\bar{\alpha} e^{-(1-\lambda)^k x^k})^2}\phi'\left(\frac{\alpha e^{-(1-\lambda)^k x^k}}{1-\bar{\alpha}e^{-(1-\lambda)^k x^k}}\right)(1-\lambda)^{k-1}}-1 \right]^{-1}\right)^{-1}.
\end{eqnarray*}}
We can then conclude that 
 $\frac{F'_\lambda (x)}{xf_\lambda (x)}$ is decreasing if 
\begin{equation*}
\Omega(x)=\frac{e^{\lambda^k x^k}\left(\frac{\alpha e^{-\lambda^k x^k}}{1-\bar{\alpha}e^{-\lambda^k x^k}}\right)^2\phi'\left(\frac{\alpha e^{-\lambda^k x^k}}{1-\bar{\alpha}e^{-\lambda^k x^k}}\right)}{e^{(1-\lambda)^k x^k}\left(\frac{\alpha e^{-(1-\lambda)^k x^k}}{1-\bar{\alpha} e^{-(1-\lambda)^k x^k}}\right)^2\phi'\left(\frac{\alpha e^{(1-\lambda)^k x^k}}{1-\bar{\alpha}e^{(1-\lambda)^k x^k}}\right)}
\end{equation*}
 is decreasing for $x\in (0,\infty)$. Let $t_1=\frac{\alpha e^{-\lambda^k x^k}}{1-\bar{\alpha}e^{-\lambda^k x^k}}$ and $t_2=\frac{\alpha e^{-(1-\lambda)^k x^k}}{1-\bar{\alpha}e^{-(1-\lambda)^k x^k}}$. 
 Then, $e^{\lambda^k x^k}=\frac{\alpha+\bar{\alpha}t_1}{t_1}$ and $e^{(1-\lambda)^k x^k}=\frac{\alpha+\bar{\alpha}t_2}{t_2}$, and so 
\[\Omega(x)=\frac{t_1(\alpha+\bar{\alpha}t_1)\phi'\left(t_1\right)}{t_2(\alpha+\bar{\alpha}t_2)\phi'\left(t_2\right)},\] 
whose derivative with respect to $x$ is 
\begin{eqnarray*}
\Omega'(x)&=&\left(\frac{t_1(\alpha+\bar{\alpha}t_1)\phi'\left(t_1\right)}{t_2(\alpha+\bar{\alpha}t_2)\phi'\left(t_2\right)}\right)'\\
&\stackrel{sgn}{=}&\left((\alpha t_1'+2\bar{\alpha}t_1t_1')\phi'(t_1)+(\alpha t_1+\bar{\alpha}t_1^2)t_1'\phi''(t_1)\right)\times (\alpha t_2+\bar{\alpha}t_2^2)\phi'(t_2)\\
&&-\left((\alpha t_2'+2\bar{\alpha}t_2t_2')\phi'(t_2)+(\alpha t_2+\bar{\alpha}t_2^2)t_2'\phi''(t_2)\right)\times (\alpha t_1+\bar{\alpha}t_1^2)\phi'(t_1)\\
&\stackrel{sgn}{=}&\frac{t_1'}{t_1}\left((\alpha +2\bar{\alpha}t_1)+(\alpha t_1+\bar{\alpha}t_1^2)\frac{\phi''(t_1)}{\phi'(t_1)}\right)\\
&&-\frac{t_2'}{t_2}\left((\alpha +2\bar{\alpha}t_2)+(\alpha t_2+\bar{\alpha}t_2^2)\frac{\phi''(t_2)}{\phi'(t_2)}\right).
\end{eqnarray*}
It is easy to show that 
\begin{equation*}
t_1'=\frac{(\alpha+\bar{\alpha}t_1)t_1}{\alpha x}\ln\left(\frac{t_1}{\alpha+\bar{\alpha}t_1}\right),\qquad t_2'=\frac{(\alpha+\bar{\alpha}t_2)t_2}{\alpha x}\ln\left(\frac{t_2}{\alpha+\bar{\alpha}t_2}\right).
\end{equation*}
Hence, we have
\begin{eqnarray*}
\Omega'(x)&\stackrel{sgn}{=}&(\alpha+\bar{\alpha}t_1)\ln\left(\frac{t_1}{\alpha+\bar{\alpha}t_1}\right)\left((\alpha +2\bar{\alpha}t_1)+(\alpha t_1+\bar{\alpha}t_1^2)\frac{\phi''(t_1)}{\phi'(t_1)}\right)\\
&-&(\alpha+\bar{\alpha}t_2)\ln\left(\frac{t_2}{\alpha+\bar{\alpha}t_2}\right)\left((\alpha +2\bar{\alpha}t_2)+(\alpha t_2+\bar{\alpha}t_2^2)\frac{\phi''(t_2)}{\phi'(t_2)}\right).
\end{eqnarray*}
Now, as $t_2<t_1$, $\Omega(x)$ is decreasing if 
\[(\alpha+\bar{\alpha}t)\ln\left(\frac{t}{\alpha+\bar{\alpha}t}\right)\left((\alpha +2\bar{\alpha}t)+(\alpha t+\bar{\alpha}t^2)\frac{\phi''(t)}{\phi'(t)}\right)\]
is decreasing in $t\in[0,1]$, as required.

 {\bf {Case (ii)}.}\,\,$\lambda_1+\lambda_2\neq \mu_1+\mu_2.$  In this case, we can note that $\lambda_1+\lambda_2=k(\mu_1+\mu_2)$, where $k$ is
a scalar. We then have $(k\mu_1,k\mu_2)\stackrel{m}{\preceq}(\lambda_1 , \lambda_2)$. 
 Let $W_{1:n}$ be the lifetime of a series system
having $n$ dependent extended exponentially distributed components whose lifetimes have an
Archimedean copula with generator $\psi$, where $W_i \sim  EW(\alpha, k\mu_1)$ ($i=1,\ldots, p$) and $W_{j} \sim EW(\alpha,k\mu_2)$
 ($j=p+1,\ldots,n$). From the result in Case (i), we then have $  W_{1:n} \le_{*} X_{1:n}$. But,
since  star order is scale invariant, it  then follows that $ Y_{1:n} \le_{*} X_{1:n}$. 
\end{proof}
It is important to mention that $X\leq_{*}Y$ implies $X\leq_{Loenz}Y.$ Therefore, from Theorems \ref{Maj-star1} and \ref{M-star2}, we readily obtain the following two corollaries.

\begin{corollary}\label{Cor-M-star2}
Let $X_i \sim EW (\alpha,\lambda_1,k)$   $(i=1,\ldots,p)$ and $X_j \sim EW (\alpha,\lambda_2,k)$   $(j=p+1,\ldots,n)$, and $Y_i \sim EW (\alpha,\mu_1,k)$   $(i=1,\ldots,p)$ and $Y_j \sim EW (\alpha,\mu_2,k)$   $(j=p+1,\ldots,n)$ be with a common Archimedean survival
copula having generator $\psi$.  Then, if 
\[(\alpha+\bar{\alpha}t)\ln\left(\frac{t}{\alpha+\bar{\alpha}t}\right)\left((\alpha +2\bar{\alpha}t)+(\alpha t+\bar{\alpha}t^2)\frac{\phi''(t)}{\phi'(t)}\right)\]
is decreasing with respect to $t \in [0,1]$ and $0\le k\leq 1$, we have
\begin{equation}
(\lambda_1-\lambda_2)(\mu_1-\mu_2) \ge 0\,\,\,\,\,and\,\,\,\,\,\,
\frac{\lambda_{2:2}}{\lambda_{1:2}} \ge \frac{\mu_{2;2}}{\mu_{1:2}}
\Longrightarrow
Y_{1:n} \le_{Lorenz} X_{1:n},
\end{equation}
where $p+q=n$.
\end{corollary}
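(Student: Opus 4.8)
The plan is to recognize that Corollary~\ref{Cor-M-star2} is an immediate consequence of Theorem~\ref{M-star2}, since the hypotheses are word-for-word identical: the same distributional setup with $X_i,Y_i\sim EW(\alpha,\cdot,k)$ split into blocks of sizes $p$ and $q=n-p$, the same common Archimedean survival copula with generator $\psi$, the same requirement that
\[
(\alpha+\bar{\alpha}t)\ln\left(\frac{t}{\alpha+\bar{\alpha}t}\right)\left((\alpha +2\bar{\alpha}t)+(\alpha t+\bar{\alpha}t^2)\frac{\phi''(t)}{\phi'(t)}\right)
\]
be decreasing on $t\in[0,1]$ with $0\le k\le 1$, and the same two structural conditions $(\lambda_1-\lambda_2)(\mu_1-\mu_2)\ge 0$ and $\lambda_{2:2}/\lambda_{1:2}\ge \mu_{2:2}/\mu_{1:2}$. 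Thus Theorem~\ref{M-star2} applies verbatim and delivers the star-order inequality $Y_{1:n}\le_{*}X_{1:n}$.

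First I would invoke Theorem~\ref{M-star2} under the stated assumptions to secure $Y_{1:n}\le_{*}X_{1:n}$. Then I would appeal to the standard fact recalled in Section~\ref{s1}, namely that the star order implies the Lorenz order ($U\le_{*}V\Rightarrow U\le_{Lorenz}V$), to conclude immediately that $Y_{1:n}\le_{Lorenz}X_{1:n}$, which is precisely the asserted implication. No additional reduction, change of variables, or sign analysis is needed, because all of that has already been carried out inside the proof of Theorem~\ref{M-star2}.

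Since the entire analytic burden---the case split on whether $\lambda_1+\lambda_2=\mu_1+\mu_2$, the evaluation of $F'_\lambda/(xf_\lambda)$, the substitution to $t_1,t_2$, and the monotonicity check via Lemma~\ref{lem_star}---is inherited from Theorem~\ref{M-star2}, there is no genuine obstacle here. The only ingredient that must be in place is the implication ``star order $\Rightarrow$ Lorenz order,'' which is a well-known property of these stochastic orders and requires nothing further; consequently the corollary reduces to a one-line deduction rather than an independent argument.
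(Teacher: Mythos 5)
Your proposal is correct and matches the paper exactly: the authors also obtain this corollary as an immediate consequence of Theorem \ref{M-star2} together with the fact, noted just before the corollary, that the star order implies the Lorenz order. No further argument is needed.
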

\begin{corollary}\label{Cor-Maj-star1}
Let $X_i \sim EW (\alpha,\lambda_1,k)$   $(i=1,\ldots,p)$ and $X_j \sim EW (\alpha,\lambda_2,k)$  $(j=p+1,\ldots,n)$, and $Y_i \sim EW (\alpha,\mu_1,k)$  $(i=1,\ldots,p)$ and $Y_j \sim EW (\alpha,\mu_2,k )$  $(j=p+1,\ldots,n)$ be with a common Archimedean copula having generator $\psi$.  Then, if 
\begin{equation*}
(\alpha+\bar{\alpha}t)\ln\left(\frac{t}{\alpha+\bar{\alpha}t}\right)\left[\alpha+2\bar{\alpha}t-\frac{(\alpha+\bar{\alpha}
t)t\phi''(1-t)}{\phi'(1-t)}\right]
\end{equation*}
is increasing with respect to $t \in [0,1]$ and $0\le k\leq 1$, we have
\begin{eqnarray*}
(\lambda_1-\lambda_2)(\mu_1-\mu_2) \ge 0\,\,\,\,\,and\,\,\,\,\,\,
\frac{\lambda_{2:2}}{\lambda_{1:2}} \ge \frac{\mu_{2;2}}{\mu_{1:2}}
\Longrightarrow
Y_{n:n} \le_{Lorenz} X_{n:n},
\end{eqnarray*}
where $p+q=n$.
\end{corollary}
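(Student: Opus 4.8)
The plan is to observe that Corollary \ref{Cor-Maj-star1} is an immediate consequence of Theorem \ref{Maj-star1} together with the standard fact that the star order is stronger than the Lorenz order. Indeed, the hypotheses imposed here---that $\psi$ is a common generator, that $0\le k\le 1$, and that
\[(\alpha+\bar{\alpha}t)\ln\left(\frac{t}{\alpha+\bar{\alpha}t}\right)\left[\alpha+2\bar{\alpha}t-\frac{(\alpha+\bar{\alpha}t)t\phi''(1-t)}{\phi'(1-t)}\right]\]
is increasing in $t\in[0,1]$---are precisely those of Theorem \ref{Maj-star1}. Hence I would first invoke Theorem \ref{Maj-star1} verbatim: under the conditions $(\lambda_1-\lambda_2)(\mu_1-\mu_2)\ge 0$ and $\lambda_{2:2}/\lambda_{1:2}\ge \mu_{2:2}/\mu_{1:2}$, it yields $Y_{n:n}\le_{*}X_{n:n}$.

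The second and final step is to weaken this star-order conclusion to the Lorenz order. As recalled in Section \ref{s1}, it is a classical fact that $X\le_{*}Y$ implies $X\le_{Lorenz}Y$ (see \cite{shaked2007stochastic}). Applying this implication directly to the pair $Y_{n:n}$ and $X_{n:n}$, the relation $Y_{n:n}\le_{*}X_{n:n}$ established above gives at once $Y_{n:n}\le_{Lorenz}X_{n:n}$, which is exactly the assertion of the corollary.

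There is essentially no genuine obstacle in this argument, since all of the analytic work---the reduction to the ratio $\Delta(x)$, the substitutions $t_1,t_2$, and the sign analysis of $\Delta'(x)$---has already been carried out in the proof of Theorem \ref{Maj-star1}. The only point requiring care is to confirm that the list of hypotheses matches exactly (in particular the identical monotonicity condition on the bracketed expression in $t$ and the same constraint $0\le k\le 1$), so that Theorem \ref{Maj-star1} applies without modification; once that verification is made, the star-to-Lorenz implication closes the proof immediately.
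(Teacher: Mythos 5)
Your proposal is correct and coincides exactly with the paper's own argument: the authors derive this corollary by invoking Theorem \ref{Maj-star1} and then applying the standard implication that the star order implies the Lorenz order. The only cosmetic discrepancy is that the corollary writes $0\le k\le 1$ while Theorem \ref{Maj-star1} assumes $0<k\le 1$, but this is an inconsistency in the paper's own statement rather than a gap in your reasoning.
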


We now present some conditions for comparing the smallest order statistics in terms of dispersive order. In the following theorem, we use the notation
$I_+=\{(\lambda_{1},\ldots,\lambda_{n}):0<\lambda_{1}\leq\ldots\leq\lambda_{n}\}$ and
$D_+=\{(\lambda_{1},\ldots,\lambda_{n}):\lambda_{1}\geq\ldots\geq\lambda_{n}>0\} .$
\begin{theorem}
Let $X_i\sim EW(\alpha, \lambda_i,k)$ $(i=1,\ldots,n)$ and
$Y_i\sim EW(\alpha, \lambda ,k)$ $(i=1,\ldots,n)$ and the associated Archimedean survival copula for both be with generator $\psi,$ $0 \leq k \leq 1$ and $0 \leq \alpha \leq 1$. Then, $\lambda \leq (\lambda_1\cdots\lambda_n)^{\frac{1}{n}} $ implies $Y_{1:n} \succeq_{disp} X_{1:n}$ if ${\psi}/{\psi^{'}}$ is decreasing and concave.
\end{theorem}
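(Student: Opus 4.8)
The plan is to first record the series-system survival functions under the common-generator Archimedean survival copula,
$$\bar F_{X_{1:n}}(x)=\psi\Big[\sum_{m=1}^{n}\phi\Big(\tfrac{\alpha e^{-(\lambda_m x)^k}}{1-\bar\alpha e^{-(\lambda_m x)^k}}\Big)\Big],\qquad \bar F_{Y_{1:n}}(x)=\psi\Big[n\,\phi\Big(\tfrac{\alpha e^{-(\lambda x)^k}}{1-\bar\alpha e^{-(\lambda x)^k}}\Big)\Big],$$
and then to deduce the dispersive order from the classical sufficient condition (Bagai--Kochar; see also the monograph of Shaked and Shanthikumar cited earlier): if $X_{1:n}\le_{hr}Y_{1:n}$ and one of the two minima has a decreasing hazard rate (is DFR), then $X_{1:n}\le_{disp}Y_{1:n}$. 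Thus the proof splits into (i) the hazard rate comparison and (ii) a DFR statement for the homogeneous minimum $Y_{1:n}$, and I expect all three standing hypotheses to be consumed here.

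For part (ii) I would use that $Y_{1:n}$ is a pure scale family: writing $\bar G(y)=\alpha e^{-y^k}/(1-\bar\alpha e^{-y^k})$ and $\Phi(y)=\psi\big(n\phi(\bar G(y))\big)$, one has $\bar F_{Y_{1:n}}(x)=\Phi(\lambda x)$, so $Y_{1:n}\stackrel{d}{=}W/\lambda$ with $\bar F_W=\Phi$, and since the DFR property is scale invariant it suffices to show $W$ is DFR, i.e. $\log\Phi$ is convex. Each $EW(\alpha,1,k)$ margin is DFR because the baseline Weibull with $0<k\le1$ has decreasing hazard and, by the hazard-rate identity recorded in the Introduction, the tilt $0<\alpha\le1$ preserves a decreasing hazard. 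The passage from DFR margins to a DFR minimum is exactly where ``$\psi/\psi'$ decreasing'' enters: this is equivalent to log-convexity of $\psi$, since $(\psi/\psi')'\le0\Leftrightarrow\psi\psi''\ge(\psi')^2\Leftrightarrow(\log\psi)''\ge0$, and log-convexity of the generator propagates log-convexity of the marginal survival through $\Phi$; the additional concavity of $\psi/\psi'$ is meant to secure the sign in the second-derivative computation of $\log\Phi$.

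For part (i), writing $H_X(x)=\sum_m\phi(\bar G(\lambda_m x))$ so that $r_{X_{1:n}}(x)=-\psi'(H_X(x))H_X'(x)/\psi(H_X(x))$, I would reduce $X_{1:n}\le_{hr}Y_{1:n}$, that is $r_{X_{1:n}}(x)\ge r_{Y_{1:n}}(x)$ for all $x$, to a statement in the log-scales $v_i=\log\lambda_i$. The geometric-mean hypothesis gives $\lambda\le\lambda_G:=(\prod_i\lambda_i)^{1/n}$ together with $(\log\lambda_1,\dots,\log\lambda_n)\succeq^{m}(\log\lambda_G,\dots,\log\lambda_G)$, while $r_{Y_{1:n}}(x)$ is precisely $r_{X_{1:n}}(x)$ evaluated at the constant vector $(\log\lambda,\dots,\log\lambda)$. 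Hence it is enough to prove that, for each fixed $x$, the map $(v_1,\dots,v_n)\mapsto r_{X_{1:n}}(x)$ is \emph{increasing} (handling the lowering of the common value from $\lambda_G$ down to $\lambda$) and \emph{Schur-convex} (handling the majorization step up to the heterogeneous vector); Lemma \ref{schur-critetia} then yields $r_{X_{1:n}}\ge r_{Y_{1:n}}$. The monotonicity and the sign of the Schur-difference $(v_i-v_j)\big(\partial_{v_i}-\partial_{v_j}\big)r_{X_{1:n}}$ would be obtained by the same sign analysis as in Lemmas \ref{dec-2}--\ref{lemma2.7}, using $0<k\le1$, $0<\alpha\le1$ and the monotonicity/concavity of $\psi/\psi'$.

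The main obstacle is the Schur-convexity in part (i) for a \emph{general} Archimedean generator. Unlike the independence-copula hazard-rate theorem, the generator does not factor the minimum's hazard into a plain sum, so differentiating $-\psi'(H_X)H_X'/\psi(H_X)$ in $v_i$ couples the ``outer'' terms in $\psi',\psi''$ (controlled by log-convexity and concavity of $\psi/\psi'$) with the ``inner'' terms coming from $\phi$ and the tilted-Weibull survival $\bar G$ (controlled by $k\le1$ and $\alpha\le1$). I would isolate the common nonnegative factor $\psi'(H_X(x))$ and reduce the remaining bracket to a one-variable inequality in $t=\bar G(\lambda_i x)\in(0,1)$, as was done in Lemma \ref{lemma2.7} and in the star-order theorems, which should make the required monotonicity transparent; verifying the convexity of $\log\Phi$ for part (ii) is the secondary technical point, and I would expect the concavity half of the $\psi/\psi'$ hypothesis to be what ultimately closes whichever of these two sign computations is the tighter.
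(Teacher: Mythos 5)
Your architecture --- prove $X_{1:n}\leq_{hr}Y_{1:n}$, prove that one of the two minima is DFR, and invoke the Bagai--Kochar criterion (hazard rate order plus a DFR member implies dispersive order) --- is a legitimate strategy in principle, and it is genuinely different from the paper's. The paper never passes through the hazard rate order: writing $F_1(x)=1-\psi\big(\sum_k\phi(\bar F(\lambda_k x))\big)$ and $H_1(x)=1-\psi\big(n\phi(\bar F(\lambda x))\big)$, it verifies the quantile-density criterion $f_1(F_1^{-1}(u))\geq h_1(H_1^{-1}(u))$ directly, via the identity $H_1^{-1}(F_1(x))=\frac{1}{\lambda}\bar F^{-1}\big(\psi\big(\frac{1}{n}\sum_k\phi(\bar F(\lambda_k x))\big)\big)$ and four ingredients: log-convexity of the marginal survival function, monotonicity and convexity of $x\mapsto\bar h(e^x)$ where $\bar h(x)=xf(x)/\bar F(x)$ (Lemmas \ref{dec-3} and \ref{dec-4}), a Jensen-type inequality supplied by the concavity of $\psi/\psi'$, and Chebyshev's sum inequality combined with $\lambda\leq(\lambda_1\cdots\lambda_n)^{1/n}$ and the decreasingness of $\psi/\psi'$. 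The hypotheses of the theorem are tailored to exactly these steps.

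The gap in your plan is that both of its load-bearing claims are left as sketches, and the first is strictly stronger than the theorem you are trying to prove. For (i), you need the hazard rate $-\psi'(H(x))H'(x)/\psi(H(x))$, with $H(x)=\sum_m\phi(\bar G(\lambda_m x))$, to be increasing and Schur-convex in $(\log\lambda_1,\dots,\log\lambda_n)$ for a \emph{general} Archimedean generator. This does work under the independence copula, where the minimum's hazard is a plain sum and Lemmas \ref{dec-2} and \ref{dec-4} give monotonicity and convexity of each summand in $\log\lambda_i$; but for general $\psi$ the Schur difference couples $(\log\psi)''(H)$ with the heterogeneous inner derivatives, and there is no evident reason that ``$\psi/\psi'$ decreasing and concave'' controls its sign --- notice that the paper states its hazard-rate comparisons only for the independence copula $\psi_1=\psi_2=e^{-x}$ and nowhere claims $X_{1:n}\leq_{hr}Y_{1:n}$ under the present hypotheses. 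Since dispersive order neither implies nor is implied by hazard rate order, you are routing the proof through an intermediate statement that may simply fail here. For (ii), the DFR claim for $Y_{1:n}$ amounts to convexity of $x\mapsto\log\psi\big(n\phi(\bar G(\lambda x))\big)$; DFR marginals together with log-convexity of $\psi$ do not obviously propagate to the minimum, and the deferred second-derivative computation is precisely where this could break. Until both computations are actually carried out (or the strategy is replaced by the direct quantile-density comparison), the proposal does not constitute a proof.
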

\begin{proof}
First, let us consider the function 
$$F(x)= \frac{1-e^{-x^{k}}}{\left(1-\bar{\alpha} e^{-x^{k}}\right)}.$$
Let us define another function 
$$\bar{h}(x) = \frac{x f(x)}{\bar{F}(x)}. $$
Then,
\begin{equation*}
    \bar{h}(e^x) = \frac{e^x f(e^x)}{\bar{F}(e^x)} = \frac{ke^{xk}}{\left(1-\bar{\alpha} e^{-e^{xk}}\right)}.
\end{equation*}
Upon taking derivative of $\bar{h}(e^x)$ with respect to $x$, from Lemma \ref{dec-3}, we get
\begin{equation}\label{disp_1}
    \frac{\partial \bar{h}(e^x)}{\partial x}=  \dfrac{k^2\cdot\left(\mathrm{e}^{\mathrm{e}^{kx}}-\bar{\alpha}\mathrm{e}^{kx}-\bar{\alpha}\right)\mathrm{e}^{\mathrm{e}^{kx}+kx}}{\left(\mathrm{e}^{\mathrm{e}^{kx}}-\bar{\alpha}\right)^2} \geq 0.
\end{equation}
From \eqref{disp_1}, we have $\bar{h}(e^x)$ to be increasing in $x$ and so $h(x)$ is increasing in $x$. Again, the second-order partial derivative of $\bar{h}(e^x)$ with respect to $x$ is given by
\begin{equation}\label{disp_4}
    \frac{\partial^2 \bar{h}(e^x)}{\partial x^2} = \dfrac{k^3\cdot\left(\mathrm{e}^{2\mathrm{e}^{kx}}+\left(\bar{\alpha}\mathrm{e}^{2kx}-3\bar{\alpha}\mathrm{e}^{kx}-2\bar{\alpha}\right)\mathrm{e}^{\mathrm{e}^{kx}}+\bar{\alpha}^2\mathrm{e}^{2kx}+3\bar{\alpha}^2\mathrm{e}^{kx}+\bar{\alpha}^2\right)\mathrm{e}^{\mathrm{e}^{kx}+kx}}{\left(\mathrm{e}^{\mathrm{e}^{kx}}-\bar{\alpha}\right)^3} \geq 0,
\end{equation}
and so we see from Lemma \ref{dec-4} that $\bar{h}(e^x)$ is convex for all $x$. Now,
\begin{equation*}
\ln{\bar{F}(x)} = \log\left(1-\frac{\left(1-e^{-x^{k}}\right)}{\left(1-\bar{\alpha}e^{-x^{k}}\right)}\right)
\end{equation*}
whose second partial derivative is
\begin{equation}\label{disp_2}
\frac{\partial^2 \ln{\bar{F}(x)}}{\partial x^2} = -\dfrac{kx^{k-2}\mathrm{e}^{x^k}\cdot\left(\left(k-1\right)\mathrm{e}^{x^k}-\bar{\alpha}kx^k-\bar{\alpha}k+\bar{\alpha}\right)}{\left(\mathrm{e}^{x^k}-\bar{\alpha}\right)^2} \geq 0,
\end{equation}
where $0 \leq k \leq 1$ and $0 \leq \alpha \leq 1$.
Hence, from \eqref{disp_2}, we see that $\bar{F}(x)$ is log-convex in $x \geq 0$. Under the considered set-up, $X_{1:n}$ and $Y_{1:n}$ have their distribution functions as $F_1(x)=1-\psi(\sum_{k=1}^{n} \phi(\bar{F}(\lambda_k x)))$ and $H_1(x)=1-n\psi(n(\bar{F}(\lambda x)))$ for $x\geq 0$, and their density functions as 
\begin{equation*}
f_1(x)= \psi^{'}\Big(\sum_{k=1}^{n} \phi(\bar{F}(\lambda_k x))\Big)\sum_{k=1}^{n} \frac{\lambda_k h(\lambda_k x) \bar{F}(\lambda_k x)}{\psi^{'}(\phi(\bar{F}(\lambda_k x)))}
\end{equation*}
and
\begin{equation}\label{disp_5}
h_1(x)= \psi^{'}(n\phi(\bar{F}(\lambda x))) \frac{n\lambda h(\lambda x) \bar{F}(\lambda x)}{\psi^{'}(\phi(\bar{F}(\lambda x)))},
\end{equation}
respectively. Now, let us denote $L_1(x;\lambda)= \bar{F}^{-1}(\psi((1/n)\sum_{k=1}^{n} \phi(\bar{F}(\lambda_k x))))$. Then, for $x\geq0$, $H_1^{-1}(F_1(x))=(1/\lambda)L_1(x;\lambda)$ and 
\begin{equation}\label{eqn6}
h_1(H_1^{-1}(F_1(x)))= \psi^{'}\Big(\sum_{k=1}^{n} \phi(\bar{F}(\lambda_k x))\Big)\frac{n\lambda h(L_1(x;\lambda))\phi((1/n)\sum_{k=1}^{n} \phi(\bar{F}(\lambda_k x)))}{\psi^{'}\Big((1/n)\sum_{k=1}^{n} \phi(\bar{F}(\lambda_k x))\Big)}.
\end{equation}
{Again, concavity property of $\psi/\psi^{'}$ yields}
\begin{equation*}\label{eqn7}
\frac{\psi(\frac{1}{n} \sum_{k=1}^{n} \phi(\bar{F}(\lambda_k x)))}{\psi^{'}(\frac{1}{n} \sum_{k=1}^{n} \phi(\bar{F}(\lambda_k x)))}
\geq \frac{1}{n}\sum_{k=1}^{n}\frac{\psi(\phi(\bar{F}(\lambda_k x)))}{\psi^{'}(\phi(\bar{F}(\lambda_k x)))}.
\end{equation*}
As $h(x)$ is increasing and $\psi/\psi^{'}$ is decreasing, $\ln{\bar{F}(e^x)}$ is concave and $\ln{\psi}$ is convex. Now, using the given assumption that $\lambda\leq (\lambda_1\cdots\lambda_n)^{\frac{1}{n}}$, and the fact that $\ln{\bar{F}(x)}\leq 0$ is decreasing, we have

\begin{eqnarray*}
\ln{\bar{F}(\lambda x)}\geq \ln{\bar{F}((\Pi_{k=1}^{n}\lambda_k x)^{\frac{1}{n}})}\geq \frac{1}{n}\sum_{k=1}^{n} \ln{\bar{F}\left(exp{\left(\frac{1}{n}\sum_{k=1}^{n}\ln{(\lambda_k x)}\right)}\right)}.
\end{eqnarray*}
Observe that $\ln{\bar{F}(e^x)}$ is concave, $\ln{\psi}$ is convex, and {$\bm{\lambda} \in I_+ \text{ or } D_+$.}
Hence, from Chebychev's inequality, it follows that
\begin{align}\label{disp_3}
    \ln{\bar{F}(\lambda x)}-\ln{\psi\left(\frac{1}{n}\sum_{k=1}^{n}\phi\big(\bar{F}(\lambda_k x)\big)\right)}&\geq
\frac{1}{n}\sum_{k=1}^{n} \ln{\bar{F}\left(exp{(\frac{1}{n}\sum_{k=1}^{n}\ln{(\lambda_k x)})}\right)}-\ln{\psi\left(\frac{1}{n}\sum_{k=1}^{n}\phi\big(\bar{F}(\lambda_k x)\big)\right)}\nonumber\\
&\geq
\frac{1}{n}\sum_{k=1}^{n} \ln{\bar{F}(\lambda_k x)}-\frac{1}{n}\sum_{k=1}^{n} \ln{\bar{F}(\lambda_k x)}\nonumber\\&\geq 0.
\end{align}
So, from \eqref{disp_3}, we have $L_1(x;\lambda)\geq \lambda x$. Moreover, we have $h(x)$ to be decreasing as $\bar{h}(x)$ is increasing and so, $h(x)$ is convex. Therefore, using $\lambda \leq (\prod_{k=1}^{n}\lambda_k)^{\frac{1}{n}},$ we have
\begin{align*}
    \lambda h(L_1(x;\lambda))&\leq \frac{1}{x}\lambda x h(\lambda x) \nonumber\\
    &\leq \frac{1}{x}\big(\Pi_{k=1}^{n}\lambda_k x\big)^{\frac{1}{n}}. h\big(\big(\Pi_{k=1}^{n}\lambda_k x\big)^{\frac{1}{n}}\big)\nonumber\\
    &=\frac{1}{x}exp\left(\frac{1}{n}\sum_{k=1}^{n}\ln{\lambda_k} x\right).h\left(exp\left(\frac{1}{n}\sum_{k=1}^{n}\ln{\lambda_k x}\right)\right).
\end{align*}

Once again, by using Chebychev's inequality, increasing property of $\bar{h},$ decreasing property of ${\psi}/{{\psi}^{'}}$ and $\bm{\lambda} \in I_+\text{ or } D_+$, we obtain
\begin{align}\label{disp_6}
   \frac{1}{n}\sum_{k=1}^{n} \frac{\lambda_k h(\lambda_k x)\bar{F}(\lambda_k x)}{\psi^{'}(\phi(\bar{F}(\lambda_k x)))}&=\frac{1}{n}\sum_{k=1}^{n} \frac{\lambda_k h(\lambda_k x)\psi(\phi(\bar{F}(\lambda_k x)))}{\psi^{'}(\phi(\bar{F}(\lambda_k x)))}\nonumber\\
   &\leq \frac{1}{n}\sum_{k=1}^{n} \lambda_k h(\lambda_k x) \frac{1}{n}\sum_{k=1}^{n} \frac{ \psi(\phi(\bar{F}(\lambda_k x)))}{\psi^{'}(\phi(\bar{F}(\lambda_k x)))}\nonumber\\
   &\leq \frac{1}{n}\sum_{k=1}^{n} \lambda_k h(\lambda_k x) .\frac{ \psi\left(\frac{1}{n}\sum_{k=1}^{n} \phi(\bar{F}(\lambda_k x))\right)}{\psi^{'}\left(\frac{1}{n}\sum_{k=1}^{n} \phi(\bar{F}(\lambda_k x))\right)} \nonumber\\
   &\leq \lambda L_1(x;\lambda).\frac{ \psi(\frac{1}{n}\sum_{k=1}^{n} \phi\left(\bar{F}(\lambda_k x))\right)}{\psi^{'}\left(\frac{1}{n}\sum_{k=1}^{n} \phi(\bar{F}(\lambda_k x))\right)}.
\end{align}
Now, using the inequality in \eqref{disp_6}, \eqref{disp_5} and \eqref{eqn6}, we obtain, for all $x\geq 0$,
\begin{align*}
 & h_1(H_1^{-1}(F_1(x)))-f_1(x)\nonumber\\
  &=\frac{1}{n}\sum_{k=1}^{n} \frac{\lambda_k h(\lambda_k x)\bar{F}(\lambda_k x)}{\psi^{'}(\phi(\bar{F}(\lambda_k x)))}- \frac{1}{n}\sum_{k=1}^{n} \lambda_k h(\lambda_k x) .\frac{ \psi\left(\frac{1}{n}\sum_{k=1}^{n} \phi(\bar{F}(\lambda_k x))\right)}{\psi^{'}\left(\frac{1}{n}\sum_{k=1}^{n} \phi(\bar{F}(\lambda_k x))\right),}\nonumber\\
  &\leq 0, 
\end{align*}
which yields $f_1(F_1^{-1}(x))\geq h_1(H_1^{-1}(x)),$ for all $x\in(0,1).$ This completes the proof of the theorem.
\end{proof}
\subsection{Ordering results based on ramdom number of variables}\label{s22}

In this subsection, we will consider two sets of dependent $N_{1}$ and $N_{2}$ variables $\{X_{1},\ldots,X_{N_{1}}\}$ and $\{Y_{1},\ldots,Y_{N_{2}}\},$ where $X_{i}$ follows $EW(\alpha_{i}, \lambda_i,l_{i})$ and $Y_{i}$ follows $EW(\beta_{i}, \lambda_i,k_{i})$ coupled with Archimedean copulas having different generators. Under this set-up, we develop different ordering results based on the usual stochastic order, where in the model parameters are connected by different majorization orders. Here, the number of observations $N_{1}$ and $N_{2}$ are stochastically comparable, independent of $X_{i}'$s and $Y_{i}'$s, respectively. 
In the following theorem, if  $N_{1}\leq_{st}N_{2}$ and the tilt parameter $\alpha\in(0,\infty),$ then under the same conditions as in Theorem \ref{th1}, we can extend the corresponding results as follows,
\begin{theorem}\label{th4.1}
Let $X_i\sim EW(\alpha, \lambda_i,k)$, $(i=1,\ldots,n)$ and $Y_i\sim EW(\alpha, \mu_i,k)$ $(i=1,\ldots,n)$ and their associated Archimedean survival copulas be with generators $\psi_1$ and $\psi_2$, respectively. Let $N_1$ be a non-negative integer-valued random variable independently of $X_{i}'$s and $N_2$ be a non-negative integer-valued random variable independently of $Y_{i}'$s. Further, let $N_{1}\leq_{st} N_{2},$ $\phi_2\circ\psi_1$ be super-additive and $\psi_{1}$ be log-concave. Then, for $0<\alpha\leq 1$, we have 
{$$(\log{\lambda_1},\ldots,\log{\lambda_n})\succeq_{w}(\log{\mu_1}, \ldots, \log{\mu_n})\Rightarrow Y_{1:{N_{2}}}\succeq_{st}X_{1:{N_{1}}}.$$}
\end{theorem}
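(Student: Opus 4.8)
The plan is to reduce Theorem \ref{th4.1} to Theorem \ref{th1} by conditioning on $N_1$ and $N_2$ and treating the two survival functions as mixtures. Since $N_1$ is independent of the $X_i$'s and $N_2$ is independent of the $Y_i$'s, for every $x\ge 0$ I would first write
\[
\bar{F}_{X_{1:N_1}}(x)=\sum_{n\ge 1}P(N_1=n)\,\bar{F}_{X_{1:n}}(x),\qquad
\bar{F}_{Y_{1:N_2}}(x)=\sum_{n\ge 1}P(N_2=n)\,\bar{F}_{Y_{1:n}}(x),
\]
where, from the survival-copula representation already used in the proof of Theorem \ref{th1},
\[
\bar{F}_{X_{1:n}}(x)=\psi_{1}\Big[\sum_{m=1}^{n}\phi_{1}\Big(\frac{\alpha e^{-(x\lambda_m)^k}}{1-\bar{\alpha}e^{-(x\lambda_m)^k}}\Big)\Big],\qquad
\bar{F}_{Y_{1:n}}(x)=\psi_{2}\Big[\sum_{m=1}^{n}\phi_{2}\Big(\frac{\alpha e^{-(x\mu_m)^k}}{1-\bar{\alpha}e^{-(x\mu_m)^k}}\Big)\Big].
\]
The strategy is then a two-stage comparison, with an intermediate object $Y_{1:N_1}$, the $Y$-series system evaluated at the index distribution of $N_1$.

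Stage one fixes the index distribution of the $X$-system and applies Theorem \ref{th1} termwise. Under the stated hypotheses ($\phi_2\circ\psi_1$ super-additive, $\psi_1$ log-concave, $0<\alpha\le 1$, and $(\log\lambda_1,\ldots,\log\lambda_n)\succeq_{w}(\log\mu_1,\ldots,\log\mu_n)$), Theorem \ref{th1} yields $\bar{F}_{Y_{1:n}}(x)\ge \bar{F}_{X_{1:n}}(x)$ for every fixed $n$ and every $x$. Multiplying by $P(N_1=n)\ge 0$ and summing over $n$ gives $\bar{F}_{Y_{1:N_1}}(x)\ge \bar{F}_{X_{1:N_1}}(x)$, that is, $Y_{1:N_1}\succeq_{st}X_{1:N_1}$. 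This stage is routine, since it is just a nonnegative mixture of the fixed-$n$ inequality already furnished by Theorem \ref{th1}.

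Stage two passes from $N_1$ to $N_2$ within the $Y$-system. Here I would use that $n\mapsto\bar{F}_{Y_{1:n}}(x)$ is monotone in $n$: because $\phi_2=\psi_2^{-1}$ is nonnegative on $[0,1]$ and $\psi_2$ is decreasing, each additional summand $\phi_2(\bar{F}_{Y_m}(x))\ge 0$ enlarges the argument of $\psi_2$, so the sequence $\bar{F}_{Y_{1:n}}(x)$ moves monotonically in $n$ for each fixed $x$. Combining this monotonicity with the order $N_1\le_{st}N_2$ through the standard characterization ``$N_1\le_{st}N_2$ if and only if $E[g(N_1)]\le E[g(N_2)]$ for all increasing $g$'' lets me compare the two mixtures $\sum_n P(N_1=n)\bar{F}_{Y_{1:n}}(x)$ and $\sum_n P(N_2=n)\bar{F}_{Y_{1:n}}(x)$, which delivers $Y_{1:N_2}\succeq_{st}Y_{1:N_1}$. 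Chaining the two stages, $Y_{1:N_2}\succeq_{st}Y_{1:N_1}\succeq_{st}X_{1:N_1}$, establishes the claim.

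The main obstacle is precisely the matching of directions in stage two: enlarging the (random) number of components of a series system drives its minimum in the opposite sense to the parameter/copula effect exploited in Theorem \ref{th1}, so the monotonicity of $n\mapsto\bar{F}_{Y_{1:n}}(x)$ must be read off correctly and aligned with the stochastic order between $N_1$ and $N_2$. I would settle this carefully from the explicit survival-copula representation above, where the sign of $\phi_2$ and the monotonicity of $\psi_2$ make the direction transparent, and then invoke the increasing-test-function characterization of $\le_{st}$ to close the argument.
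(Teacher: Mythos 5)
Your decomposition and your stage one are sound, and they mirror the paper's own argument: condition on the random index, apply Theorem \ref{th1} termwise, and then trade the mixing distribution of $N_1$ for that of $N_2$. (The paper performs the two swaps in the opposite order --- it first replaces $P(N_1=m)$ by $P(N_2=m)$ inside the $X$-mixture and only afterwards invokes Theorem \ref{th1} --- but that difference is cosmetic.)

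The genuine gap is in your stage two, and you have put your finger on it without resolving it. From the representation
\[
\bar F_{Y_{1:n}}(x)=\psi_2\Big[\sum_{m=1}^{n}\phi_2\big(\bar F_{Y_m}(x)\big)\Big],
\]
each extra summand $\phi_2\big(\bar F_{Y_m}(x)\big)\ge 0$ enlarges the argument of the \emph{decreasing} function $\psi_2$, so $n\mapsto\bar F_{Y_{1:n}}(x)$ is \emph{decreasing} in $n$ --- as it must be, since adjoining components to a series system can only shrink the minimum. Feeding a decreasing sequence into the characterization ``$N_1\le_{st}N_2$ iff $E[g(N_1)]\le E[g(N_2)]$ for all increasing $g$'' yields
\[
\bar F_{Y_{1:N_1}}(x)=\sum_{n}P(N_1=n)\,\bar F_{Y_{1:n}}(x)\;\ge\;\sum_{n}P(N_2=n)\,\bar F_{Y_{1:n}}(x)=\bar F_{Y_{1:N_2}}(x),
\]
that is, $Y_{1:N_1}\succeq_{st}Y_{1:N_2}$ --- the exact opposite of the link $Y_{1:N_2}\succeq_{st}Y_{1:N_1}$ that your chain needs. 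There is no way to ``read the direction off correctly'' and rescue the step: with $N_1\le_{st}N_2$ the index swap genuinely goes the wrong way for minima of series systems. You should be aware that the paper's own proof contains the same unjustified move, since its line $\sum_m P(X_{1:m}>x)P(N_1=m)\le\sum_m P(X_{1:m}>x)P(N_2=m)$ tacitly assumes $P(X_{1:m}>x)$ is increasing in $m$, which is false; the difficulty therefore lies with the hypothesis $N_1\le_{st}N_2$ in the statement (the natural repair for the smallest order statistic is $N_2\le_{st}N_1$) rather than with your overall strategy. But as written your proposal does not close: flagging the obstacle as ``to be settled carefully'' is not the same as overcoming it.
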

\begin{proof}
{ The survival functions of $X_{1:n}$ and $Y_{1:n}$ are given by }
\begin{equation*}
    \bar{F}_{X_{1:n}}(x) = \psi_{1}\Big[\sum_{m=1}^{n} \phi_{1}\Big(\frac{\alpha e^{-(x\lambda_m)^k}}{1-\bar{\alpha}e^{-(x\lambda_m)^k}}\Big)\Big]
\end{equation*}
{  and}
\begin{equation*}
    \bar{F}_{Y_{1:n}}(x) = \psi_{2}\Big[\sum_{m=1}^{n} \phi_{2}\Big(\frac{ \alpha e^{-(x\mu_m)^k}}{1-\bar{\alpha}e^{-(x\mu_m)^k}}\Big)\Big],
\end{equation*}
respectively.
As $N_{1}\leq_{st}N_{2},$ we get
\begin{align}
    \bar{F}_{X_{1:{N_{1}}}}(x) &=\sum_{m=1}^{n}P({X_{1:{N_{1}}}> x}|N_{1}=m)P(N_{1} =m)\nonumber\\
    &=\sum_{m=1}^{n}P({X_{1:{m}}> x})P(N_{1} =m)\nonumber\\
    &\leq\sum_{m=1}^{n}P({X_{1:{m}}> x})P(N_{2} =m)\nonumber\\
    &\leq\sum_{m=1}^{n}P({Y_{1:{m}}> x})P(N_{2} =m)\nonumber\\
    &\leq \bar{F}_{Y_{1:{N_{2}}}}(x)
\end{align}
by using Theorem \ref{th1}, as required.
\end{proof}
We similarly extend Theorem \ref{th2} by considering $N_{1}$ and $N_{2}$ to be random. The result provides us sufficient conditions for comparing two parallel systems, wherein the components lifetimes follow dependent extended Weibull family of distributions having different scale parameters.
\begin{theorem}\label{th4.2}
Let $X_i\sim EW(\alpha, \lambda_i,k)$ $(i=1,\ldots,n)$ and  $Y_i\sim EW(\alpha, \mu_i,k)$ $(i=1,\ldots,n),$ where $0<k\leq1$  and their associated Archimedean copulas be with generators $\psi_1$ and $\psi_2$, respectively. Let $N_1$ be a non-negative integer-valued random variable independently of $X_{i}'$s and $N_2$ be a non-negative integer-valued random variable independently of $Y_{i}'$s. Also, let $N_{1}\leq_{st}N_{2}$ and $\phi_2\circ\psi_1$ be super-additive. Then, for $0<\alpha\leq 1$, we have
$$\boldsymbol{\lambda}\succeq^{w}\boldsymbol{\mu}\Rightarrow X_{{N_{1}}:{N_1}}\succeq_{st}Y_{{N_{2}}:{N_2}}.$$
\end{theorem}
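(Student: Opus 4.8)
The plan is to reduce the random-sample-size comparison to the fixed-size result already established in Theorem \ref{th2}, by conditioning on the realised values of the counts $N_1$ and $N_2$. First I would record, exactly as in the proof of Theorem \ref{th2}, the distribution functions of the two parallel systems of a fixed size $m$,
\begin{equation*}
F_{X_{m:m}}(x)=\psi_{1}\Big[\sum_{j=1}^{m}\phi_{1}\Big(\frac{1-e^{-(x\lambda_j)^k}}{1-\bar{\alpha}e^{-(x\lambda_j)^k}}\Big)\Big],\qquad F_{Y_{m:m}}(x)=\psi_{2}\Big[\sum_{j=1}^{m}\phi_{2}\Big(\frac{1-e^{-(x\mu_j)^k}}{1-\bar{\alpha}e^{-(x\mu_j)^k}}\Big)\Big].
\end{equation*}
Under the stated hypotheses ($0<\alpha\le 1$, $0<k\le 1$, $\phi_2\circ\psi_1$ super-additive and $\boldsymbol{\lambda}\succeq^{w}\boldsymbol{\mu}$), Theorem \ref{th2} yields the termwise comparison $F_{X_{m:m}}(x)\le F_{Y_{m:m}}(x)$ for every fixed $m$ and every $x\ge 0$; this is the substantive input, and I would simply quote it.

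Next, using that $N_1$ is independent of the $X_i$'s and $N_2$ of the $Y_i$'s, I would expand both distribution functions as mixtures over the sample size and then chain two inequalities, in direct analogy with the proof of Theorem \ref{th4.1}:
\begin{align*}
F_{X_{N_1:N_1}}(x)&=\sum_{m}P(X_{m:m}\le x)\,P(N_1=m)\\
&\le\sum_{m}P(X_{m:m}\le x)\,P(N_2=m)\\
&\le\sum_{m}P(Y_{m:m}\le x)\,P(N_2=m)=F_{Y_{N_2:N_2}}(x).
\end{align*}
The last inequality is immediate from the termwise bound $P(X_{m:m}\le x)\le P(Y_{m:m}\le x)$ of Theorem \ref{th2} together with the non-negativity of the weights $P(N_2=m)$. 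The first inequality is where the hypothesis $N_1\le_{st}N_2$ enters, through the standard equivalence that $N_1\le_{st}N_2$ holds iff $E[g(N_1)]\le E[g(N_2)]$ for every increasing $g$ (with the reverse inequality for decreasing $g$). Once both steps are in place, the chain gives $F_{X_{N_1:N_1}}(x)\le F_{Y_{N_2:N_2}}(x)$ for all $x$, that is, $X_{N_1:N_1}\succeq_{st}Y_{N_2:N_2}$.

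The main obstacle is the careful justification of that first inequality, namely matching the direction of the stochastic order $N_1\le_{st}N_2$ against the monotonicity in $m$ of the mixing function $m\mapsto P(X_{m:m}\le x)$ (equivalently $m\mapsto P(Y_{m:m}\le x)$). Since a parallel system of larger size is stochastically larger, this mixing function is monotone in $m$, and the crux is to confirm that this monotonicity combines with $N_1\le_{st}N_2$ so as to orient the mixture comparison in the stated sense; everything else is the routine conditioning computation together with the already-proved termwise estimate of Theorem \ref{th2}. This mirrors exactly the structure used for the series case in Theorem \ref{th4.1}.
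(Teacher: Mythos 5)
Your proposal follows exactly the same route as the paper's own proof of Theorem \ref{th4.2}: condition on the counts, apply Theorem \ref{th2} termwise to get $P(X_{m:m}\le x)\le P(Y_{m:m}\le x)$ for each fixed $m$, and then invoke $N_1\le_{st}N_2$ to compare the two mixtures. The second inequality in your chain is fine. The problem is precisely the step you yourself single out as ``the crux'' and leave unconfirmed: the first inequality
$$\sum_{m}P(X_{m:m}\le x)\,P(N_1=m)\;\le\;\sum_{m}P(X_{m:m}\le x)\,P(N_2=m)$$
requires the mixing function $g(m)=P(X_{m:m}\le x)$ to be \emph{increasing} in $m$, since $N_1\le_{st}N_2$ is equivalent to $E[g(N_1)]\le E[g(N_2)]$ for all increasing $g$. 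But for a parallel system $g(m)=P(X_1\le x,\ldots,X_m\le x)$ is \emph{decreasing} in $m$ (intersecting with one more event can only lower the probability; equivalently $F_{X_{m:m}}(x)=\psi_1\big[\sum_{j\le m}\phi_1(\cdot)\big]$ with $\psi_1$ decreasing and the argument nondecreasing in $m$). Hence $N_1\le_{st}N_2$ delivers the \emph{reverse} inequality $E[g(N_1)]\ge E[g(N_2)]$, and the chain does not close.

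That this is not a repairable technicality can be seen by taking $N_1\equiv 1$, $N_2\equiv 2$, all components i.i.d.\ with a common $EW$ law and the independence copula, so that every hypothesis of the theorem holds trivially: then $X_{N_1:N_1}=X_1$ while $Y_{N_2:N_2}=\max(Y_1,Y_2)\ge_{st}X_1$, the opposite of the asserted conclusion. The argument would go through with the count hypothesis reversed to $N_1\ge_{st}N_2$, since then both effects --- more components and (by Theorem \ref{th2}) stochastically larger fixed-size maxima --- push $X_{N_1:N_1}$ upward. For what it is worth, the paper's own proof writes this same first inequality with ``$\le$'' and no justification, so you have faithfully reproduced the published argument including its defect; but since you explicitly flagged the direction-check as the outstanding obligation, you should record that the check fails as stated rather than confirming it.
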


\begin{proof}
{ The distribution functions of $X_{1:n}$ and $Y_{1:n}$ are given by}
\begin{equation}
    F_{X_{n:n}}(x) = \psi_{1}\Big[\sum_{m=1}^{n} \phi_{1}\Big(\frac{1- e^{-(x\lambda_m)^k}}{1-\bar{\alpha}e^{-(x\lambda_m)^k}}\Big)\Big]
\end{equation}
{  and}
\begin{equation}
    F_{Y_{n:n}}(x) = \psi_{2}\Big[\sum_{m=1}^{n} \phi_{2}\Big(\frac{ 1- e^{-(x\mu_m)^k}}{1-\bar{\alpha}e^{-(x\mu_m)^k}}\Big)\Big],
\end{equation}
respectively.
As $N_{1}\leq_{st}N_{2},$ we get
\begin{align}
    F_{X_{{N_{1}}:{N_{1}}}}(x) &=\sum_{m=1}^{n}P({X_{{N_{1}}:{N_{1}}}< x}|N_{1}=m)P(N_{1} =m)\nonumber\\
    &=\sum_{m=1}^{n}P({X_{{m}:{m}}< x})P(N_{1} =m)\nonumber\\
    &\leq\sum_{m=1}^{n}P({X_{{m}:{m}}<x})P(N_{2} =m)\nonumber\\
    &\leq\sum_{m=1}^{n}P({Y_{{m}:{m}}< x})P(N_{2} =m)\nonumber\\
    &\leq F_{Y_{{N_{2}}:{N_{2}}}}(x)
\end{align}
by using Theorem \ref{th2}, as required.
\end{proof}
The following theorem is an extension of Theorem \ref{th4} for the case when $N_{1}$ and $N_{2}$ are random when the shape parameters $\bm{l}$ and $\bm{k}$ are connected in majorization order and the tilt parameters $\bm{\alpha}$ and $\bm{\beta}$ are equal, scalar-valued and lie between $0$ and $1.$
\begin{theorem}\label{th4.4}
Let $X_i\sim EW(\alpha, \lambda,k_i)$ $(i=1,\ldots,n)$ and $Y_i\sim EW(\alpha, \lambda,l_i)$ $(i=1,\ldots,n)$, and the associated Archimedean copulas be with generators $\psi_1$ and $\psi_2$, respectively. Let $N_1$ be a non-negative integer-valued random variable independently of $X_{i}'$s and $N_2$ be a non-negative integer-valued random variable independently of $Y_{i}'$s. Further, let $N_{1}\leq_{st}N_{2},$ $\phi_2\circ\psi_1$ be super-additive and $\alpha t\phi_1^{''}(t)+\phi_1^{'}(t)\geq 0$. Then, for $0<\alpha\leq 1$, we have
$$ \boldsymbol{l}\succeq^{m}\boldsymbol{k} \Rightarrow X_{{N_{1}}:{N_{1}}}\succeq_{st}Y_{{N_{2}}:{N_{2}}}.$$

\end{theorem}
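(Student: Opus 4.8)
The plan is to reduce Theorem \ref{th4.4} to its fixed-sample-size counterpart, Theorem \ref{th4}, by conditioning on the random sizes $N_1$ and $N_2$, exactly as was done in passing from Theorem \ref{th2} to Theorem \ref{th4.2}. First I would record the two fixed-size distribution functions, which by the derivation in Theorem \ref{th4} take the form
\begin{equation*}
F_{X_{m:m}}(x)=\psi_{1}\Big[\sum_{j=1}^{m}\phi_{1}\Big(\frac{1-e^{-(x\lambda)^{k_j}}}{1-\bar{\alpha}e^{-(x\lambda)^{k_j}}}\Big)\Big],\qquad
F_{Y_{m:m}}(x)=\psi_{2}\Big[\sum_{j=1}^{m}\phi_{2}\Big(\frac{1-e^{-(x\lambda)^{l_j}}}{1-\bar{\alpha}e^{-(x\lambda)^{l_j}}}\Big)\Big],
\end{equation*}
for each $m$ in the common support of $N_1$ and $N_2$. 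Because $N_1$ is independent of the $X_i$'s and $N_2$ of the $Y_i$'s, the laws of the randomized maxima are the mixtures $F_{X_{N_1:N_1}}(x)=\sum_{m}F_{X_{m:m}}(x)\,P(N_1=m)$ and $F_{Y_{N_2:N_2}}(x)=\sum_{m}F_{Y_{m:m}}(x)\,P(N_2=m)$.

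The argument then proceeds as a chain of inequalities. Since the hypotheses of Theorem \ref{th4} --- super-additivity of $\phi_2\circ\psi_1$, the condition $\alpha t\phi_1''(t)+\phi_1'(t)\geq 0$, $0<\alpha\le 1$, and $\boldsymbol{l}\succeq^{m}\boldsymbol{k}$ --- are in force, Theorem \ref{th4} applies at each fixed size $m$ and yields $X_{m:m}\succeq_{st}Y_{m:m}$, that is, $F_{X_{m:m}}(x)\le F_{Y_{m:m}}(x)$ for every $x\ge 0$. Replacing $F_{X_{m:m}}$ by $F_{Y_{m:m}}$ term by term in the mixture gives one inequality, while the relation $N_1\le_{st}N_2$ lets me exchange the weights $P(N_1=m)$ for $P(N_2=m)$, producing the bound $F_{X_{N_1:N_1}}(x)\le F_{Y_{N_2:N_2}}(x)$ and hence $X_{N_1:N_1}\succeq_{st}Y_{N_2:N_2}$. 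I would display this as the same telescoping chain used in the proof of Theorem \ref{th4.2}:
\begin{align*}
F_{X_{N_1:N_1}}(x)&=\sum_{m}P(X_{m:m}<x)\,P(N_1=m)\\
&\le\sum_{m}P(X_{m:m}<x)\,P(N_2=m)\\
&\le\sum_{m}P(Y_{m:m}<x)\,P(N_2=m)=F_{Y_{N_2:N_2}}(x).
\end{align*}

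The point I would scrutinize most carefully is the step invoking $N_1\le_{st}N_2$: it compares two mixtures with the same summand sequence $m\mapsto P(X_{m:m}<x)=F_{X_{m:m}}(x)$ but different weighting laws, so its justification hinges on the monotonicity in $m$ of this sequence together with the defining characterization of the usual stochastic order. A secondary issue worth verifying is that the hypotheses of Theorem \ref{th4} are genuinely available at each conditioning level $m$, i.e.\ that the majorization requirement needed to conclude $X_{m:m}\succeq_{st}Y_{m:m}$ holds for the components actually present when $N_1=m$; this is precisely the reduction that legitimizes the term-by-term application of Theorem \ref{th4}. Once these two observations are secured, the chain above closes and the theorem follows.
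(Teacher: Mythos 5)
Your proposal follows exactly the route the paper intends for this theorem: the paper gives no separate proof of Theorem \ref{th4.4}, saying only that it can be proved by the arguments of Theorems \ref{th4} and \ref{th4.1}, and your conditioning-plus-mixture chain is precisely the argument the paper writes out for Theorem \ref{th4.2}. So in approach the two are identical.

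However, the step you yourself single out for scrutiny --- exchanging the weights $P(N_1=m)$ for $P(N_2=m)$ --- is a genuine gap, and it does not close in the direction you (and the paper) need. The characterization of $N_1\le_{st}N_2$ gives $\sum_m a_m P(N_1=m)\le\sum_m a_m P(N_2=m)$ only when the sequence $m\mapsto a_m$ is \emph{increasing}. Here $a_m=F_{X_{m:m}}(x)=\psi_1\bigl[\sum_{j=1}^{m}\phi_1(\cdot)\bigr]$ is \emph{decreasing} in $m$: adding a component adds a nonnegative term inside the decreasing generator $\psi_1$, which is just the elementary fact that the maximum of more variables is stochastically larger. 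Hence $N_1\le_{st}N_2$ actually yields $\sum_m F_{X_{m:m}}(x)P(N_1=m)\ge\sum_m F_{X_{m:m}}(x)P(N_2=m)$, the reverse of the first inequality in your chain; the termwise comparison from Theorem \ref{th4} and the weight exchange pull in opposite directions and cannot be concatenated. The difficulty is not cosmetic: take $\psi_1=\psi_2$ (so $\phi_2\circ\psi_1$ is the identity, hence super-additive), $\boldsymbol{l}=\boldsymbol{k}$, a generator satisfying $\alpha t\phi_1''(t)+\phi_1'(t)\ge0$ as in Example \ref{ex2}, $N_1\equiv1$ and $N_2\equiv2$; then $X_{N_1:N_1}=X_1$ while $Y_{N_2:N_2}=\max(Y_1,Y_2)$ is stochastically strictly larger, so the asserted conclusion fails. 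The mixture argument would close if the hypothesis were $N_1\ge_{st}N_2$, placing the larger random size on the side whose maxima are already stochastically larger. As stated, the gap you flagged is real, and it is shared by the paper's own written proofs of Theorems \ref{th4.1} and \ref{th4.2}.
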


The above result states that the survival function of $X_{{N_{1}}:{N_{1}}}$ is stochastically less than that of $Y_{{N_{2}}:{N_{2}}},$ whereas the following theorem presents conditions under which the survival function of $X_{{1}:{N_{1}}}$ is stochastically less than that of $Y_{1:{N_{2}}}.$ These can be proved using the same arguments as in the proofs of Theorems \ref{th4} and \ref{th4.1}.
\begin{theorem}\label{th4.5}
Let $X_i\sim EW(\alpha, \lambda,k_i)$ $(i=1,\ldots,n)$ and $Y_i\sim EW(\alpha, \mu,l_i)$ $(i=1,\ldots,n)$ and the associated Archimedean survival copulas be with generators $\psi_1$ and $\psi_2$, respectively.  Let $N_1$ be a non-negative integer-valued random variable independently of $X_{i}'$s and $N_2$ be a non-negative integer-valued random variable independently of $Y_{i}'$s. Further, let $N_{1}\leq_{st}N_{2},$ $\phi_2\circ\psi_1$ be super-additive and $t\phi_1^{''}(t)+\phi_1^{'}(t)\geq 0$. Then, for $0<\alpha\leq 1$, we have
$$ \boldsymbol{l}\succeq^{m}\boldsymbol{k} \Rightarrow Y_{1:{N_{2}}}\succeq_{st}X_{1:{N_{1}}}.$$
\end{theorem}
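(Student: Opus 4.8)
The plan is to split Theorem~\ref{th4.5} into a fixed-sample-size comparison and a randomization step, exactly as suggested by its proximity to Theorems~\ref{th4} and~\ref{th4.1}. First I would establish the ordering $Y_{1:m}\succeq_{st}X_{1:m}$ for every fixed $m$ (this is precisely Theorem~\ref{th5} applied to the first $m$ components, whose proof runs along the lines of Theorem~\ref{th4}), and then I would mix these fixed-$m$ inequalities against the laws of $N_1$ and $N_2$, using $N_1\leq_{st}N_2$ to pass from one mixing measure to the other, following the conditioning argument of Theorem~\ref{th4.1}.

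For the fixed-$m$ step I would write the survival function of the minimum under the Archimedean survival copula as
\[
\bar F_{X_{1:m}}(x)=\psi_1\Big[\sum_{i=1}^{m}\phi_1\Big(\tfrac{\alpha e^{-(x\lambda)^{k_i}}}{1-\bar{\alpha} e^{-(x\lambda)^{k_i}}}\Big)\Big],
\]
and analogously for $\bar F_{Y_{1:m}}$ with $\psi_2$, $\mu$ and $\boldsymbol{l}$. Super-additivity of $\phi_2\circ\psi_1$ together with Lemma~\ref{Pre-lem2.1f} replaces $\psi_2$ by $\psi_1$, so that it suffices to prove that $\delta_4(\boldsymbol{k})=1-\psi_1\big[\sum_i\phi_1(\cdots)\big]$ is Schur-concave in $\boldsymbol{k}$. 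Differentiating in $k_i$ and isolating the auxiliary factor $I_4(k_i)$, the hypothesis $t\phi_1''(t)+\phi_1'(t)\geq0$ forces $I_4$ to be increasing, whereupon Lemma~\ref{schur-critetia} yields Schur-concavity and hence $\boldsymbol{l}\succeq^{m}\boldsymbol{k}\Rightarrow Y_{1:m}\succeq_{st}X_{1:m}$ for each $m$.

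For the randomization step I would reproduce the conditioning chain of Theorem~\ref{th4.1}. Conditioning on $N_1=m$ gives $\bar F_{X_{1:N_1}}(x)=\sum_m P(X_{1:m}>x)\,P(N_1=m)$, and because the Archimedean representation makes $m\mapsto P(X_{1:m}>x)$ monotone in $m$ (each extra summand $\phi_1(\cdot)\geq0$ shifts the argument of the decreasing generator $\psi_1$), the assumption $N_1\leq_{st}N_2$ permits replacing $P(N_1=\cdot)$ by $P(N_2=\cdot)$. Applying the fixed-$m$ inequality then replaces $P(X_{1:m}>x)$ by $P(Y_{1:m}>x)$ termwise, giving $\bar F_{X_{1:N_1}}(x)\leq\bar F_{Y_{1:N_2}}(x)$ for all $x$, i.e.\ $Y_{1:N_2}\succeq_{st}X_{1:N_1}$.

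The genuinely delicate point is the fixed-$m$ Schur-concavity, not the mixing. After the substitution $(\lambda x)^{k_i}=t$, the monotonicity of $I_4$ reduces to the elementary but essential inequality $\tfrac{(t-1)e^{t}+\bar{\alpha} t+\bar{\alpha}}{t}\,e^{-t}\leq 1$ for $t\geq0$ and $0\leq\bar{\alpha}\leq1$, which is exactly what converts the generator condition $t\phi_1''(t)+\phi_1'(t)\geq0$ into $\partial I_4/\partial k_i\geq0$. Once this is in hand the mixing step is routine, so the whole difficulty of Theorem~\ref{th4.5} is inherited from the fixed-size result of Theorem~\ref{th5}.
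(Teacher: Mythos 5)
Your decomposition --- first the fixed-sample-size comparison $Y_{1:m}\succeq_{st}X_{1:m}$ via the Schur-concavity argument of Theorem \ref{th5}, then a mixing step over the laws of $N_1$ and $N_2$ as in Theorem \ref{th4.1} --- is exactly the route the paper indicates: it gives no separate proof of Theorem \ref{th4.5}, only the remark that it follows by the arguments of Theorems \ref{th4} (in fact \ref{th5}) and \ref{th4.1}. Your account of the fixed-size step, including the reduction to the monotonicity of $I_4(k_i)$ and the elementary inequality $\frac{(t-1)e^{t}+\bar{\alpha}t+\bar{\alpha}}{t}e^{-t}\leq 1$, reproduces the paper's computation faithfully, and that part is sound.

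The mixing step, however, does not close as you (and the paper) have written it. Your own parenthetical identifies the direction of monotonicity: $m\mapsto P(X_{1:m}>x)=\psi_1\bigl[\sum_{i=1}^{m}\phi_1(\cdot)\bigr]$ is \emph{decreasing} in $m$, since each additional summand $\phi_1(\cdot)\geq 0$ pushes the argument of the decreasing generator upward. But for a decreasing sequence $g(m)$, the hypothesis $N_1\leq_{st}N_2$ gives $\sum_m g(m)P(N_1=m)\geq\sum_m g(m)P(N_2=m)$, which is the \emph{reverse} of the inequality needed in the chain $\bar F_{X_{1:N_1}}(x)\leq\sum_m P(X_{1:m}>x)P(N_2=m)\leq\bar F_{Y_{1:N_2}}(x)$. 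So the replacement of $P(N_1=\cdot)$ by $P(N_2=\cdot)$ goes the wrong way; the two one-sided bounds you obtain ($\bar F_{X_{1:N_1}}(x)\geq\sum_m P(X_{1:m}>x)P(N_2=m)$ and $\sum_m P(X_{1:m}>x)P(N_2=m)\leq\bar F_{Y_{1:N_2}}(x)$) do not combine. The argument would close under the opposite hypothesis $N_1\geq_{st}N_2$. This defect is inherited verbatim from the paper's proof of Theorem \ref{th4.1}, but since you spell the reasoning out, it surfaces as a genuine gap in your write-up as well. A secondary point of the same kind: invoking Theorem \ref{th5} ``for the first $m$ components'' requires the length-$m$ sub-vectors of $\boldsymbol{l}$ and $\boldsymbol{k}$ to be comparable in the majorization order, which is not implied by comparability of the full $n$-vectors; as stated, the fixed-size inequality is only justified for $m=n$.
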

We finally present similar comaparison reults when the tilt parameter vectors are connected in majorization order and the parallel and series systems have extended Weibull components.  
\begin{theorem}\label{th4.6}
Let $X_i\sim EW(\alpha_i,\lambda,k)$ $(i=1,\ldots,n)$ and $Y_i\sim EW(\beta_i, \lambda, k)$ $(i=1,\ldots,n)$ and the associated Archimedean copulas be with generators $\psi_1$ and $\psi_2$, respectively. Let $N_1$ be a non-negative integer-valued random variable independently of $X_{i}'$s and $N_2$ be a non-negative integer-valued random variable independently of $Y_{i}'$s. Further, let $N_{1}\leq_{st}N_{2},$ $\phi_2\circ\psi_1$ be super-additive and 
\begin{equation*}
t\phi_1^{''}(t)+2\phi_1^{'}(t)\geq 0.
\end{equation*} Then, we have
$$ \boldsymbol{\alpha}\succeq_{w}\boldsymbol{\beta} \Rightarrow X_{{N_{1}}:{N_{1}}}\succeq_{st}Y_{{N_{2}}:{N_{2}}}.$$
\end{theorem}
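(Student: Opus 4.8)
The plan is to reduce this random-sample-size comparison to the fixed-size result of Theorem \ref{th6} through a conditioning (mixture) argument, in exactly the manner used to deduce Theorem \ref{th4.2} from Theorem \ref{th2}. The first step is to observe that, for every fixed $m$, the present hypotheses are precisely those of Theorem \ref{th6}: $\phi_2\circ\psi_1$ is super-additive, $t\phi_1''(t)+2\phi_1'(t)\ge 0$, and $\boldsymbol{\alpha}\succeq_{w}\boldsymbol{\beta}$. Consequently Theorem \ref{th6} applies verbatim to give $X_{m:m}\succeq_{st}Y_{m:m}$, that is,
\begin{equation*}
F_{X_{m:m}}(x)=\psi_{1}\Big[\sum_{i=1}^{m}\phi_{1}\Big(\frac{1-e^{-(x\lambda)^{k}}}{1-\bar{\alpha_i}e^{-(x\lambda)^{k}}}\Big)\Big]\le \psi_{2}\Big[\sum_{i=1}^{m}\phi_{2}\Big(\frac{1-e^{-(x\lambda)^{k}}}{1-\bar{\beta_i}e^{-(x\lambda)^{k}}}\Big)\Big]=F_{Y_{m:m}}(x)
\end{equation*}
for all $x\ge 0$ and every admissible sample size $m$.

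Next, since $N_1$ is independent of the $X_i$'s and $N_2$ is independent of the $Y_i$'s, I would express the two distribution functions as mixtures over the sample size and run the chain
\begin{align*}
F_{X_{N_{1}:N_{1}}}(x)&=\sum_{m}P\big(X_{m:m}<x\big)P(N_{1}=m)\\
&\le\sum_{m}P\big(X_{m:m}<x\big)P(N_{2}=m)\\
&\le\sum_{m}P\big(Y_{m:m}<x\big)P(N_{2}=m)=F_{Y_{N_{2}:N_{2}}}(x),
\end{align*}
where the first inequality transfers the hypothesis $N_1\le_{st}N_2$ and the second is the per-$m$ comparison from Theorem \ref{th6} displayed above. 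Reading off the two extreme members gives $F_{X_{N_1:N_1}}(x)\le F_{Y_{N_2:N_2}}(x)$ for every $x$, which is exactly $X_{N_1:N_1}\succeq_{st}Y_{N_2:N_2}$.

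The step requiring the most care, and the one I expect to be the main obstacle, is the first inequality, where $N_1\le_{st}N_2$ is invoked. To pass from the weights $P(N_1=m)$ to $P(N_2=m)$ one must first establish that the map $m\mapsto F_{X_{m:m}}(x)$ is monotone in $m$ for each fixed $x$; this is the only place where the Archimedean structure, rather than Theorem \ref{th6}, enters. Monotonicity follows because each inner summand $\phi_1(\cdot)$ is nonnegative and the generator $\psi_1$ is decreasing, so enlarging the sample size alters $F_{X_{m:m}}(x)$ monotonically, and the usual stochastic order $N_1\le_{st}N_2$ then transfers through this monotone map. In writing the full proof I would isolate this monotonicity as a short lemma and check its orientation carefully against the definition $Y\le_{st}Z\Leftrightarrow \bar F_Y\le\bar F_Z$, so that the direction of the displayed inequality is pinned down correctly; the remaining relations are then immediate from Theorem \ref{th6} and the definition of the mixture.
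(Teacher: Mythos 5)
Your overall strategy --- conditioning on the sample size and reducing to the fixed-$m$ comparison of Theorem \ref{th6} --- is exactly the route the paper takes for this family of results (it writes the argument out for Theorems \ref{th4.1} and \ref{th4.2} and leaves Theorem \ref{th4.6} to the same reasoning). However, the step you yourself flag as the main obstacle is a genuine gap, and the orientation check you defer does not come out in your favour. For the first inequality in your chain,
$$\sum_{m}F_{X_{m:m}}(x)\,P(N_{1}=m)\ \le\ \sum_{m}F_{X_{m:m}}(x)\,P(N_{2}=m),$$
to follow from $N_{1}\le_{st}N_{2}$ you need $g(m)=F_{X_{m:m}}(x)$ to be \emph{increasing} in $m$. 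But $F_{X_{m:m}}(x)=\psi_{1}\bigl[\sum_{i=1}^{m}\phi_{1}(F_{i}(x))\bigr]$ with $\phi_{1}\ge 0$ and $\psi_{1}$ decreasing --- precisely the facts you invoke --- so adding a component enlarges the argument of $\psi_{1}$ and therefore \emph{decreases} $F_{X_{m:m}}(x)$; this is also the intuitive statement that a parallel system with more components is stochastically larger. For a decreasing $g$, the hypothesis $N_{1}\le_{st}N_{2}$ yields $E[g(N_{1})]\ge E[g(N_{2})]$, i.e.\ the reverse of the displayed inequality, and the chain breaks.

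To be fair, the paper's own written proofs of the analogous Theorems \ref{th4.1} and \ref{th4.2} assert the same wrong-way inequality, so you have faithfully reproduced the paper's argument, defect included. The argument does close if the hypothesis is $N_{1}\ge_{st}N_{2}$: then $E[g(N_{1})]\le E[g(N_{2})]$ for decreasing $g$, the first inequality holds, and the second follows from Theorem \ref{th6} applied for each fixed $m$ (where one should also note that $\boldsymbol{\alpha}\succeq_{w}\boldsymbol{\beta}$ for the full $n$-vectors does not automatically give $(\alpha_{1},\ldots,\alpha_{m})\succeq_{w}(\beta_{1},\ldots,\beta_{m})$ for every $m<n$, a second point your per-$m$ appeal to Theorem \ref{th6} silently assumes). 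As written, though, the proposal does not establish the stated implication.
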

\begin{theorem}\label{th4.7}
Let $X_i\sim EW(\alpha_i,\lambda,k)$ $(i=1,\ldots,n)$ and $Y_i\sim EW(\beta_i, \lambda, k)$ $(i=1,\ldots,n)$ and the associated Archimedean survival copulas be with generators $\psi_1$ and $\psi_2$, respectively. Let $N_1$ be a non-negative integer-valued random variable independently of $X_{i}'$s and $N_2$ be a non-negative integer-valued random variable independently of $Y_{i}'$s. Further, let $N_{1}\leq_{st}N_{2}$ and $\phi_2\circ\psi_1$ be super-additive. Then, we have
$$ \boldsymbol{\alpha}\succeq^{w}\boldsymbol{\beta} \Rightarrow X_{1:{N_{1}}}\succeq_{st}Y_{1:{N_{2}}}.$$
\end{theorem}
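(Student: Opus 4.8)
The plan is to reduce this random-size comparison to the fixed-size result of Theorem \ref{th7} by conditioning on the realized values of $N_1$ and $N_2$, and then to transport the resulting pointwise inequality across the two random indices by means of the order $N_1\leq_{st}N_2$, exactly as in the conditioning scheme of Theorems \ref{th4.1} and \ref{th4.2}. Since $X_{1:N_1}\succeq_{st}Y_{1:N_2}$ is equivalent to $\bar{F}_{X_{1:N_1}}(x)\geq\bar{F}_{Y_{1:N_2}}(x)$ for every $x\geq 0$, I would first use the independence of $N_1$ from the $X_i$'s and of $N_2$ from the $Y_i$'s to write
\[
\bar{F}_{X_{1:N_1}}(x)=\sum_{m=1}^{n}\bar{F}_{X_{1:m}}(x)\,P(N_1=m),\qquad \bar{F}_{Y_{1:N_2}}(x)=\sum_{m=1}^{n}\bar{F}_{Y_{1:m}}(x)\,P(N_2=m),
\]
where, under the Archimedean survival copulas, $\bar{F}_{X_{1:m}}(x)=\psi_1\Big[\sum_{j=1}^{m}\phi_1\Big(\frac{\alpha_j e^{-(x\lambda)^k}}{1-\bar{\alpha_j}e^{-(x\lambda)^k}}\Big)\Big]$ and $\bar{F}_{Y_{1:m}}(x)$ is the analogous expression built from $\psi_2$, $\phi_2$ and $\beta_j$.

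Next I would invoke Theorem \ref{th7} at each fixed index $m$. Under the super-additivity of $\phi_2\circ\psi_1$ together with $\boldsymbol{\alpha}\succeq^{w}\boldsymbol{\beta}$, that theorem yields $X_{1:m}\succeq_{st}Y_{1:m}$, i.e.\ the pointwise bound $\bar{F}_{X_{1:m}}(x)\geq\bar{F}_{Y_{1:m}}(x)$ for every $m$ and every $x\geq 0$. Substituting this into the representation of $\bar{F}_{X_{1:N_1}}(x)$ gives
\[
\bar{F}_{X_{1:N_1}}(x)\geq\sum_{m=1}^{n}\bar{F}_{Y_{1:m}}(x)\,P(N_1=m).
\]

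The second, and decisive, step is to pass from $N_1$ to $N_2$ in the right-hand side. For this I would use that $g(m):=\bar{F}_{Y_{1:m}}(x)$ is decreasing in $m$: enlarging the series system from $m$ to $m+1$ components adds the non-negative term $\phi_2(\cdot)\geq 0$ inside the decreasing generator $\psi_2$, so the survival function can only shrink. Since $g$ is decreasing and $N_1\leq_{st}N_2$, the characterisation of the usual stochastic order (applied to the increasing function $-g$) gives $E[g(N_1)]\geq E[g(N_2)]$, that is $\sum_{m}\bar{F}_{Y_{1:m}}(x)P(N_1=m)\geq\sum_{m}\bar{F}_{Y_{1:m}}(x)P(N_2=m)=\bar{F}_{Y_{1:N_2}}(x)$. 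Chaining the last two displays yields $\bar{F}_{X_{1:N_1}}(x)\geq\bar{F}_{Y_{1:N_2}}(x)$ for all $x\geq 0$, which is the assertion.

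I expect the main obstacle to be bookkeeping the direction of the stochastic-order step rather than any nontrivial computation. The monotonicity of $g$ and the order $N_1\leq_{st}N_2$ must be compatible with the pointwise inequality $\bar{F}_{X_{1:m}}\geq\bar{F}_{Y_{1:m}}$ so that both steps push $\bar{F}_{X_{1:N_1}}$ above $\bar{F}_{Y_{1:N_2}}$; the crucial feature that makes them compatible here is that $X$ carries the stochastically \emph{smaller} count $N_1$ while simultaneously dominating $Y$ at each fixed size. A secondary point worth recording is that the argument needs $\psi_2$ to be a consistent (marginalisable) Archimedean generator, so that $\bar{F}_{Y_{1:m}}$ is expressed through the same generator for each $m\leq n$; this is guaranteed for $d$-monotone generators by the standard margin property of Archimedean copulas.
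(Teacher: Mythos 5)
Your proposal is correct and follows essentially the same route as the paper, which proves this result by the conditioning scheme of Theorems \ref{th4.1} and \ref{th4.2}: decompose $\bar{F}_{X_{1:N_1}}$ and $\bar{F}_{Y_{1:N_2}}$ over the events $\{N_i=m\}$, invoke the fixed-size comparison (here Theorem \ref{th7}) at each $m$, and transfer $N_1$ to $N_2$ using $N_1\leq_{st}N_2$ together with the fact that $m\mapsto\bar{F}_{Y_{1:m}}(x)$ is decreasing. Your explicit justification of that last monotonicity step is in fact more careful than the paper's displayed chain; the only point you share with the paper is the tacit assumption that $\boldsymbol{\alpha}\succeq^{w}\boldsymbol{\beta}$ for the full $n$-vectors licenses the application of the fixed-size theorem to every truncated pair $(\alpha_1,\ldots,\alpha_m)$, $(\beta_1,\ldots,\beta_m)$.
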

\section{Concluding remarks}\label{con}
The purpose of this article is to establish ordering results between two given sets of dependent variables $\{X_{1},\ldots,X_{n}\}$ and  $\{Y_{1},\ldots,Y_{n}\},$ wherein the underlying variables are from extended Weibull family of distributions. The random variables are then associated with Archimedean (survival) copulas with different generators. Several conditions are presented for the stochastic comparisons of extremes in the sense of usual stochastic, star, Lorenz, hazard rate, reversed hazard rate and dispersive orders. Further, we have derived inequalities between the extreme order statistics when the number of variables in the two sets are random satisfying the usual stochastic order. We have also presented several examples and counterexamples to illustrate all the established results and their implications.
\\\\
{\bf Disclosure statement}\\\\
No potential conflict of interest was reported by the authors.
\\\\
{\bf Funding}\\\\
SD thanks the Theoretical Statistics and Mathematics Unit, Indian Statistical Institute, Bangalore for the financial support and the hospitality during her stay.


\end{document}